\def\draft{0}  
    \def\ITCS{0} 
    \def\EC{0} 
    \let\original@footnotemark\footnotemark
    \newcommand{\align@footnotemark}{%
      \ifmeasuring@
        \chardef\@tempfn=\value{footnote}%
        \original@footnotemark
        \setcounter{footnote}{\@tempfn}%
      \else
        \iffirstchoice@
          \original@footnotemark
        \fi
      \fi}
    \pretocmd{\start@align}{\let\footnotemark\align@footnotemark}{}{}
    \newcommand{\Rnote}[1]{\begin{framed}\noindent \textcolor{red}{{#1}}\end{framed}} 
    \newcommand{\Rnote}[1]{}
    \newcommand{\remove}[1]{}
    \newtheorem{theorem}{Theorem}
    \newtheorem{example}{Example}
    \newtheorem{claim}{Claim}
    \newtheorem{lemma}{Lemma}
    \newtheorem{corollary}{Corollary}
    \newtheorem{obs}{Observation}
    \newtheorem{definition}{Definition}
    \newtheorem{proposition}{Proposition}
    \newtheorem{remk}[theorem]{Remark}
    \newenvironment{observation}{\begin{obs} \begin{normalfont}}{\end{normalfont}
    \end{obs}}
    \def\FullBox{\hbox{\vrule width 8pt height 8pt depth 0pt}}
    \def\qed{\ifmmode\qquad\FullBox\else{\unskip\nobreak\hfil
    \penalty50\hskip1em\null\nobreak\hfil\FullBox
    \parfillskip=0pt\finalhyphendemerits=0\endgraf}\fi}
    \def\qedsketch{\ifmmode\Box\else{\unskip\nobreak\hfil
    \penalty50\hskip1em\null\nobreak\hfil$\Box$
    \parfillskip=0pt\finalhyphendemerits=0\endgraf}\fi}
    \newenvironment{proof}{\begin{trivlist} \item {\bf Proof:~~}}
      {\qed\end{trivlist}}
    \newcommand{\beq}{\begin{equation}}
    \newcommand{\eeq}{\end{equation}}
    \newcommand{\be}{\begin{enumerate}}
    \newcommand{\ee}{\end{enumerate}}
    \newcommand{\bi}{\begin{itemize}}
    \newcommand{\ei}{\end{itemize}}
    \newcommand{\bd}{\begin{description}}
    \newcommand{\ed}{\end{description}}
    \newcommand{\bc}{\begin{center}}
    \newcommand{\ec}{\end{center}}
    \newcommand{\bthm}{\begin{theorem}}
    \newcommand{\ethm}{\end{theorem}}
    \newcommand{\bdefi}{\begin{definition}}
    \newcommand{\edefi}{\end{definition}}
    \newcommand{\bcor}{\begin{corollary}}
    \newcommand{\ecor}{\end{corollary}}
    \newcommand{\blem}{\begin{lemma}}
    \newcommand{\elem}{\end{lemma}}
    \newcommand{\bexa}{\begin{example}}
    \newcommand{\eexa}{\end{example}}
    \newcommand{\bprop}{\begin{proposition}}
    \newcommand{\eprop}{\end{proposition}}
    \newcommand{\opt}{\mathrm{opt}}
        \newcommand{\oS}{\overline{S}}
    \newcommand{\og}{\overline{g}}
    \newcommand{\ronote}[1]{\begin{framed}\noindent \textcolor{red}{{Ronen's note: #1}}\end{framed}} 
    \newcommand{\ranote}[1]{\begin{framed}\noindent \textcolor{red}{{Rann's note: #1}}\end{framed}} 
    \newcommand{\ranote}[1]{}
    \newcommand{\ronote}[1]{}
    \def\real{\hbox{\rm\setbox1=\hbox{I}\copy1\kern-.45\wd1 R}}
    \def\neal{\hbox{\rm\setbox1=\hbox{I}\copy1\kern-.45\wd1 N}}
    \newcommand{\abs}[1]{\left|{#1}\right|}
    \newcommand{\eps}{\varepsilon}
    \newcommand{\pr}[1]{\Pr\left[#1\right]}
    \newcommand{\lp}{\mathrm{LP}}
    \def\OPT{\text{\rm \,OPT}}
    \newcommand{\R}{{\mathbb R}}
    \newcommand{\supp}{\mathrm{supp}}
\newcommand{\Med}{\mathcal{M}}
\title[Coopetition Against an Amazon]{Coopetition Against an Amazon}
\author{Submission 620}
\begin{abstract}This paper studies cooperative data-sharing between competitors vying to predict a consumer's tastes. We design optimal data-sharing schemes
       both for when they compete only with each other, and for when they additionally compete with an Amazon---a company with more, better data.
       We derive conditions under which competitors benefit from coopetition, and under which full data-sharing is optimal.
\end{abstract}
\begin{document}
\maketitle

\fi

\ifnum\EC=0
\begin{document}
    
    \definecolor{myblue}{RGB}{80,80,160}
    \definecolor{mygreen}{RGB}{80,160,80}
    
    \title{Coopetition Against an Amazon}
 \ifnum\ITCS=1
 \author{}
 \else   
    \author{Ronen Gradwohl\thanks{Department of Economics and Business Administration, Ariel University. Email: \texttt{roneng@ariel.ac.il}.
    Gradwohl gratefully acknowledges the support of National Science Foundation award number 1718670.
    }
    \and
    Moshe Tennenholtz\thanks{Faculty of Industrial Engineering and Management, The Technion -- Israel Institute of
    Technology. Email: \texttt{moshet@ie.technion.ac.il.} The work by Moshe Tennenholtz was supported by funding from the European
Research Council (ERC) under the European Union's Horizon 2020
research and innovation programme (grant number 740435).}}
    \fi
    \date{}
    
	\maketitle

\begin{abstract}
This paper studies cooperative data-sharing between competitors vying to predict a consumer's tastes. We design optimal data-sharing schemes both for when they compete only with each other, and for when they additionally compete with an Amazon---a company with more, better data. We show that simple schemes---threshold rules that probabilistically induce either full data-sharing between competitors, or the full transfer of data from one competitor to another---are either optimal or approximately optimal, depending on properties of the information structure. We also provide conditions under which firms  share more data when they face stronger outside competition, and describe situations in which this conclusion is reversed.
\end{abstract}

 \section{Introduction}
 A key challenge to firms competing in today's electronic marketplace is competition against Big Tech companies that have considerably more data and so better predictive models.
One way in which smaller firms may overcome this hurdle and survive or even thrive in such a market is to engage in coopetitive
 strategies---namely, to cooperate with other small firms that are its competitors.
 Such coopetitive strategies have increasingly become a field of study by both academics and practitioners (see, for example, \cite{brandenburger2011co} and \cite{bengtsson2000coopetition}), but they largely focus on industrial applications such as healthcare, IT, and service industries.
In this paper, we study coopetition between e-commerce companies, and focus on the possibility of data 
sharing as a way to deal with their data imbalance vis-\`a-vis Big Tech.

Such coopetitive data-sharing can be undertaken by the firms themselves or by external service-providers. To facilitate the former, 
the newly burgeoning area of federated machine learning  has as its goal
the design of mechanisms that generate predictive models for firms based on such
shared data \citep{yang2019federated}. The assumption underlying federated learning is that these firms 
actually want predictive models based on all shared data, 
but in 
competitive scenarios this need not be the case. Although
each firm certainly desires its own model to be as predictive as possible, it most likely does not wish the same
for its competitors. So when is full data-sharing, leading to maximally predictive models for {\em all} firms,
optimal? When is partial or no sharing better? And do these answers differ when firms compete against
an Amazon?

In addition to using federated machine learning, coopetitive data-sharing can also be facilitated by an 
external service-provider.
While tools for data sharing are prevalent---and include platforms such as Google Merchant and Azure Data Share---they
are currently undergoing even further development. For example, in its European Strategy for Data, the European Commission plans the following: ``In the period 2021-2027, the Commission will invest in a High Impact Project on European data spaces and federated cloud infrastructures. The project will fund infrastructures, data-sharing tools, architectures and governance mechanisms for thriving data-sharing and Artificial Intelligence ecosystems'' \citep{EC2020}.
But just as with federated learning, the potential benefits from data sharing may arise under partial, rather than full, sharing.
For instance, one industry white paper urges service providers to
\begin{quote}``...analyze the combined, crowdsourced data and generate benchmark analyses and comparative performance reports. Each participating client gains insights that it could not otherwise access, and each benefits from the service-host provider's ability to {\em slice and dice} the aggregated data and share the results that are relevant to each client'' \citep{loshen2014leveraging}. 
\end{quote}
What is the optimal way for such service providers to combine, slice and dice, 
and share e-commerce firms' data 
in order to facilitate successful coopetition against  an Amazon?

%

Regardless of how coopetitive data-sharing is achieved---via federated learning or through a service provider---the
answer to whether or not firms desire maximally predictive models for all depends on how these predictions are
eventually used. If the firms involved act in completely unrelated markets, 
then of course full data-sharing is optimal---it
yields a maximally predictive model for a firm, and that firm is not harmed by others' better predictions.
This is no longer true if the firms are competitors, since then a firm may be
harmed by an improvement in its competitors' predictions. In this case, whether or not the tradeoff is worthwhile 
depends on the details of the market in which firms compete.

%


In this paper we focus on one such market, which is motivated by recommender systems for online advertising and Long Tail retail.
In our model, there are many types of consumers, and firms use their data to infer 
and take a tailored action for each consumer type. 
In the online advertising market, for example, firms use their data
to personalize advertisements in order to maximize the probability that a consumer clicks on their ad. 
Data sharing is common in this market, and platforms such as Google Merchant and Azure Data Share 
specifically facilitate such sharing across clients in order to improve personalization and increase clickthrough rates.

For another concrete application, consider retailers in a Long Tail market---a market with an enormous number of low-demand goods that collectively make up substantial
market share---interested in finding consumers for their products. Success in such markets crucially relies 
on retailers' abilities to predict consumers' tastes in order to match them to relevant products  \citep{anderson2006long,pathak2010empirical}.
On their own, smaller firms can be destroyed by giants with vastly greater amounts of data and thus a clear predictive advantage.
In cooperation with other small firms, however, this data imbalance may be mitigated, possibly granting smaller firms  a fighting 
chance. (We do not discuss here antitrust issues surrounding data sharing; the reader is referred to \cite{martens2020business} for a discussion thereof in the context of the EU's European Strategy for Data.)

In this paper we study such cooperation between competing firms, and design data-sharing schemes that are optimal---ones
that maximize total firm profits subject to each firm having the proper incentives to participate.
More specifically, we study two settings, one without and the other with an Amazon. 
In the former, in which smaller firms compete only with one another,
 we show that coopetition is beneficial to the extent
that firms can share data to simultaneously improve their respective predictions. In the optimal data-sharing scheme 
firms benefit from sharing because it allows them to better predict the tastes of consumers for whom no firm predicts well on its own.

In the latter setting, in which smaller firms compete with one another but also with an Amazon, 
sharing data about consumers for whom no firm predicts well can be beneficial, but is not optimal. 
We show that, in the optimal scheme, firms share data with others and weaken their own 
market positions on some  market segments, in exchange for receiving data from others and strengthening their market positions on other segments.

The following simple example illustrates our model and previews some of our results. The example is framed
as retailers offering goods to consumers who may make a purchase, but can also be interpreted as 
advertisers displaying advertisements to consumers who may click on the ads.
In either case, suppose the market consists of two goods, $g_0$ and $g_1$, that consumers may desire. Each consumer is described by a feature vector in $\{0,1\}\times\{0,1\}$ that
determines that consumer's taste: Consumers of types 00 and 11 are only interested in $g_0$, whereas consumers of types 01 and 10 are only interested in $g_1$. %
A priori, suppose the distribution of consumers in the market is such that an $\alpha$ fraction are of type 00, a $\beta$ fraction 
of type 01, a $\gamma$ fraction of type 10, and a $\delta$ fraction of type 11.

When a consumer shows up to the market, a retailer may offer him one of the goods (as a convention, we use masculine pronouns for consumers and feminine
pronouns for retailers/players). 
If the retailer has data about the consumer's type, she will offer the good
desired by that consumer. Assume that when a consumer is offered the correct good he makes the purchase, and that this leads to a profit of 1
to the retailer. If there are two retailers offering this good to the consumer he chooses one at random from whom to make the purchase. A retailer whose
good is not chosen gets profit 0.

Suppose now that there are two retailers pursuing the same consumer, but that they do not fully know the consumer's type. 
Instead, the first retailer only knows the first bit of the consumer's type,
and the second retailer only knows the second bit. This situation is summarized in Figure~\ref{fig:2by2}, where the first retailer knows the row bit
and the second knows the column bit.
Which goods should the competing retailers offer the consumer?
For this example, suppose for simplicity that $\alpha>2\beta\geq 2\gamma>4\delta$. Then each of the retailers has a dominant strategy: 
If the bit they know is 0, offer $g_0$, and if the bit they know is 1, offer $g_1$.

To see that this is a dominant strategy, consider for example the row retailer, and suppose she learns that a consumer's row bit is 0.
She thus knows that the correct good is $g_0$ with probability $\alpha/(\alpha+\beta)$ and $g_1$ with probability $\beta/(\alpha+\beta)$. 
Her utility, however, depends also on the good offered by the column retailer. The row retailer's worst-case utility from offering $g_0$ is $\alpha/(2\alpha+2\beta)$, which
occurs when the column player also offers $g_0$. Her best case utility from offering $g_1$ is $\beta/(\alpha+\beta)$. Since we assumed $\alpha>2\beta$,
offering $g_0$ is  best for the row player regardless of the other retailer's offer, and is thus dominant. The analyses for the case in which the row retailer's bit is 1, 
as well as for the column retailer's strategy,
are similar.

 \begin{figure}[tbp]
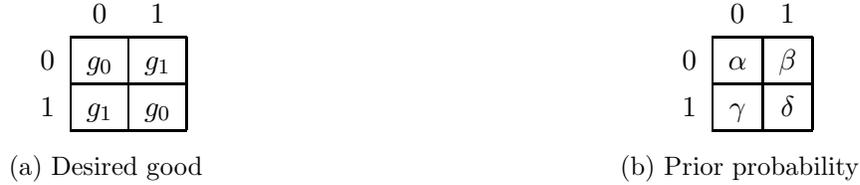

 \centering
\begin{subfigure}[b]{.5\textwidth}
\centering
      \begin{tabular}{ r|c|c| }
\multicolumn{1}{r}{}
 &  \multicolumn{1}{c}{0}
 & \multicolumn{1}{c}{1} \\
\cline{2-3}
0 & $g_0$ & $g_1$ \\
\cline{2-3}
1 & $g_1$ & $g_0$ \\
\cline{2-3}
\end{tabular}
      \caption{Desired good}
        \label{fig:2by2a}
    \end{subfigure}~
\begin{subfigure}[b]{.5\textwidth}
\centering
    \begin{tabular}{ r|c|c| }
\multicolumn{1}{r}{}
 &  \multicolumn{1}{c}{0}
 & \multicolumn{1}{c}{1} \\
\cline{2-3}
0 & $\alpha$ & $\beta$ \\
\cline{2-3}
1 & $\gamma$ & $\delta$ \\
\cline{2-3}
\end{tabular}
        \caption{Prior probability}
        \label{fig:2by2b}
    \end{subfigure}
~ 
\caption{Data sharing: An example}
\label{fig:2by2}
\end{figure}

The result from deployment of these dominant strategies is that consumers of type 00 will be offered the correct good by both retailers, and will thus choose one at random. Consumers
of types 01 and 10 will be offered the correct good by only one of the retailers, whereas consumers of type 11 will not make a purchase as they will not be
offered the correct good. The  
expected profit of the first retailer will thus be $\alpha/2 + \gamma$, and of the second retailer $\alpha/2+\beta$.

Does data sharing improve retailers' profits? Suppose retailers could share their respective information with one another, so that both 
always knew the consumer's type. This would lead to both always offering the correct good, and hence to expected profits of $1/2$ for each. However,
this would be detrimental to the second retailer whenever $\alpha/2+\beta > 1/2$, since her profits with data sharing would be
lower than without. Such cooperation is thus not
{\em individually rational} -- the second retailer will not want to cooperate with the first.

Instead, suppose there is a {\em mediator}---a trusted third-party that
may represent either a service provider or a cryptographic protocol (such as a federated learning algorithm) run
by the two retailers. The European Commission  calls mediators {\em common data spaces} \cite{EC2020}, and describes them as ``third-parties [who] may act as intermediaries and apply new technologies and ways of organising markets in order to...enable transactions that were previously not feasible. They can be private, public or community organisations that are neutral with respect to data uses...'' \citep{martens2020business}.
Given such a mediator, consider the following data-sharing scheme: both retailers share their data with the mediator,
who then passes along the data to all retailers {\em only} if the consumer's type is 11. If the consumer is not of type 11 then each retailer has her original data,
and additionally infers that the consumer's type is not 11 (for otherwise she would have learned that the type is 11). 

How does this scheme affect retailers' strategies? Clearly, when retailers learn that the consumer's type is 11, both offer $g_0$. What
happens if a retailer learns that the consumer's type is not 11? Consider the row retailer. If her bit is 0, then learning that the consumer's type
is not 11 does not provide any new information, and so she still has the same dominant strategy of offering $g_0$. If her bit is 1, however, then she learns that
the consumer's type must be 10. But note that in this case, her best strategy is to offer $g_1$, which is the same as her dominant strategy absent a
mediator. Thus, the mediator changes the row retailer's behavior {\em only} when the consumer is of type 11. A similar analysis and conclusion hold for
the column retailer.

Since retailers  offer the same goods with and without data sharing in all cases except when the consumer is of type 11, this scheme changes retailer's
profits only in this latter case. In particular, it leads
to an additional (ex ante) profit of $\delta/2$ for each retailer beyond her original profit---with probability $\delta$ the consumer is of type 11, in which case the retailers
split the additional surplus of 1---and is thus beneficial to both. Observation~\ref{obs:noA-opt}
shows that such a scheme is, in fact, not only individually rational but also optimal.

Suppose now that the two retailers are also competing against an Amazon for the consumer. Since the Amazon is a giant it has more data,
and, in particular, we assume that it has complete information about each consumer's type. In this three-way competition, the original dominant
strategies of our two smaller retailers do not perform as well: They lead to profits of $\alpha/3+\gamma/2$ and $\alpha/3+\beta/2$ to the
first and second retailer, respectively, since when they offer the correct good the consumer now chooses amongst up to 3 retailers.
The mediator described above, which reveals information when the type is 11, leads to higher profits, since now there is an additional
$\delta/3$ to each of the smaller retailers. However, that mediator is no longer optimal, and the retailers can actually do better.

To see this, observe that, conditional on consumer type 01, the total profit to the small retailers is $\beta/2$, since
only the second makes the correct offer $g_1$ and then competes with the Amazon. In contrast, if both retailers were to know the type and offer the correct good,
then their total profit would be $2\beta/3$, namely $\beta/3$ each. This is harmful to the second retailer as it involves a loss of profit, but a gain for the first
and the sum. But the second retailer can be compensated by getting data from the first elsewhere, for instance on consumer type 10.

For simplicity of this example, suppose that $\beta=\gamma$, and consider a mediator that facilitates full data-sharing, in
which both retailers learn each other's data and thus have complete information about consumers' types. Here, the profit of each is 
$1/3>\alpha/3+\beta/2=\alpha/3+\gamma/2$, so both gain from this data sharing. Furthermore, the total profit of the retailers is $2/3$,
which is the maximal utility they can obtain when competing against an Amazon. Thus, this data-sharing scheme is individually rational and optimal.
Observe that it leads to higher retailer welfare than the optimal scheme absent a mediator.

Full data-sharing is not always individually rational, however. If $\beta$ is much greater than $\gamma$, then the second retailer is not sufficiently
compensated by the first for sharing data about consumer 01. The second retailer will consequently be harmed by such data sharing, and so the scheme
will not be individually rational. However, in
Theorem~\ref{thm:with-amazon} we show that there is a different data-sharing scheme that is individually rational and optimal, 
a scheme in which the first retailer shares all her data
and the second shares some of her data. Overall, we show that such data-sharing coopetition against an Amazon is beneficial to the small retailers.
 
 
 
\subsection{Contribution} Our model is more general than the simple example above, and involves many goods and many types of consumers. For the example it was useful
 to think of each consumer as having a feature vector describing his type, but in the model we take a more general approach that allows for a wider
 class of information structures. Furthermore, in general the retailers will not have dominant strategies, either with or
 without a mediator. Instead, we suppose that, absent a mediator, players play an arbitrary Bayesian Nash equilibrium, and construct mediators that lead to higher expected utilities in equilibrium.


In Section~\ref{sec:baseline},  we design
optimal mediators for coopetitive data-sharing. 
 In addition to maximizing total retailer profits, all our mediators are individually rational---retailers attain higher utilities when they
 use the mediators---and incentive compatible---in equilibrium,
 each retailer's best strategy is to follow the mediators' recommendations. One of our main results is that the optimal
 schemes are simple. In particular, instead of 
 relying on the details of retailers' data or on their strategies in equilibrium sans data-sharing (as in the introductory example above),
our mediators consist of threshold rules that probabilistically induce either full data-sharing between retailers, or the full transfer of data from one retailer to another.

The analysis in Section~\ref{sec:baseline} retains one of the assumptions present in the 
example above---that the joint data of the small retailers
is sufficient to uniquely identify each consumer's type. In Section~\ref{sec:general} we drop this assumption, 
leading to a new set of  challenges. This is because, when the assumption holds, there is no 
conflict between welfare maximization and equilibrium,
and so the main hurdle in designing optimal mediators is the individual rationality constraint. Without the assumption, however,
participants' incentive compatibility constraints impose a limit on the total welfare that can be achieved. 
Nonetheless, our main results
in this section are that variants of the simple mediators from Section~\ref{sec:baseline} are approximately optimal
here as well. We also provide examples showing that our approximation factor is tight.

Finally, in Section~\ref{sec:share-more} we delve into the intriguing question of whether players 
share more data in the presence of
an Amazon or in its absence. First, under the assumption that  the joint data of the small retailers
is sufficient to uniquely identify each consumer's type, we show that if data sharing is strictly beneficial in
the absence of an Amazon, then it is also strictly beneficial in its presence. This confirms the intuition that players  share more data when facing stronger outside competition. However, we also show that this conclusion may be reversed when the assumption does not hold. In this latter case, we show that sometimes data sharing can be strictly beneficial in
the absence of an Amazon, but {\em not} in its presence. We show that the reason this may happen depends on whether or not
players' equilibrium considerations conflict with welfare maximization.

 
The rest of the paper proceeds as follows. First we survey the related literature, and then, in Section~\ref{sec:model}, develop the formal model. This is followed by our main analyses in Sections~\ref{sec:baseline},~\ref{sec:general}, and~\ref{sec:share-more}. Concluding notes appear in Section~\ref{sec:conclusion}.
    
\subsection{Related Literature}
Most broadly, our paper contributes to a burgeoning literature on competition in prediction and machine learning.
Within this literature, papers such as \cite{mansour2018competing}, \cite{ben2019regression}, and \cite{feng2019bias} study 
different models of learning embedded in settings where participants compete with others in their predictions.
They take participants' data as given, and point to the effect competition has on optimal learning algorithms. 
Our focus, in contrast, is on the effect of data sharing on competitive prediction.

Because of its focus on data sharing, our paper is also related to the more established literature on strategic information-sharing, 
a literature that focuses on a number of distinct applications: oligopolistic competition \citep{clarke1983collusion,raith1996general}, financial intermediation \citep{pagano1993information,jappelli2002information,gehrig2007information}, supply chain management \citep{ha2008contracting,shamir2016public}, price discrimination \citep{liu2006customer,jentzsch2013targeted}, and
competition between data brokers \citep{gu2019data,ichihashi2020competing}.
Much of this literature revolves around the question of whether it is beneficial for participating firms to pool all their data, or whether
they would prefer to pool only some or none at all.
As \cite{bergemann2013robust} demonstrate, however, more finely tuned data sharing can be beneficial even when full or partial pooling is not, 
and so the existing schemes do not exhaust the potential benefits of data sharing. In this paper we analyze precisely such
finely tuned sharing.

A different paper that does consider more finely tuned sharing is that of \cite{de2020crowdsourcing}, who study an 
infinitely repeated setting
in which firms compete for market share. Their main result is the construction of a data sharing scheme that Pareto
improves all participants' welfare. Like our results, their scheme also only reveals partial information to 
participants. Unlike our approach, however, \cite{de2020crowdsourcing}'s scheme relies on the repeated nature
of the interaction, and uses folk-theorem-type arguments to show that cooperation can be sustained. In addition,
their focus is on data sharing between many small firms, and they do not consider the possible presence of an Amazon.

In terms of modeling and techniques, our paper falls into the literature on information design. Information design, recently surveyed by \ifnum\EC=1 \cite{dughmi2017algorithmic} and by \fi \cite{bergemann2019information}, is the study of how the allocation of information affects incentives and hence behavior, with a focus on the extent and limits of purely informational manipulation. 
%
The information design problem encompasses work on communication in games and on Bayesian persuasion.
When the mediator is assumed to have only the information held by the players, or only the information they are willing to share with him, 
the problem maps to the one studied in the literature on communication in games. The goal of these studies is to characterize the equilibrium outcomes achievable when players are allowed to communicate prior to playing
a fixed game, and where communication is captured by players' interaction with a mediator.  
\cite{myerson1991game} 
and \cite{forges1993five} provide useful overviews.
Our paper builds on this model by endowing the mediator with the  specific aim of maximizing (some of) the players' utilities.

A different setting, called Bayesian persuasion \citep{kamenica2011bayesian\ifnum\ITCS=1 ,dughmi2019algorithmic\fi}, is one in which only the mediator (here called the sender) has payoff-relevant information, and in which he can commit to a particular information structure prior to observing that information.
Initial work on Bayesian persuasion focused on the case of a single sender and a single player \citep{kamenica2011bayesian}, 
but more recent
research also consists of settings with multiple senders (such as \cite{gentzkow2016competition}) and multiple 
players.
Our paper is most closely related to the latter, and specifically to 
\cite{bergemann2016bayes} and  \cite{galperti2018dual}, who develop a linear programming approach
to study the effect of different information structures on the Bayes correlated equilibria of the subsequent game,
as well as \cite{mathevet2020information}, who extend the belief-based approach of \cite{kamenica2011bayesian}
to study optimal information structures under different solution concepts.

A closely related strand of the literature, recently surveyed by \cite{bergemann2019markets}, focuses on the sale of information 
by a data broker. Most of this work differs from our paper in that it focuses on a market with one-sided information flow, with
information going only from one party to another, whereas we study a market where information flow is bidirectional
across firms. Nonetheless, the insights arising from that literature are related to ones we develop in this paper. 
For example, \cite{bimpikis2019information} consider a monopolistic data broker who sells data to firms who compete
in a downstream market. A major insight in that paper is that the amount of information optimally sold by the data broker
depends on the nature of downstream competition, and, in particular, that more (resp., less) data is sold if that competition
features strategic complements (resp., substitutes). While our paper studies data sharing as opposed to data sale, our results also shed light on the effect the nature of competition has on the amount of data shared. In our model firms' actions are strategic substitutes---when more firms offer the correct good to a particular consumer (e.g., due to the presence of an Amazon), the value of offering the correct good decreases. Our results indicate that sometimes there is more data sharing in the absence of an Amazon and hence when
strategic substitutes are weaker, which is consistent with the insight of \cite{bimpikis2019information}.
However, we also show that sometimes the opposite holds.

There is also a relationship between information sale and information sharing, which is pointed out by \cite{bergemann2019markets}. These authors  relate their general model of a market for information to incentives for information sharing, and point to the study of finely tuned sharing schemes as an open problem.

Our paper is also related to a set of papers that focuses on designing mediators to achieve various
goals, such as to improve the incentives of players, make equilibria robust to collusion, or implement correlated equilibria while guaranteeing privacy  \cite{monderer2004k,monderer2009strong,kearns2014mechanism}.
The first two differ from our work in that they make stronger assumptions about the mediator's capabilities, such as changing payoffs or limiting player actions,
and the third focuses on a setting with many players that is quite different from our own.

Finally, our work is conceptually related to  research on federated learning, and in particular on 
cross-silo federated learning \cite{kairouz2019advances}. This framework consists of a set of agents with individual data
whose goal is to jointly compute a predictive model. A recent emphasis within federated learning is on incentivizing the
agents to participate, sometimes through monetary transfers \cite{zhan2021survey} and sometimes by providing different models to different agents \cite{lyu2020towards}. The focus of these
papers is on fairness---agents that provide more data should be compensated more generously. Our paper differs, in 
that agents are assumed to utilize the resulting models in some competition, and this subsequent competition drives
agents' incentives to participate. Our work is thus orthogonal to that 
surveyed in \cite{zhan2021survey}: we do not focus on the algorithmic aspects of computing a joint model, but rather
on the incentives and joint benefits of participating even when other agents are competitors.
    
\section{Model and Preliminaries}\label{sec:model}
There is a set $G$ of goods and a population of consumers interested in obtaining one of them. Each consumer has one of a finite set of types, $\omega\in\Omega$,
that describe the good $g_\omega\in G$ in which he is interested.
There are three players who compete for consumers: two regular players indexed 1 and 2, and an Amazon, a player indexed 0. We will separate the analysis to two
settings: first, when the Amazon player 0 is not present and players 1 and 2 compete only with one another; and second, when they additionally compete with Amazon.
The model and definitions here apply to both settings.

Player 0 (if present) has complete information of the consumer's type. In Section~\ref{sec:baseline} we assume that
players 1 and 2 have {\em jointly complete information}---that, when combined, their respective data uniquely
identifies each consumer's type---but that each player on her own may only be partially
informed. In Section~\ref{sec:general} we remove this assumption.

To model this informational setting, we represent players' data using the information partition model of \citet{aumann1976agreeing}: 
Each player $i$ is endowed with a partition $\Pi_i$ of $\Omega$, where $\Pi_i$ is a set of 
disjoint, nonempty sets whose union is $\Omega$. 
For each $\omega\in\Omega$ we denote by $P_i(\omega)$ the unique element of $\Pi_i$ that
contains $\omega$, with the interpretation that if the realized type of consumer is $\omega$, each player $i$ only learns that the type belongs to the
set $P_i(\omega)$. 

Framing the example from the introduction within this model would associate $\Omega$ with $\{0,1\}^2$ and the partitions $P_1(00)=P_1(01) = \{00,01\}$, 
$P_1(10)=P_1(11) = \{10,11\}$, $P_2(00)=P_2(10) = \{00,10\}$, and $P_2(01)=P_2(11) = \{01,11\}$.

In this model, player 0's complete information means that $P_0(\omega)=\{\omega\}$ for all $\omega\in\Omega$, and players 1 and 2's jointly complete information
means that $P_1(\omega)\cap P_2(\omega)=\{\omega\}$ for all $\omega\in\Omega$.
We further assume that, before obtaining any information, all players have a common prior $\pi$ over $\Omega$.

To model data sharing between players 1 and 2 we suppose there is a mediator that gathers each player's information
and shares it with the other in some way. Formally, a mediator is a function $\Med:2^\Omega\times 2^\Omega\mapsto \Delta\left(M^2\right)$, 
where $M$ is an arbitrary message space. The range is a distribution over pairs of messages, where the first (respectively, second)  is the message sent to player 1 (respectively, player 2).

We begin with an informal description of the game:
Players offer consumers a good, and consumers choose a player from whom to acquire the good. Consumers are {\em single-minded}: For each $\omega$ there is a unique $g_\omega\in G$ such
that the consumer will only choose a player who offers good $g_\omega$.
 If there is more than one such player, the consumer chooses
uniformly at random between them. One interpretation of this consumer behavior is that he chooses by ``satsificing''---making a random choice among options that
are ``good enough'' \cite{simon1956rational}, where $g_\omega$ represents these good-enough options. See \cite{ben2019regression} for a similar approach.

We assume that prices and costs are fixed, and normalize a player's utility to 1 if she is chosen and to 0 otherwise.
So, for example, in the online advertising application of our model, consumers are assumed to click on a random
ad amongst those ads that are most relevant; once they click, the expected profit to the advertiser (conversion rate times
profit from a sale) is 1. The normalization to 1 is actually without loss of generality. If we had a different profit $p_\omega$ for each consumer type $\omega$, then we could change every $p_\omega$ to 1, modify the prior
probability $\pi(\omega)$ of type $\omega$ to $p_\omega\cdot\pi(\omega)$, and then renormalize the prior to be a proper distribution. This would leave all of our analysis intact.

The following is the order of events, given a fixed mediator $\Med$. It applies both to the case in which there are only two players $\{1,2\}$ and to the case in which there is an additional Amazon player, indexed 0.

\begin{enumerate}
\item Each of players 1 and 2 chooses whether or not to opt into using the mediator.
\item If one or both players $\{1,2\}$ did not opt into using the mediator, then each player $i$ obtains her {\em base value},
the utility $v_i$ (described below).
\item If both players $\{1,2\}$ opted into using the mediator, then:
\begin{enumerate}
\item A consumer of type $\omega$ is chosen from $\Omega$ with prior distribution $\pi$.
\item If present, player 0 learns $P_0(\omega)$.
\item Messages $(M_1,M_2)$ are chosen from the distribution $\Med(P_1(\omega),P_2(\omega))$, and each $i\in\{1,2\}$ learns
 $P_i(\omega)$ and $M_i$.
\item Each player simultaneously chooses a good to offer the consumer, 
and then the consumer chooses a player from whom to obtain the good.
\end{enumerate} 
\end{enumerate}


There are several ways to interpret the base values $v_1$ and $v_2$. Our main interpretation is that these are the expected utilities of the players in the game without data sharing. To formalize this, consider the {\em unmediated} game
$\Gamma= \left(\Omega, \mathcal{P},G^{\abs{\mathcal{P}}},(P_i(\cdot))_{i\in\mathcal{P}},(u_i)_{i\in\mathcal{P}},\pi\right)$,
where $\mathcal{P}$ is the set of participating players, and is either $\{1,2\}$ or $\{0,1,2\}$, $G$ is the set of actions,
$P_i(\cdot)$ is the information of player $i$ (which consists of the partition element of the realized type), $u_i$ is $i$'s utility function (described below), and $\pi$ is the common prior over $\Omega$. 
Given this Bayesian game, let $v=(v_1,v_2)$ be the expected utilities in some {\em Bayesian Nash equilibrium 
(BNE)} of $\Gamma$---a profile $(s_1,s_2)$ in which $s_i$ is the best strategy for each player $i$ conditional on her information and the assumption that the other player plays the strategy $s_{j}$.

Although this is the main interpretation of the base values, our model and results permit additional and more general interpretations as well. 
One additional interpretation is to suppose each $v_i$ is the minimax value of player $i$ in the unmediated game. In this second interpretation, we could
imagine player $j$ ``punishing'' player $i$ if the latter does not opt into using the mediator, by playing a strategy
that minimizes the latter's payoff (off the equilibrium path). A third, more general interpretation is that the base values also depend on factors outside of the specific game, such as firm size, customer base, and so on, in addition to the primitives of the game $\Gamma$.

In our constructions of mediators in Sections~\ref{sec:baseline} and~\ref{sec:general}, the interpretation of $v$ will not matter---our mediators will be optimal given any such values, regardless of whether they are derived endogenously as the equilibrium utilities of the game absent a mediator, or whether they arise exogenously from factors outside of the specific game. However, in Section~\ref{sec:share-more}, in which we compare data sharing across two different environments---with and without a mediator---we will stick with the main interpretation of base values as equilibrium payoffs absent a mediator. This is because, when we compare two different environments, the payoffs players could attain in equilibrium without a mediator will differ across these environments.

Now, as noted above, data sharing is modeled by a mediator $\Med:2^\Omega\times 2^\Omega\mapsto \Delta\left(M^2\right)$. Without loss of generality we invoke the revelation principle and assume that $M= G$,
the set of possible goods.
We interpret the messages of the mediator
as recommended actions to the players, one recommendation for each player. 

Formally, if both players 1 and 2 opt into using the mediator $\Med$, then all
play the mediated Bayesian game
$\Gamma^\Med= \left(\Omega, \mathcal{P},G^{\abs{\mathcal{P}}},\mathcal{I}^\Med,(u_i)_{i\in\mathcal{P}},\pi\right)$.
$\mathcal{P}$ is the set of participating players, and is either $\{1,2\}$ or $\{0,1,2\}$.
The function $\mathcal{I}^\Med:\Omega\mapsto \Delta(T^{\abs{\mathcal{P}}})$ denotes the information of players in the game
for each state, where $T=2^\Omega\times M$ is the set of possible pieces of information a player may have---a partition element and a
message from the mediator.
For player 0, this information consists of the realized type of consumer only,
and so $\mathcal{I}^\Med(\cdot)_0 = \left(P_0(\cdot),\emptyset\right)$ (where the $\emptyset$ means player 0 gets no message from
the mediator). For players 1 and 2, the information consists of both the
partition element of the realized type of consumer and the action recommended to her by the mediator, and so
$\mathcal{I}^\Med(\cdot)_i = \left(P_i(\cdot), \Med(P_1(\cdot),P_{2}(\cdot))_i\right)$.

%
%
%
%

Next, each player's set of actions  in the mediated game is $G$, and her utility function $u_i=u_i:\Omega\times G^{\abs{\mathcal{P}}}\mapsto\R$.
The latter is equal to 0 if player $i$'s action $g\neq g_\omega$, and otherwise it is equal to $1/k$, where $k$ is the total number of players who play action $g_\omega$.
The nonzero utility corresponds to utility 1 if a consumer chooses the player's good, which occurs if a player offers the consumer the correct good and the consumer
chooses uniformly amongst all players that do so. We often write $E[u_i(\cdot)]$ when the expectation is over
the choice of $\omega$, in which case we omit the dependence of $u_i$ on $\omega$ for brevity.
Finally, (mixed) strategies of players in $\Gamma$ are functions $s_i:T\mapsto \Delta(G)$. We also denote by $s_i(\omega)=s_i(\mathcal{I}_i(\omega))$.
 
%
%
 
An important note about the mediator is in order. We assume that when players opt into participating, they {\em truthfully} reveal their data
$P_i(\omega)$ to him. Players' strategic behavior is relevant in their choice of opting in or not, and then in whether or not they follow the
mediator's recommendation in their interaction with the consumer. While truthful reporting is clearly restrictive, it is a natural assumption in our context---for instance, when service providers act as mediators, they are typically also the ones
who host retailers' data in the cloud, and so already have the (true) data on their servers.
Similarly, federated machine learning algorithms work based on the assumption that participants share their true data.
Furthermore,  truthful reporting can be justified by the repeated nature of the interaction. Observe that if a firm
is not truthful in its reporting, the other firm will discover this at some point, as the mediator's recommendations will not pan out
as well as they should. Thus, if firms have the ability
to opt out of using the mediator at some point in the future, then each firm can use the threat of opting out to incentivize
the other to report truthfully.

\paragraph{Individual rationality, incentive compatibility, and optimality}  A first observation is that player 0,
if present, has a dominant strategy:
\begin{lemma}
In any game $\Gamma^\Med$, player 0 has the unique dominant strategy  $s_0(\omega)=g_\omega$.
\end{lemma}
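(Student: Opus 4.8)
The plan is to exploit the fact that player~0 has \emph{complete} information in order to reduce the claim to a trivial pointwise comparison, so that no averaging over types is ever required. First I would record what player~0's strategic problem looks like: since $P_0(\omega)=\{\omega\}$ and player~0 receives no message from the mediator (her information in $\Gamma^\Med$ is $\mathcal{I}^\Med(\cdot)_0=\left(P_0(\cdot),\emptyset\right)$), her information set at each state is the singleton $\{\omega\}$. Hence a strategy $s_0$ amounts to choosing one good per realized type $\omega$, and player~0 knows $\omega$ exactly before acting.

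Next I would fix an arbitrary type $\omega$ and an arbitrary (possibly mixed) profile of the other participants' actions, and compare player~0's two options directly. If she offers $g_\omega$, then whatever the at most two other players do, the number $k$ of players offering $g_\omega$ satisfies $1\le k\le \abs{\mathcal{P}}\le 3$, so by the definition of $u_0$ her realized utility is $1/k>0$. If instead she offers any good $g\neq g_\omega$, then $u_0=0$. Thus, \emph{for every} realization of the opponents' actions, offering $g_\omega$ yields strictly positive utility while every other action yields exactly $0$; taking expectations over the opponents' randomization preserves this strict inequality.

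Finally I would conclude that $g_\omega$ strictly dominates every other action at every type, which gives both parts of the statement at once: $s_0(\omega)=g_\omega$ is a dominant strategy, and it is the \emph{unique} one, since any strategy placing positive probability on some $g\neq g_\omega$ at a positive-probability type is strictly worse against every opponent profile and hence cannot be dominant. There is essentially no obstacle here---the only point worth stressing is \emph{why} the argument is so short: completeness of player~0's information is exactly what collapses her expected-utility comparison into a state-by-state one, in contrast to players~1 and~2, who face residual uncertainty over $\omega$ within their partition cells and for whom the optimal offer genuinely depends on the prior and on opponents' play.
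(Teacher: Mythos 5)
Your argument is correct and is essentially the paper's own proof, spelled out in slightly more detail: offering $g_\omega$ yields $1/k>0$ regardless of the opponents' actions, any other good yields $0$, and completeness of player~0's information makes this a state-by-state comparison. Nothing further is needed.
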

The proof is straightforward: playing action $g_\omega$ leads to positive utility to player 0,
regardless of the other firms' actions. Playing any other action leads to utility zero. This implies that always offering the correct good  
is the unique dominant strategy for player 0. We will thus take that player's strategy as fixed throughout.

We will be interested in designing mediators
that players actually wish to utilize, and so would like them to satisfy two requirements: first, that players want to follow the mediator's recommendations, and so
that following these recommendations forms an equilibrium;
and second, that players prefer their payoffs with the mediator over their payoffs without (the base values). 
Formally,  we say that:
\begin{itemize} \item $\Med$ is {\em incentive compatible (IC)}
if the strategy profile $\overline{s}$ in which players 1 and 2 always follow $\Med$'s recommendation is a BNE of $\Gamma^\Med$.
\item $\Med$ is {\em individually rational (IR)}
if the expected utility of each player $i\in\{1,2\}$ under $\overline{s}$ in $\Gamma^\Med$ is (weakly) greater than $v_i$.
\end{itemize}
Observe that if a mediator is both IC and IR, then in equilibrium players will opt in and then follow the mediator's recommendations.

\Rnote{LEAVE THE FOLLOWING TEXT?
Note that individual rationality is an ex ante notion: 
Players decide whether to opt into the mediator {\em before} they learn $P_i(\omega)$, and if they
do then they are committed. Thus, no player can infer anything about the state from another player's decision about whether or not to opt into the
mediator. }

Finally, as we are interested in the extent to which data sharing  benefits players 1 and 2, we will study {\em optimal} mediators, namely, ones
in which the sum of these two players' utilities is maximal subject to the IC and IR constraints. To this end, denote by $W(\Med)$ the sum of players 1 and 2's 
expected utilities under $\overline{s}$ in $\Gamma^\Med$. Then:
\begin{definition}
Mediator $\Med$ is {\em optimal} if 
$W(\Med)\geq W(\Med')$ for any other 
$\Med'$ that is IC and IR.
\end{definition}

\section{Jointly Complete Information}\label{sec:baseline}
\smallskip
\subsection{Coopetition Without an Amazon}\label{sec:without-amazon}
We begin our analysis with the case of jointly complete information and without an Amazon.
We first consider the question, under what base values $v$ does there exist an IR, IC mediator?
We then turn to our main result for the section, the construction
of an optimal mediator.


\subsubsection{When does an IR, IC mediator exist?} We begin with a simple observation. Since for any mediator $\Med$ and 
conditional on any realized
type $\omega$ the total payoff to players is at most 1, it must be the case that $W(\Med)\leq 1$. Thus,
if $v_1+v_2>1$, then there does not exist any IR mediator.

However, even if $v_1+v_2\leq 1$, there may not exist a mediator that is both IR and IC. In the example from the
introduction, for instance, in the simple case without an Amazon and where $\alpha>2\beta\geq 2\gamma>4\delta$, 
the row retailer can guarantee herself a payoff of $(\alpha+\gamma)/2$ by ignoring the mediator and playing her dominant actions. This implies that
the most the column retailer can obtain is $1-(\alpha+\gamma)/2$. 
Any mediator that leads to a higher payoff to the column
retailer must therefore fail to satisfy the IC constraints.

Theorem~\ref{thm:noA-IR} below generalizes the bound in this example. But first, some notation:
Consider the strategy $s_j'$ of player $j$ that,
for every $P_j(\omega)$, chooses one of the goods that is most likely to be correct: $s_j'(\omega)\in \arg\max_{g\in G}\Pr[g=g_\omega|P_j(\omega)]$. Note that $s'_j$ is a best-response to a player $i\neq j$ who always chooses the
correct good $g_\omega$.
Furthermore, let $\alpha_j$ be the overall probability that $s_j'$ chooses the correct good: 
$\alpha_j=\Pr\left[s'_j(\omega)=g_\omega\right]$.

\begin{theorem}\label{thm:noA-IR}
Suppose $v_i\geq v_j$. Then
an IR, IC mediator exists only if $v_1+v_2\leq 1$ and $v_i\leq E\left[u_i(g_\omega, s_j'(\omega))\right]$.
\end{theorem}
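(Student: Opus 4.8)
The first condition, $v_1+v_2\le 1$, is immediate from the observation preceding the theorem combined with individual rationality: conditional on any realized type $\omega$ the combined payoff of players 1 and 2 is at most 1 (absent an Amazon the consumer is acquired by at most one of them), so $W(\Med)\le 1$; since IR gives $E[u_i]\ge v_i$ and $E[u_j]\ge v_j$ under $\overline{s}$, we get $v_1+v_2\le E[u_1]+E[u_2]=W(\Med)\le 1$.

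For the second condition my plan is to show that the weaker player $j$ can always secure a payoff of at least $\alpha_j/2$, which caps what player $i$ can extract. Concretely, I would consider the deviation in which player $j$ discards the mediator's recommendation and instead plays $s_j'$, i.e., for each $P_j(\omega)$ offers a good most likely to be correct. This is a legitimate strategy in $\Gamma^\Med$. Whenever $s_j'$ is correct---an event of probability $\alpha_j$ over $\omega\sim\pi$---player $j$ offers $g_\omega$, and since there is no Amazon at most two players offer $g_\omega$, so $j$'s realized payoff is at least $1/2$ regardless of player $i$'s action. Hence this deviation guarantees $E[u_j(s_j'(\omega),\overline{s}_i(\omega))]\ge \alpha_j/2$, independent of player $i$'s equilibrium strategy $\overline{s}_i$.

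I would then invoke IC: since $\overline{s}$ is a BNE of $\Gamma^\Med$, player $j$ cannot profit from this deviation, so her equilibrium payoff satisfies $E[u_j]\ge \alpha_j/2$. Using $W(\Med)\le 1$ once more, player $i$'s equilibrium payoff satisfies $E[u_i]=W(\Med)-E[u_j]\le 1-\alpha_j/2$. Finally, IR gives $v_i\le E[u_i]\le 1-\alpha_j/2$, and a one-line computation identifies the right-hand side with the claimed quantity: when $i$ plays $g_\omega$ and $j$ plays $s_j'$, player $i$ gets $1/2$ on the $\alpha_j$-fraction of types where $j$ is also correct and $1$ otherwise, so $E[u_i(g_\omega,s_j'(\omega))]=\tfrac12\alpha_j+(1-\alpha_j)=1-\alpha_j/2$. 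This is exactly the asserted bound.

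The main thing to get right is the deviation guarantee $E[u_j]\ge \alpha_j/2$: I must ensure it holds against an \emph{arbitrary} equilibrium strategy of player $i$ (it does, because the ``payoff $\ge 1/2$ whenever correct'' estimate uses only that at most two players can be offering $g_\omega$, which is precisely where the absence of an Amazon enters), and that ignoring the recommendation is permissible under the solution concept (it is, since IC only asserts that following the recommendation is a best response, and a best response must beat every alternative strategy, including $s_j'$). I note that the hypothesis $v_i\ge v_j$ is not actually needed for the argument---the symmetric bound $v_j\le 1-\alpha_i/2$ holds by identical reasoning---but stating the inequality for the player with the larger base value records the binding constraint.
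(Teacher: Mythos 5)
Your proof is correct and takes essentially the same route as the paper: both hinge on the deviation of player $j$ to $s_j'$, which secures $\alpha_j/2$ against any strategy of player $i$ because at most two players can offer $g_\omega$, combined with the bound $W(\Med)\le 1$ and the identity $E[u_i(g_\omega,s_j'(\omega))]=1-\alpha_j/2$. The only difference is presentational---the paper phrases the argument as a contradiction while you derive the cap on $v_i$ directly---so no further comment is needed.
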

In words, the theorem states that an IR, IC mediator exists only if 
each player's base value is bounded above by her utility in the hypothetical situation in which she always
offers the correct good, and the other player best-responds conditional only on $P_j(\omega)$ (in addition to the
obvious constraint $v_1+v_2\leq 1$). 
In Theorem~\ref{thm:without-amazon} below we will see that these conditions
are also sufficient. 

We note that if the base values $v$ are the equilibrium expected utilities of the game without a mediator, then the conditions in
Theorem~\ref{thm:noA-IR} are always satisfied. To see this, observe that the mediator that recommends to each player her equilibrium action (and nothing more) is IC, by the definition of a BNE. It is also IR, since players' utilities are the same whether or not they opt into using this mediator.

\Rnote{How does a player's information help? By lowering the maximum possible $v_i$ to the other player, and hence
 increasing the bound in $v_j$, namely $1-v_i$.}

\begin{proof}{Proof.}
The condition $v_1+v_2\leq 1$ must hold for any IR mediator $\Med$, since $W(\Med)\leq 1$.
Now, suppose towards a contradiction that there exists an IC mediator for which 
$v_i> E\left[u_i(g_\omega, s_j'(\omega))\right]$. This implies that, under $\Med$, the expected utility
of player $j=3-i$ is strictly less than $1-E\left[u_i(g_\omega, s_j'(\omega))\right]$.  However, player $j$
then has a profitable deviation to $s_j'$. When playing $s_j'$, on each partition element $P_j(\omega)$ 
player $j$ will get utility {\em at least}  
$$\Pr\left[s'_j(\omega)=g_\omega\right]/2 = E\left[u_j(s_j'(\omega), g_\omega)\right]\geq 1-E\left[u_i(g_\omega, s_j'(\omega))\right],$$
and exactly this utility whenever $i$ always chooses the correct good $g_\omega$. 
Since this is a profitable deviation, the mediator $\Med$ is not IC, 
a contradiction.
 \end{proof}
For the rest of Section~\ref{sec:without-amazon}, denote a pair $v$ that satisfies the conditions
in Theorem~\ref{thm:noA-IR} as a {\em feasible $v$}.

\subsubsection{An optimal mediator}
We now describe an optimal mediator, at the same time showing that the conditions on $v$ in Theorem~\ref{thm:noA-IR}
are also sufficient.
We begin with a simple observation:
\begin{observation}\label{obs:noA-opt}
If $\Med$ is IR, IC,  and $W(\Med)=1$, then $\Med$ is optimal.
\end{observation}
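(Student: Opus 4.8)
The plan is to reduce the claim entirely to the universal welfare bound $W(\Med')\le 1$ that the excerpt already establishes for \emph{every} mediator, whether or not it is IC and IR. Once that bound is in hand, optimality of any welfare-$1$ mediator is immediate, so the observation is really just a repackaging of a fact already in hand.

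First I would recall why $W(\Med')\le 1$ for an arbitrary mediator $\Med'$. Fix a realized type $\omega$ and consider the payoffs of players $1$ and $2$ under $\overline{s}$ (indeed under any strategy profile). By the definition of $u_i$, a player earns $0$ unless she offers the correct good $g_\omega$, in which case she earns $1/k$, where $k\in\{1,2\}$ is the number of players who offer $g_\omega$. Hence, conditional on $\omega$, the sum of the two players' payoffs is $0$ when neither offers $g_\omega$ and exactly $1$ otherwise---either a single player collects the full unit, or the two split it as $1/2+1/2$. Taking the expectation over $\omega\sim\pi$ therefore yields $W(\Med')\le 1$.

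Second I would combine this bound with the hypotheses. Let $\Med'$ be any mediator that is IC and IR. The bound just established gives $W(\Med')\le 1$, and by assumption $W(\Med)=1$, so $W(\Med)=1\ge W(\Med')$. Since $\Med'$ was an arbitrary IC, IR mediator, this is precisely the definition of $\Med$ being optimal, completing the argument.

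I do not expect any real obstacle: the substantive content---the per-type constraint that caps total welfare at $1$---was already recorded in the paragraph preceding the statement. The only point worth stating carefully is that $W(\Med')\le 1$ holds uniformly over all mediators, regardless of their message space or recommendation rule, since it follows solely from the fact that at most one unit of surplus is available per consumer type. This uniformity is exactly what lets $\Med$ dominate every competitor in the feasible set.
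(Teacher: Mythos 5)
Your argument is correct and matches the paper's (implicit) reasoning: the observation is stated immediately after the paper records that $W(\Med')\leq 1$ for every mediator, and optimality of a welfare-$1$ IR, IC mediator follows directly from that bound together with the definition of optimality. Your per-type accounting of why the two players' payoffs sum to at most $1$ is exactly the justification the paper gives for that bound, so nothing further is needed.
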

This observation implies that the mediator in the example of the introduction, for the case without an Amazon,
is optimal (when the base values are retailers' utilities under their dominant strategies). One drawback of that
mediator, however, is that it depends on the retailers' strategies in the unmediated game---in particular, it utilizes the
fact that, in equilibrium, neither retailer offers the correct good in the bottom-right cell of the matrix. 
Our aim here is to design
a mediator that does not depend on such strategies, but only on players' respective information and base values.

The idea underlying the mediator we construct is the following. Suppose $v_i\geq v_j$. Then if $v_i \leq 1/2$, then the mediator will facilitate
full data-sharing,
and so each player will offer the correct good in every state. If $v_i>1/2$ then player $i$ will obtain all the information and player $j$ will obtain only partial information:
she will get recommendation $g_\omega$  with  probability less than 1, and otherwise will get recommendation  $s_j'(\omega)$. 
We now formally describe the mediator, and then show that it is indeed IR, IC,  and optimal.
%
\begin{algorithm}[H]
\caption{Mediator for $\mathcal{P}=\{1,2\}$ with jointly complete information}\label{alg1}
\begin{algorithmic}[1]
\Procedure{$\Med^v_{noA}$}{$V,W$} \Comment{$V=P_1(\omega)$ and $W=P_2(\omega)$ for realized $\omega\in\Omega$}
\State $\omega \leftarrow V\cap W$
\State $i \leftarrow \arg\max_{k\in\{1,2\}} v_k$ 
\State $j \leftarrow 3-i$
   \If{$v_i\leq 1/2$}
	\Return $\left(g_\omega, g_\omega\right)$	 \Comment{Full data-sharing}
  \Else
	\State $g_j \leftarrow s_j'(\omega)$
	\State Choose $\gamma \in \left[0,1\right]$ uniformly at random.
	\If{$\gamma <\frac{2-2v_i-\alpha_j}{1-\alpha_j}$}
		\State \Return $(g_\omega,g_\omega)$
	\ElsIf{$i=1$} \Return $(g_\omega, g_j)$
		\Else~\Return $(g_j, g_\omega)$
	\EndIf
\EndIf
\EndProcedure
\end{algorithmic}
\end{algorithm}

\begin{theorem}\label{thm:without-amazon}
For any feasible $v$, the mediator $\Med^v_{noA}$ above is IR, IC,  and optimal. 
\end{theorem}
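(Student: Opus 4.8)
The plan is to prove the three properties separately, running the two branches of the mediator ($v_i\le 1/2$ and $v_i>1/2$) in parallel, where throughout $i=\arg\max_k v_k$, $j=3-i$, and $g^\ast = s_j'(\omega)$ denotes the good player $j$ is recommended in the partial branch (which, crucially, depends only on $P_j(\omega)$ and so is computable by $j$ herself).

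First I would dispatch optimality cheaply. In the $v_i\le 1/2$ branch both players always offer $g_\omega$, and in the $v_i>1/2$ branch player $i$ always offers $g_\omega$ (whether the draw of $\gamma$ triggers full sharing or the partial recommendation). Hence in every realized state at least one participating player offers the correct good, the consumer is served, and $u_1+u_2=1$ conditional on each $\omega$; therefore $W(\Med^v_{noA})=1$. By Observation~\ref{obs:noA-opt}, optimality will follow the instant IR and IC are established, so those are the real content.

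For IR, the $v_i\le 1/2$ branch is immediate: both offer the correct good and split, so each earns $1/2\ge v_i\ge v_j$. The $v_i>1/2$ branch is where the threshold $\frac{2-2v_i-\alpha_j}{1-\alpha_j}$ earns its keep. Writing $p$ for this full-sharing probability, I would first check the mediator is well defined: feasibility, i.e. $v_i\le E[u_i(g_\omega,s_j'(\omega))]=1-\alpha_j/2$, gives $p\ge 0$, and $v_i>1/2$ gives $p\le 1$. Then I would compute player $i$'s expected payoff: with probability $p$ she splits ($1/2$), and with probability $1-p$ she plays $g_\omega$ against $j$'s $s_j'(\omega)$, earning $1-\alpha_j/2$. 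A short calculation, using $1-p=(2v_i-1)/(1-\alpha_j)$, collapses $U_i=\tfrac{p}{2}+(1-p)(1-\tfrac{\alpha_j}{2})$ to exactly $v_i$, so $i$'s IR binds. Since $W=1$, player $j$ earns $1-v_i\ge v_j$ (using $v_1+v_2\le 1$), giving $j$'s IR.

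The crux is IC, and within it player $j$'s incentives; player $i$'s are trivial, since she is always recommended the correct good $g_\omega$, and offering it yields positive utility while any other good yields $0$, so following is a best response irrespective of $j$'s play. For player $j$ I would condition on her partition element $P_j$ and the message $m$ she receives, using that $i$ follows and hence always offers $g_\omega$, so $j$'s payoff from offering $g$ is $\tfrac12\Pr[g_\omega=g\mid P_j,m]$ and she should offer the good most likely to be correct. If $m\ne g^\ast$ the message can only have come from the full-sharing branch and reveals $g_\omega=m$, so following is strictly optimal. The delicate case is $m=g^\ast$, which arises both from the partial branch (for every $\omega\in P_j$) and from the full-sharing branch when $g_\omega=g^\ast$. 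Here I would write out the posterior weights: offering $g^\ast$ carries unnormalized weight $\Pr[g_\omega=g^\ast\mid P_j]$, while offering any $g\ne g^\ast$ carries only $(1-p)\Pr[g_\omega=g\mid P_j]$, because such a $g$ is consistent with $m=g^\ast$ solely through the partial branch. Since $g^\ast=s_j'$ maximizes $\Pr[g_\omega=\cdot\mid P_j]$ by definition and $1-p\le 1$, offering $g^\ast$ dominates, so following is optimal. I expect this posterior comparison---showing that the recommendation $g^\ast$ remains a maximizer even after the Bayesian update induced by the very event of receiving it---to be the main obstacle, as it is the one step where the precise information content of the message, rather than mere expected-payoff accounting, must be tracked.
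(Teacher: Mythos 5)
Your proof is correct and follows essentially the same route as the paper's: the same case split on $v_i\le 1/2$ versus $v_i>1/2$, the same use of $W(\Med^v_{noA})=1$ together with Observation~\ref{obs:noA-opt} for optimality, and the same expected-payoff computation for IR (you compute $U_i=v_i$ directly where the paper computes $U_j=1-v_i$, which is equivalent given $W=1$). Your treatment is in fact somewhat more careful than the paper's in two places---verifying that the threshold probability lies in $[0,1]$ via feasibility, and the explicit posterior comparison showing $s_j'$ remains optimal for player $j$ after conditioning on the event of receiving the recommendation $g^\ast$, a step the paper compresses into the assertion that $j$ "chooses the best response conditional on her information $P_j(\omega)$."
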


\begin{proof}
First, consider the case in which $v_i\leq 1/2$. In this case, strategy $\overline{s}$ in $\Gamma^{\Med^v_{noA}}$ leads to expected utility $1/2$ to each player, and so the mediator is IR.
Furthermore, the strategy $\overline{s}$ is a BNE, since each player always
chooses $g_\omega$, which is dominant, and so the mediator is also IC.
Finally, since in this case  $W\left(\Gamma^{\Med^v_{noA}}\right)=1$, Observation~\ref{obs:noA-opt} implies
that the mediator is optimal.

Next, suppose $v_i>1/2$. Observe that, conditional on line 10 being activated, the expected utility of player $j$
is $1/2$, whereas conditional on line 11 or 12, the expected utility of player $j$ is $\alpha_j/2$ (since with probability
$\alpha_j$ she chooses the correct good, and player $i$ always chooses the correct good). Thus,
overall the
 expected utility of player $j$ is 
 $$\frac{2-2v_i-\alpha_j}{1-\alpha_j}\cdot\frac{1}{2}  
 +\left(1-\frac{2-2v_i-\alpha_j}{1-\alpha_j}\right)\cdot\frac{\alpha_j}{2} = 1-v_i.$$
Since player $i$ always chooses the correct good, it must be the case that $W\left(\Gamma^{\Med^v_{noA}}\right)=1$,
and so the utility of player $i$ is $v_i$. Finally, since $v_1+v_2\leq 1$, this implies that the utility of player $j$ 
is at least $v_j$. Thus, the mediator is IR. 
 
Next, observe that $\overline{s}$ in $\Gamma^{\Med^v_{noA}}$ is dominant for player $i$, since she always chooses $g_\omega$.
It is also optimal for player $j$, since she either also chooses $g_\omega$, or chooses the best response conditional on her information $P_j(\omega)$. Thus, the strategy $\overline{s}$ is a BNE, and so the mediator is IC.

Finally, since $W\left(\Gamma^{\Med^v_{noA}}\right)=1$, Observation~\ref{obs:noA-opt} implies
that the mediator is optimal.
 \end{proof}

\subsection{Coopetition Against an Amazon}\label{sec:with-amazon}
In this section we analyze the game with an Amazon.
As in Section~\ref{sec:without-amazon}, we first consider the question, under what base values $v$ does there exist an IR, IC mediator?
We then turn to our main result for the section, the construction of an optimal mediator. 

\subsubsection{When does an IR, IC mediator exist?} Once again, we begin with a simple observation. 
Since for any mediator $\Med$ and any
type $\omega$ the total welfare of players 1 and 2 is at most $2/3$ (since Amazon always knows the type,
offers the correct good, and gets at least $1/3$ of the surplus), it must be the case that $W(\Med)\leq 2/3$. Thus,
if $v_1+v_2>2/3$, then there does not exist any IR mediator.

However, and again as above, even if $v_1+v_2\leq 2/3$ there may still not exist a mediator that is both IR and IC. In the example from the
introduction, for instance, in the simple case with an Amazon and where $\alpha>2\beta\geq 2\gamma>4\delta$, 
the row retailer can guarantee herself a payoff of $(\alpha+\gamma)/3$ by ignoring the mediator and playing her dominant actions. This implies that
the most the column retailer can obtain is $2/3-(\alpha+\gamma)/3$. 
Any mediator that leads to a higher payoff to the column
retailer must therefore fail to satisfy the IC constraints.

The result below consists of bounds on the base values under which an IR, IC mediator exists. 

\begin{theorem}\label{thm:A-IR}
Suppose $v_i\geq v_j$. Then
an IR, IC mediator exists only if 
$v_i\leq E\left[u_i(g_\omega, s_j'(\omega))\right]$ and $v_j\leq 1-2v_i$.
\end{theorem}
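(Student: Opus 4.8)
The plan is to derive both inequalities from a single structural fact about the three-way competition, combined with the IR and IC hypotheses. Write $u_i^*$ and $u_j^*$ for the expected utilities of the two small players when they follow $\Med$ (i.e.\ under $\overline{s}$), so that IR gives $u_i^*\geq v_i$ and $u_j^*\geq v_j$. The key observation is that, because Amazon always offers the correct good, in every realized state the two small players' utilities satisfy $2u_i+u_j\leq 1$: if neither offers $g_\omega$ both get $0$; if only $j$ is correct then $u_i=0,\ u_j=1/2$; if only $i$ is correct then $u_i=1/2,\ u_j=0$; and if both are correct then each splits with Amazon and gets $1/3$, so $2u_i+u_j=1$. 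Taking expectations over $\omega$ and over the mediator's randomization yields the single inequality $2u_i^*+u_j^*\leq 1$.

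The bound $v_j\leq 1-2v_i$ is then immediate: by IR, $2v_i+v_j\leq 2u_i^*+u_j^*\leq 1$. Note that this step uses individual rationality only.

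For the first bound I would invoke incentive compatibility through the deviation $s_j'$. Since $s_j'$ depends only on $P_j(\omega)$, player $j$ can always play it in $\Gamma^\Med$, ignoring her recommendation. When $j$ plays $s_j'$ (while Amazon and player $i$ play as under $\overline{s}$), $j$ offers the correct good with probability $\alpha_j$, and on each such state at most three players are correct, so $j$ collects at least $1/3$; hence this deviation guarantees $j$ a payoff of at least $\alpha_j/3 = E[u_j(s_j'(\omega),g_\omega)]$. IC forbids this deviation from being profitable, so $u_j^*\geq \alpha_j/3$. Substituting into the structural inequality gives $2u_i^*\leq 1-\alpha_j/3$, i.e.\ $u_i^*\leq 1/2-\alpha_j/6$. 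A direct computation identifies the right-hand side as $E[u_i(g_\omega,s_j'(\omega))]=\alpha_j\cdot\tfrac13+(1-\alpha_j)\cdot\tfrac12$, and combining with IR ($v_i\leq u_i^*$) yields $v_i\leq E[u_i(g_\omega,s_j'(\omega))]$.

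The one genuinely clever step is the choice of the asymmetric combination $2u_i+u_j$, with weight $2$ on the higher-value player: this is precisely the weighting that is bounded by $1$ and tight whenever player $i$ offers the correct good, which is what makes it sharp enough to recover both inequalities at once. Everything else is a routine case analysis plus the remark that $s_j'$ is a feasible, easily-evaluated deviation. I would double-check the boundary bookkeeping---that $j$'s deviation is evaluated against $i$'s equilibrium play rather than a worst case, and that the pointwise inequality $2u_i+u_j\leq 1$ survives the mediator's randomization---but I do not expect either to cause trouble.
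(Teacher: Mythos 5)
Your proof is correct, and it takes a genuinely different---and more elementary---route than the paper's. The paper splits the two bounds across separate pieces of machinery: the bound $v_j\leq 1-2v_i$ is extracted from a characterization of the optimal solution to the linear program $\lp(v_1,v_2)$ (Lemmas~\ref{lem:LP} and~\ref{lem:opt-LP}, with a case split on whether $v_i\leq 1/3$), while the bound $v_i\leq E\left[u_i(g_\omega,s_j'(\omega))\right]$ is obtained by a counterfactual-information argument (Lemmas~\ref{lem:util_j} and~\ref{lem:sum-utils}): player $i$'s utility can only rise if she is handed full information while $j$ retains only $P_j(\omega)$, and one then evaluates that extremal configuration. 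You instead isolate the single pointwise inequality $2u_i+u_j\leq 1$ (with equality whenever $i$ is correct), get the second bound immediately from IR alone, and get the first bound by combining that inequality with the IC lower bound $u_j^*\geq \alpha_j/3$ arising from the deviation $s_j'$, closing with the identity $1/2-\alpha_j/6=\alpha_j/3+(1-\alpha_j)/2=E\left[u_i(g_\omega,s_j'(\omega))\right]$ (the same computation the paper performs inside Lemma~\ref{lem:gamma}). The steps you flagged for double-checking are indeed fine: $s_j'$ depends only on $P_j(\omega)$ and so is a legitimate strategy in $\Gamma^\Med$; its payoff is at least $\alpha_j/3$ against \emph{any} behavior of player $i$, hence in particular against $\overline{s}_i$, which is all IC requires; and the pointwise inequality is linear, so it survives expectation over $\omega$ and the mediator's randomization. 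What your argument buys is a short, self-contained proof that makes transparent exactly where IR and IC each enter, and it would let one dispense with Lemmas~\ref{lem:util_j} and~\ref{lem:sum-utils} for this purpose. What the paper's heavier route buys is reusable structure: Lemmas~\ref{lem:LP} and~\ref{lem:opt-LP} are needed again for the optimality and characterization results (Theorems~\ref{thm:with-amazon} and~\ref{thm:opt-mediator}), so that investment is not wasted even though it is overkill for the necessity direction alone.
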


As with Theorem~\ref{thm:noA-IR}, we note that if the base values $v$ are the equilibrium expected utilities of the game without a mediator, then the conditions in
Theorem~\ref{thm:A-IR} are satisfied. Again, this follows from the observation that the mediator that recommends to each player her equilibrium action is both IR and IC.


\Rnote{Maybe note again, how does a player's information help? By lowering the maximum possible $v_i$ to the other player, and hence increasing the bound in $v_j$, namely $1-2v_i$.}

Before proving the theorem we develop some notation and intuition.
For any mediator $\Med$ and player $k\in\{1,2\}$ let
$\beta_k^\Med = \Pr\left[(\Med(\omega)_k=g_\omega)\cap (\Med(\omega)_{3-k}\neq g_\omega)\right]$ and 
$\beta^\Med=\Pr\left[\Med(\omega)_k=\Med(\omega)_{3-k}= g_\omega\right]$.
With this notation, the expected utility of player $k$ under $\overline{s}$ in $\Gamma^\Med$ is $\beta^\Med_k/2 + \beta^\Med/3$: For any $\omega$, if player $k$ offers $g_\omega$
and the other player, $3-k$, does not, then the consumer chooses between players $k$ and 0, and so each gets $1/2$. If both offer $g_\omega$, 
the consumer chooses between players $k$, $3-k$, and 0, and so each gets $1/3$. 

An IR, IC mediator $\Med$ is one for which $\overline{s}$ is a BNE, and in which $\beta_k^\Med/2 + \beta^\Med/3 \geq v_k$ for both players.
In addition to the incentive constraints, we also have the constraints $\beta_i^\Med+\beta_j^\Med +\beta^\Med\leq 1$, and all three variables non-negative. 

Finally, the total welfare of players 1 and 2 in $\Med$ is
$W(\Med)=(\beta^\Med_1+\beta^\Med_2)/2 + 2\beta^\Med/3$, and so an optimal mediator is one that maximizes this sum subject to the IC and IR constraints above. Denote the linear program above, disregarding the IC constraint, 
as $\lp(v_1,v_2)$:

\begin{equation*}
\begin{array}{lr@{}ll}
\text{maximize}  & \frac{\beta_1+\beta_2}{2} + \frac{2\beta}{3} &\\
\text{subject to}& \frac{\beta_i}{2} + \frac{\beta}{3} &\geq v_i,  &i=1,2\\
                 &                                                \beta_i&\geq 0, &i=1 ,2\\
                                  &                                                \beta&\geq 0 &\\
                                  & \beta_1+\beta_2+\beta& \leq 1&

\end{array}
\end{equation*}

To prove Theorem~\ref{thm:A-IR}, we need to identify conditions on $v$ under which the LP is feasible. 
To do this, it will be helpful to begin with some lemmas that characterize the LP's optimal solution:

\begin{lemma}\label{lem:LP}
Suppose $v_i\geq v_j$ and that  $\lp(v_1,v_2)$ is feasible. Then any optimal solution $(\beta_1,\beta_2,\beta)$ to 
$\lp(v_1,v_2)$ satisfies:
\begin{itemize}
\item[(a)] $\beta_1+\beta_2+\beta=1$.
\item[(b)] $\beta_j=0$.
\item[(c)] If $v_i>1/3$ then $i$'s IR constraint binds: $\beta_i/2 + \beta/3 = v_i$.
\end{itemize}
\end{lemma}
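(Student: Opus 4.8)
The plan is to establish all three properties through local exchange (perturbation) arguments: assuming an optimal solution violates one of (a)--(c), I would exhibit a feasible direction of movement along which the objective strictly increases, contradicting optimality. The observation that guides every step is that the objective rewrites as $\frac{\beta_1+\beta_2}{2}+\frac{2\beta}{3} = \frac{\beta_1+\beta_2+\beta}{2}+\frac{\beta}{6}$. Hence, holding the total mass $\beta_1+\beta_2+\beta$ fixed, shifting weight onto the ``both-correct'' variable $\beta$ always helps, while increasing the total mass helps as well. This tells me the optimizer wants the aggregate mass as large as possible, and within that, as much weight as possible on $\beta$.

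For part (a), I would suppose $\beta_1+\beta_2+\beta<1$ at an optimum and simply raise $\beta$ by the slack amount. This leaves nonnegativity intact, only loosens both IR constraints (each containing $+\beta/3$), and raises the objective by $\tfrac{2}{3}$ times the increase---a contradiction. So the capacity constraint binds.

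For part (b), I would assume $\beta_j>0$ and split into two cases according to whether $j$'s IR constraint is slack or tight. If it is slack, I transfer a small mass $\delta$ from $\beta_j$ to $\beta$: the aggregate mass is unchanged, $i$'s IR only loosens, $j$'s IR decreases by $\delta/6$ and stays satisfied for small $\delta$, and the objective rises by $\delta/6$. The delicate case---and the main obstacle---is when $j$'s IR binds, since then the simple transfer would violate it. Here I would use a balanced three-way perturbation $\beta\mapsto\beta+\delta$, $\beta_j\mapsto\beta_j-\tfrac{2}{3}\delta$, $\beta_i\mapsto\beta_i-\tfrac{1}{3}\delta$, designed so that $j$'s IR stays exactly tight (its change is $\tfrac{1}{2}(-\tfrac{2}{3}\delta)+\tfrac{1}{3}\delta=0$) and the aggregate mass is preserved. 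This loosens $i$'s IR and raises the objective by $\delta/6$, again contradicting optimality. To know the move is feasible I need $\beta_i>0$, which comes for free from the ordering hypothesis: since $j$'s IR binds while $i$'s holds, $\tfrac{\beta_i}{2}+\tfrac{\beta}{3}\geq v_i\geq v_j=\tfrac{\beta_j}{2}+\tfrac{\beta}{3}$ forces $\beta_i\geq\beta_j>0$.

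For part (c), I would first invoke (a) and (b) to reduce to $\beta_j=0$ and $\beta_i=1-\beta$, so that $i$'s IR value becomes $\tfrac{\beta_i}{2}+\tfrac{\beta}{3}=\tfrac{1}{3}+\tfrac{\beta_i}{6}$. Feasibility together with $v_i>\tfrac{1}{3}$ then forces $\beta_i>0$ (otherwise the value would equal $1/3<v_i$, violating $i$'s IR). Now suppose toward contradiction that $i$'s IR is slack; I would raise $\beta$ by $\delta$ and lower $\beta_i$ by $\delta$ (keeping $\beta_j=0$): the aggregate mass stays $1$, $j$'s IR only loosens, $i$'s IR drops by the small amount $\delta/6$ yet remains satisfied since it had slack, and the objective rises by $\delta/6$---contradicting optimality. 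Hence $i$'s IR must bind. Throughout, the only genuine subtlety is ensuring each perturbed point stays feasible (nonnegativity together with the un-perturbed IR constraint), which is precisely where the ordering $v_i\geq v_j$ and, in (c), the hypothesis $v_i>1/3$ enter.
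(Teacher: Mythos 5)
Your proof is correct and follows essentially the same perturbation/exchange strategy as the paper's, with parts (a) and (c) virtually identical (you make explicit the $\beta_i>0$ check in (c) that the paper leaves implicit). The only difference is in part (b), where the paper avoids your case split by using the single move $(\beta_i,\beta_j,\beta)\mapsto(\beta_i-\beta_j,\,0,\,\beta+2\beta_j)$, which preserves total mass and strictly loosens \emph{both} IR constraints by $\beta_j/6$ regardless of whether $j$'s constraint binds; your two-case argument, including the balanced three-way perturbation in the tight case and the derivation of $\beta_i\geq\beta_j>0$ from the binding constraint and $v_i\geq v_j$, is equally valid.
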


\begin{proof}
Suppose (a) does not hold, and that $\beta_1+\beta_2+\beta<1$.  This means we can increase $\beta$ to $\beta'>\beta$ without violating any of the constraints.
The value of the $\lp$ is now $(\beta_1+\beta_2)/2 + 2\beta'/3>(\beta_1+\beta_2)/2 + 2\beta/3$, contradicting the optimality $(\beta_1,\beta_2,\beta)$.

Now suppose (a) holds, but (b) does not. If $\beta_i<\beta_j$ then the solution $(\beta_2,\beta_1,\beta)$ is also feasible
and optimal, so assume without loss of generality that $\beta_i\geq \beta_j$.
Consider now the solution $\beta_i' = \beta_i-\beta_j$, $\beta_j'=0$, and $\beta'=\beta+2\beta_j$. Since
$$\beta_i'/2+\beta'/3 = (\beta_i-\beta_j)/2 + (\beta+2\beta_j)/3 > \beta_i/2+\beta/3 \geq v_i$$ and 
$$\beta_j'/2+\beta'/3 = (\beta+2\beta_j)/3 > \beta_j/2+\beta/3 \geq v_j,$$
this new solution is feasible. Moreover,
$$(\beta_1'+\beta_2')/2 + 2\beta'/3>(\beta_1+\beta_2)/2 + 2\beta/3,$$
contradicting the optimality of  $(\beta_1,\beta_2,\beta)$.

Now suppose (a) and (b) hold, but that (c) does not. Then there is some $\eps>0$ such that $\beta_i'/2 + \beta'/3 > v_i$,
where $\beta_i'=\beta_i-\eps$ and $\beta'=\beta+\eps$, and so $(\beta_1',\beta_2',\beta')$ is also a feasible solution. Furthermore,
$$\beta_i'/2 + 2\beta'/3>\beta_i/2 + 2\beta/3,$$
contradicting the optimality of  $(\beta_1,\beta_2,\beta)$.
 \end{proof}

One implication of Lemma~\ref{lem:LP} is that at an optimal solution, $\beta$ is maximized. This follows from the observation that
the objective function is
$(\beta_1+\beta_2+\beta)/2 + \beta/6 = 1/2+\beta/6$ when $\beta_1+\beta_2+\beta =1$.
In addition, Lemma~\ref{lem:LP} pins down the value of the optimal solution to $\lp(v_1,v_2)$:
\begin{lemma}\label{lem:opt-LP}
Suppose $v_i\geq v_j$ and that  $\lp(v_1,v_2)$ is feasible. Then the value $\opt(v_1,v_2)$ of an optimal solution to $\lp(v_1,v_2)$ is one of the following:
\begin{itemize}
\item If $v_i\leq 1/3$ then $\opt(v_1,v_2)=2/3$.
\item If $v_i>1/3$ then $\opt(v_1,v_2) = 1-v_i$.
\end{itemize}
\end{lemma}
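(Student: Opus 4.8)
The plan is to exploit Lemma~\ref{lem:LP}, which pins down enough structure of any optimal solution that the optimal value can simply be read off. The first step I would take is to rewrite the objective. Using $\beta_1+\beta_2 = (\beta_1+\beta_2+\beta)-\beta$, the objective $(\beta_1+\beta_2)/2 + 2\beta/3$ equals $(\beta_1+\beta_2+\beta)/2 + \beta/6$; invoking Lemma~\ref{lem:LP}(a), which guarantees $\beta_1+\beta_2+\beta=1$ at any optimum, this becomes $\opt(v_1,v_2) = 1/2 + \beta/6$. Thus the whole problem reduces to determining $\beta$ at the optimum, and maximizing the objective is equivalent to maximizing $\beta$.

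Next I would split into two regimes by comparing $v_i$ to $1/3$. When $v_i \leq 1/3$, I would simply exhibit the candidate $\beta_1=\beta_2=0$, $\beta=1$: its $i$-constraint reads $\beta/3 = 1/3 \geq v_i$, and since $v_j \leq v_i \leq 1/3$ the $j$-constraint holds as well, so the point is feasible and attains $\beta = 1$. Because $\beta \leq \beta_1+\beta_2+\beta \leq 1$ always, $\beta=1$ is maximal, and therefore $\opt(v_1,v_2) = 1/2 + 1/6 = 2/3$.

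For the regime $v_i > 1/3$ I would feed all three parts of Lemma~\ref{lem:LP} into the simplified objective. Parts (a) and (b) give $\beta_i + \beta = 1$, and part (c) gives the binding constraint $\beta_i/2 + \beta/3 = v_i$. Substituting $\beta_i = 1-\beta$ into the latter yields $1/2 - \beta/6 = v_i$, i.e.\ $\beta = 3 - 6v_i$; plugging this back into $\opt(v_1,v_2)=1/2 + \beta/6$ gives $1 - v_i$, as claimed.

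The argument is largely mechanical, since Lemma~\ref{lem:LP} has already done the structural work, so I do not anticipate a serious obstacle. The one point requiring care is internal consistency in the second regime: I would confirm that $\beta = 3-6v_i$ is legitimate, namely that $\beta \geq 0$ and $\beta_i = 6v_i - 2 \geq 0$. The latter is immediate from $v_i > 1/3$; for the former I would note that feasibility of $\lp(v_1,v_2)$ forces $v_i \leq 1/2$ (the quantity $\beta_i/2+\beta/3$ can never exceed $1/2$ under $\beta_i+\beta\leq 1$ with nonnegative variables), whence $\beta = 3-6v_i \geq 0$. Because Lemma~\ref{lem:LP} asserts that \emph{every} optimal solution already has the form used above, no separate verification of the $j$-constraint is needed—it is inherited from the optimality the lemma guarantees.
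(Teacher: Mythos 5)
Your proposal is correct and follows essentially the same route as the paper: both reduce the problem via Lemma~\ref{lem:LP} to maximizing $\beta$ (the paper makes the same observation that the objective equals $1/2+\beta/6$ when $\beta_1+\beta_2+\beta=1$), exhibit the point $\beta=1$ in the first regime, and solve the binding constraints $\beta_i+\beta=1$, $\beta_i/2+\beta/3=v_i$ in the second. Your added sanity check that $3-6v_i\geq 0$ is harmless but not needed, since Lemma~\ref{lem:LP} is applied to a point that is already feasible.
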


\begin{proof}
Lemma~\ref{lem:LP} implies that at the optimal solution we have $\beta_i+\beta=1$, and so the value at the optimum is 
$\beta_i/2 +2\beta/3$.
In the first case, the optimal solution occurs at the point $\beta_1=\beta_2=0$, $\beta=1$: it maximizes the objective, while both
IR  constraints remain slack since $v_i\leq 1/3 = \beta/3$.

In the second case, Lemma~\ref{lem:LP} implies that  $\beta_i/2 + \beta/3 = v_i$ and that $\beta_i=1-\beta$.
These imply that $v_i=(1-\beta)/2 + \beta/3$, and so that $\beta=3-6v_i$ and $\beta_i = 1-3+6v_i$.
Plugging these into the objective yields
$$\opt(v_1,v_2) = \frac{\beta_i}{2}+\frac{2\beta}{3} = \frac{1-3+6v_i}{2}+\frac{2(3-6v_i)}{3} = 1-v_i.$$
 \end{proof}

Lemmas~\ref{lem:LP} and~\ref{lem:opt-LP} characterize the maximal values attainable by an IR mediator, and are the analogues
of the condition that $v_1+v_2\leq 1$ when there is {\em no} Amazon. However, before proving Theorem~\ref{thm:A-IR}, we need two additional lemmas.
Fix a mediator $\Med$, and consider
the hypothetical situation in which  player $i$ obtains additional information,
and in each state
$\omega$ she fully learns that the state is $\omega$ and so plays the dominant strategy $\hat{s}_i(\omega)=g_\omega$. In this new game, denote the best response of player $j$ as $s'_j(\Med)$, the strategy that, for every $P_j(\omega)$ and $\Med(\omega)_i$, chooses some $g$ that is most-likely correct conditional on $P_j(\omega)$ and $\Med(\omega)_i$.
Also observe that here, player $j$ is never the only player to choose $g_\omega$, since player $i$ always plays $g_\omega$. 

Player $j$ is worse off in this hypothetical game than when all follow $\Med$'s
recommendations:
\begin{lemma}\label{lem:util_j}
$E[u_j(s'_j(\Med), \hat{s}_i)] \leq E[u_j(\overline{s}_j, \overline{s}_i)]$ for any mediator $\Med$.
\end{lemma}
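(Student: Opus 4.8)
The plan is to reduce the inequality to an action-by-action comparison, holding player~$j$'s information fixed and changing only player~$i$'s behavior between the two scenarios. The single relevant difference is how many rivals offer the correct good when player~$j$ does: in the hypothetical game both player~$i$ and the Amazon always offer $g_\omega$, whereas under $\Med$ only the Amazon is guaranteed to. So the source of player~$j$'s loss is the tougher split, and the goal is to show that no informational advantage she gains in the hypothetical game can compensate for it.

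First I would write down the two per-action payoffs, conditional on player~$j$'s information $\sigma$ (her partition cell $P_j(\omega)$ together with the signal she receives). For a candidate good $g$, the hypothetical-game payoff is $\tfrac{1}{3}\Pr[g=g_\omega \mid \sigma]$, since whenever $g=g_\omega$ player~$j$ splits three ways with player~$i$ and the Amazon. Under $\Med$ the payoff from $g$ is $\tfrac{1}{2}\Pr[g=g_\omega,\ \Med(\omega)_i\neq g_\omega \mid \sigma]+\tfrac{1}{3}\Pr[g=g_\omega,\ \Med(\omega)_i=g_\omega \mid \sigma]$, because player~$i$ offers $g_\omega$ only on the event $\{\Med(\omega)_i=g_\omega\}$. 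As $\tfrac{1}{2}\ge\tfrac{1}{3}$, the mediated payoff from $g$ weakly exceeds the hypothetical payoff from $g$, for every $g$ and every $\sigma$. Next I would take the maximum over $g$ on each side. By definition $s_j'(\Med)$ chooses the most-likely-correct good, so it maximizes the hypothetical payoff and $E[u_j(s_j'(\Med),\hat{s}_i)]$ equals the expectation over $\sigma$ of $\max_g$ of the hypothetical payoff; and since following $\Med$ is optimal for player~$j$ against $\overline{s}_i$, the quantity $E[u_j(\overline{s}_j,\overline{s}_i)]=\beta_j^\Med/2+\beta^\Med/3$ equals the expectation of $\max_g$ of the mediated payoff. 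Pointwise domination is preserved under $\max_g$ and under expectation, which yields $E[u_j(s_j'(\Med),\hat{s}_i)]\le E[u_j(\overline{s}_j,\overline{s}_i)]$.

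The step I expect to be the main obstacle is upgrading the per-action domination to a domination of the two maximized (best-response) values, because player~$j$ re-optimizes between the scenarios and the two expectations are taken over her information. The comparison is clean only if both maxima are taken against the same information and if $\overline{s}_j$ genuinely attains the mediated maximum, i.e.\ if following the mediator is optimal for player~$j$; without this optimality the inequality can fail, since a mediator that recommends poorly makes the left side large while leaving $E[u_j(\overline{s}_j,\overline{s}_i)]$ small. I would therefore make this optimality explicit (it is exactly what holds for the incentive-compatible mediators to which the lemma is applied) and verify that the extra information player~$j$ conditions on in the hypothetical game cannot turn the weaker three-way split into a net gain. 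Once that is pinned down, the elementary $\tfrac{1}{2}\ge\tfrac{1}{3}$ inequality does all the real work.
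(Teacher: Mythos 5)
Your proof is correct and follows essentially the same route as the paper's: pointwise domination, for every action $g$ and conditional on player $j$'s information $(P_j(\omega),\Med(\omega)_j)$, of the mediated payoff over the hypothetical payoff via the $\tfrac{1}{2}$-versus-$\tfrac{1}{3}$ split, followed by maximizing over $g$ against the same information and taking expectations. Your observation that the step requires $\overline{s}_j$ to actually be a best response to $\overline{s}_i$ (so the lemma is really about IC mediators, which is how it is applied) is precisely the assumption the paper invokes implicitly when it asserts that ``in both sides of the inequality $j$ chooses an optimal action conditional on her information.''
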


\begin{proof}
Fix a particular state $\omega$. For every action $g$ of player $j$, who conditions on $P_j(\omega)$ and
$\Med(\omega)_j$, the utility 
$E[u_j(g, \hat{s}_i)|P_j(\omega),\Med(\omega)_j] \leq E[u_j(g, \overline{s}_i)|P_j(\omega),\Med(\omega)_j]$, since $i$ makes correct choices more often in the former. This implies that
$E[u_j(s'_j(\Med), \hat{s}_i)|P_j(\omega),\Med(\omega)_j] \leq E[u_j(\overline{s}_j, \overline{s}_i)|P_j(\omega),\Med(\omega)_j]$, since in both sides of the inequality $j$ chooses an optimal action
conditional on her information, but in the RHS the utility from every choice $g$ is higher than its corresponding utility in the LHS. 
This implies that $E[u_j(s'_j(\Med), \hat{s}_i)] \leq E[u_j(\overline{s}_j, \overline{s}_i)]$.
 \end{proof}

Although in the hypothetical game player $j$ is worse off, the sum of players' utilities is higher:
\begin{lemma}\label{lem:sum-utils}
The sum of players' utilities is higher under $(s'_j(\Med), \hat{s}_i)$ than under $(\overline{s}_j, \overline{s}_i)$ in $\Med$. 
\end{lemma}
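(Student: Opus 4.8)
The plan is to compute the total payoff of players $i$ and $j$ (the two small retailers) under each of the two profiles directly, in the $\beta$-notation introduced above, and then compare the two numbers. Since the Amazon always offers $g_\omega$, the total payoff of $i$ and $j$ in any fixed state depends only on how many of them offer the correct good: it is $0$ if neither does, $1/2$ if exactly one does, and $2/3$ if both do. Taking expectations over $\omega$, the total payoff under $\overline{s}$ is therefore $W(\Med)=\frac{\beta_i^\Med+\beta_j^\Med}{2}+\frac{2\beta^\Med}{3}$, exactly as recorded before.

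I would then evaluate the same quantity for the hypothetical profile $(s'_j(\Med),\hat{s}_i)$. Here player $i$ plays $g_\omega$ in every state, so setting $\rho=\Pr[s'_j(\Med)(\omega)=g_\omega]$, the probability that both are correct is $\rho$, the probability that only $i$ is correct is $1-\rho$, and the probability that only $j$ is correct is $0$. Hence the hypothetical total payoff equals $\frac{1-\rho}{2}+\frac{2\rho}{3}=\frac12+\frac{\rho}{6}$.

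The crux is a lower bound $\rho\ge\beta_j^\Med+\beta^\Med$. Because both player $i$ and the Amazon always offer $g_\omega$, player $j$'s payoff from any good is proportional to the probability that the good is correct; thus the best response $s'_j(\Med)$ is precisely the strategy maximizing $\Pr[\text{correct}]$ among strategies measurable with respect to $j$'s information $(P_j(\omega),\Med(\omega)_j)$. Since following the recommendation, $\overline{s}_j$, is one such strategy, we get $\rho\ge\Pr[\overline{s}_j(\omega)=g_\omega]=\beta_j^\Med+\beta^\Med$. Substituting this and using the feasibility constraint $\beta_i^\Med+\beta_j^\Med+\beta^\Med\le 1$, the claim $\frac12+\frac{\rho}{6}\ge W(\Med)$ reduces to $\frac12\left(1-\beta_i^\Med-\beta^\Med\right)-\frac{\beta_j^\Med}{3}\ge 0$, which holds since $1-\beta_i^\Med-\beta^\Med\ge\beta_j^\Med$ and $\frac12\ge\frac13$.

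The step I expect to be delicate is recognizing that the comparison cannot be made state by state: in a state where both retailers are correct under $\Med$ (total $2/3$) but $s'_j(\Med)$ happens to be wrong (total $1/2$), the hypothetical profile is actually worse. The inequality holds only in aggregate, and its real content is the correctness bound $\rho\ge\beta_j^\Med+\beta^\Med$ together with the fact that the large marginal gain $1/2$ from making $i$ correct on the event $\{i\text{ wrong}\}$ dominates the small marginal loss $1/6$ incurred wherever $j$'s best response sacrifices a second correct offer. Note also that Lemma~\ref{lem:util_j} points the other way—player $j$ is worse off—so it cannot be invoked directly; the entire welfare gain must be tracked through player $i$.
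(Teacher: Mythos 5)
Your proof is correct and follows essentially the same route as the paper's: both arguments rest on the welfare decomposition $\tfrac{1}{2}\Pr[\text{at least one correct}]+\tfrac{1}{6}\Pr[\text{both correct}]$ together with the observation that, under $(\hat{s}_i, s'_j(\Med))$, at least one player is always correct and the both-correct probability weakly increases because $s'_j(\Med)$ maximizes the probability of correctness over all strategies measurable with respect to $j$'s information, of which $\overline{s}_j$ is one. The only difference is presentational---you carry out explicitly, in the $\beta$-notation, the final algebraic comparison that the paper compresses into the remark that the objective equals $\tfrac12+\tfrac{\beta}{6}$ once $\beta_1+\beta_2+\beta=1$.
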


\begin{proof}
Again, we show that this holds conditional on every
$P_j(\omega)$ and $\Med(\omega)_j$. Fix $\omega$, and 
observe that 
$$\Pr\left[\hat{s}_i(\omega)=\overline{s}_{j}(\omega)= g_\omega|P_j(\omega),\Med(\omega)_j\right]\geq \Pr\left[\overline{s}_i(\omega)=\overline{s}_{j}(\omega)= g_\omega|P_j(\omega),\Med(\omega)_j\right],$$
since $\hat{s}_i$ always correctly chooses $g_\omega$. Furthermore, 
$$\Pr\left[\hat{s}_i(\omega)=s_{j}'(\Med)= g_\omega|P_j(\omega),\Med(\omega)_j\right]\geq \Pr\left[\hat{s}_i(\omega)=\overline{s}_{j}(\omega)= g_\omega|P_j(\omega),\Med(\omega)_j\right],$$
since $s_j'$ 
chooses the good that maximizes the probability of choosing $g_\omega$. Thus,
$$\Pr\left[\hat{s}_i(\omega)=s_{j}'(\Med)= g_\omega|P_j(\omega),\Med(\omega)_j\right]\geq \Pr\left[\overline{s}_i(\omega)=\overline{s}_{j}(\omega)= g_\omega|P_j(\omega),\Med(\omega)_j\right].$$

Now recall that $\lp(v_1,v_2)$ is maximized when $\beta$, the probability that both players correctly choose $g_\omega$ is maximized. 
Thus, the value of $\lp(v_1,v_2)$, the sum of players' utilities, is higher under 
$(s'_j, \hat{s}_i)$ than under $(\overline{s}_1,\overline{s}_2)$.
 \end{proof}

We can now prove Theorem~\ref{thm:A-IR}.
\begin{proof} 
%
For the first condition, the upper bound on $v_i$, fix some IR, IC mediator $\Med$. We will argue that
the utility to player $i$ under $\Med$ is at most $E\left[u_i(g_\omega, s_j'(\Med))\right]$. Lemma~\ref{lem:util_j}
implies that player $j$'s utility decreases if player $i$ additionally obtains all the information (even when $j$ responds optimally), and Lemma~\ref{lem:sum-utils} implies that in this same situation the sum of utilities increases. Together,
these lemmas imply that $i$'s utility is higher when she obtains all the information and $j$ only learns $P_j(\omega)$ and $\Med(\omega)_j$, than when $i$ also only learns $P_i(\omega)$ and $\Med(\omega)_i$. Furthermore, when $i$
has all the information,  she is best off when $j$ has the least information---that is, when $\Med(\omega)_j=P_j(\omega)$.
This is because any additional information that causes $j$ to improve upon her payoff will do so at the expense of 
player $i$'s payoff (by transferring weight from $\beta_i^\Med$ to $\beta^\Med$). Thus, in any IR, IC mediator, the highest possible utility to player $i$ is $E\left[u_i(g_\omega, s_j'(\omega))\right]$, since $\Med(\omega)_j=P_j(\omega)$
and so $s_j'(\omega)=s_j'(\Med)$.

The second condition, $v_j\leq 1-2v_i$, follows from Lemma~\ref{lem:opt-LP}. If $v_i\leq 1/3$, then the maximal
welfare of any IR mediator is $2/3$, and so $v_j\leq 2/3-v_i\leq 1-2v_i$. If $v_i>1/3$, then the lemma implies that the
maximal welfare of any IR mediator is at most $1-v_i$. Since player $i$ gets at least $v_i$, player $j$ can get at most $1-2v_i$.
\end{proof}

For the rest of Section~\ref{sec:with-amazon}, denote a pair $v$ that satisfies the conditions
in Theorem~\ref{thm:A-IR} as a {\em feasible $v$}.

\subsubsection{An optimal mediator}

We now describe an optimal mediator, at the same time showing that the conditions in Theorem~\ref{thm:A-IR}
are not only necessary but also sufficient.

We begin with a definition.
\begin{definition}
A mediator $\Med$ is {\em fully revealing to player $i$} if the mediator's recommendation to player $i$ always coincides with the optimal good: 
$\overline{s}_i(\omega)=g_\omega$ for every $\omega$. A mediator facilitates {\em full data-sharing} if it is fully revealing to both players $\mathcal{P}\setminus\{0\}$.
\end{definition}

Fix the set of players to be $\mathcal{P}=\{0,1,2\}$.
The idea underlying the mediator is the same as that of $\Med_{noA}^v$. Suppose $v_i\geq v_j$. 
Then if $v_i \leq 1/3$, then the mediator will facilitate
full data-sharing,
and so each player will offer the correct good in every state. If $v_i>1/3$ then player $i$ will obtain all the information and player $j$ will obtain only partial information:
she will get recommendation $g_\omega$  with  probability less than 1, and otherwise will get recommendation  $s_j'(\omega)$. 
The challenge is to maximize the probability of recommendation $g_\omega$ subject to the IR constraints, and to do this in such a way that following the 
recommendations is a BNE.

We now formally describe the mediator, and then show that it is indeed IR, IC,  and optimal.
The main differences from $\Med_{noA}^v$ in Mediator~\ref{alg1} are lines 5 and 9.
\begin{algorithm}[H]
\caption{Mediator for $\mathcal{P}=\{0,1,2\}$ with jointly complete information}\label{alg2}
\begin{algorithmic}[1]
\Procedure{$\Med^v_A$}{$V,W$} \Comment{$V=P_1(\omega)$ and $W=P_2(\omega)$ for realized $\omega\in\Omega$}
\State $\omega \leftarrow V\cap W$
\State $i \leftarrow \arg\max_{k\in\{1,2\}} v_k$ 
\State $j \leftarrow 3-i$
	\If{$v_i\leq 1/3$}
	\Return $\left(g_\omega, g_\omega\right)$	 \Comment{Full data-sharing}
\Else
	\State $g_j \leftarrow s_j'(\omega)$
	\State Choose $\gamma \in \left[0,1\right]$ uniformly at random.
	\If{$\gamma <\frac{3-6v_i-\alpha_j}{1-\alpha_j}$}
		\State \Return $(g_\omega,g_\omega)$
	\ElsIf{$i=1$} \Return $(g_\omega, g_j)$
		\Else~\Return $(g_j, g_\omega)$
	\EndIf
\EndIf
\EndProcedure
\end{algorithmic}
\end{algorithm}

\begin{theorem}\label{thm:with-amazon}
For any feasible $v$, the mediator $\Med^v_{A}$ above is IR, IC,  and optimal. 
\end{theorem}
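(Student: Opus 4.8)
The plan is to mirror the two-case structure of the proof of Theorem~\ref{thm:without-amazon}, using Lemma~\ref{lem:opt-LP} as the engine for optimality. The key observation is that $\opt(v_1,v_2)$ is an upper bound on $W(\Med')$ for \emph{every} IR mediator $\Med'$: any such $\Med'$ induces a point $(\beta_1^{\Med'},\beta_2^{\Med'},\beta^{\Med'})$ that is feasible for $\lp(v_1,v_2)$ and whose objective value is exactly $W(\Med')$. Hence it suffices to verify that $\Med^v_A$ is IR and IC and that $W(\Med^v_A)=\opt(v_1,v_2)$; optimality is then immediate. Throughout I would exploit that, since player $i$ always receives and plays $g_\omega$ and Amazon always plays $g_\omega$, player $j$'s payoff from offering any good $g$ equals $\tfrac13\Pr[g=g_\omega\mid\text{info}]$ (she joins a crowd of three exactly when correct, and earns $0$ otherwise), so her best response reduces to naming the good she deems most likely to be correct.

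For the case $v_i\le 1/3$, the mediator facilitates full data-sharing, so $\beta=1$ and each player earns $1/3$. Since $v_j\le v_i\le 1/3$, the scheme is IR; since playing the revealed $g_\omega$ is dominant, $\overline{s}$ is a BNE and the scheme is IC; and $W=2/3=\opt(v_1,v_2)$ by Lemma~\ref{lem:opt-LP}, giving optimality.

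For the case $v_i>1/3$, I would first check that the threshold $p:=\frac{3-6v_i-\alpha_j}{1-\alpha_j}$ on line~9 lies in $[0,1)$. Computing $E[u_i(g_\omega,s_j'(\omega))]=\tfrac{\alpha_j}{3}+\tfrac{1-\alpha_j}{2}=\tfrac{3-\alpha_j}{6}$, the feasibility bound $v_i\le E[u_i(g_\omega,s_j'(\omega))]$ is precisely $3-6v_i-\alpha_j\ge0$, so $p\ge0$, while $v_i>1/3$ forces $p<1$. Player $i$ always plays $g_\omega$; player $j$ plays $g_\omega$ on line~10 (probability $p$) and $s_j'(\omega)$ on lines~11--12 (probability $1-p$, correct with probability $\alpha_j$). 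I would then compute $j$'s expected utility as $\tfrac13\bigl[p(1-\alpha_j)+\alpha_j\bigr]=\tfrac13(3-6v_i)=1-2v_i\ge v_j$, where the inequality is the second feasibility condition of Theorem~\ref{thm:A-IR}; this gives IR for $j$. Identifying $\beta=p(1-\alpha_j)+\alpha_j=3-6v_i$, $\beta_j=0$, and $\beta_i=1-\beta=6v_i-2$, I obtain $W=\beta_i/2+2\beta/3=1-v_i=\opt(v_1,v_2)$, matching Lemma~\ref{lem:opt-LP}; consequently $i$'s utility is $W-(1-2v_i)=v_i$, so IR holds for $i$ too, and optimality follows.

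The step I expect to be the main obstacle is verifying IC, i.e.\ obedience, for player $j$ in the second case. For $i$ it is trivial, since she always receives $g_\omega$ and playing it is dominant. For $j$, let $g^\ast:=s_j'(\omega)$ be the default good on her partition element $P_j(\omega)$. If $j$'s recommendation $r$ differs from $g^\ast$, it can only have been generated on line~10 (lines~11--12 always recommend $g^\ast$), so $r=g_\omega$ exactly, and obeying is optimal. The delicate case is $r=g^\ast$, where line~10 and lines~11--12 pool: here I would perform a Bayesian update showing that, conditional on $P_j(\omega)$ and on receiving $g^\ast$, the weight on $g_\omega=g^\ast$ is (weakly) largest, because types with $g_\omega=g^\ast$ always generate the recommendation $g^\ast$, whereas types with $g_\omega=g'\ne g^\ast$ generate it only on lines~11--12 (a $(1-p)$ fraction of the time). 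Since $g^\ast$ was already most likely correct given $P_j(\omega)$ alone, this conditioning only reinforces it, so obeying remains $j$'s best response and $\Med^v_A$ is IC.
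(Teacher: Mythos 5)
Your proposal is correct and follows essentially the same route as the paper's proof: the same two-case split on $v_i\lessgtr 1/3$, the same computation showing $\beta^{\Med}=3-6v_i$ and $W=1-v_i$, and the same appeal to Lemma~\ref{lem:opt-LP} (via the fact that every IR mediator induces a feasible point of $\lp(v_1,v_2)$) for optimality. Your Bayesian-update argument for player $j$'s obedience when the pooled recommendation $s_j'(\omega)$ arrives is a more careful justification of the IC step than the paper's one-line assertion, and it is valid since conditioning on that recommendation only reweights the posterior toward $\{g_\omega=s_j'(\omega)\}$.
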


An interesting feature of $\Med^v_{A}$ is that it either facilitates full data-sharing, or is fully revealing to one of the players.
These features are inherent to {\em any} optimal IR mediator:
\begin{theorem}\label{thm:opt-mediator}
For any feasible $v$ and any optimal IR mediator $\Med$, one of the following holds:
\begin{itemize}
\item $\Med$ is fully revealing to both players 1 and 2.
\item $\Med$ is fully revealing to one of the players, and the IR constraint binds for that player (i.e., she is indifferent between $\Med$ and her base value).
\end{itemize}
\end{theorem}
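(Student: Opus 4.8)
The plan is to exploit the linear-programming characterization already developed for the with-Amazon setting rather than to argue about mediators directly. The central observation is that any IR mediator $\Med$ induces a triple $(\beta_1^\Med,\beta_2^\Med,\beta^\Med)$ that is a feasible point of $\lp(v_1,v_2)$: the IR inequalities are exactly the LP's payoff constraints $\beta_k^\Med/2+\beta^\Med/3\geq v_k$, non-negativity holds because these are probabilities, and $\beta_1^\Med+\beta_2^\Med+\beta^\Med\leq 1$ since the three events (``only $1$ correct'', ``only $2$ correct'', ``both correct'') are disjoint. Moreover the LP objective evaluated at this triple is precisely $W(\Med)$. Hence the optimal value $\opt(v_1,v_2)$ upper-bounds the welfare of every IR mediator, and since $\Med^v_A$ from Theorem~\ref{thm:with-amazon} is an IR mediator attaining it, an optimal IR mediator $\Med$ must satisfy $W(\Med)=\opt(v_1,v_2)$. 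Therefore its induced triple is an \emph{optimal} solution of $\lp(v_1,v_2)$, and I can bring the full force of Lemmas~\ref{lem:LP} and~\ref{lem:opt-LP} to bear.

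The next step is to set up a dictionary between the LP quantities and the ``fully revealing'' property. Writing $i$ for the player with $v_i\geq v_j$, the key identity is $\Pr[\overline{s}_k(\omega)=g_\omega]=\beta_k^\Med+\beta^\Med$ for each player $k$, so $\Med$ is fully revealing to $k$ exactly when $\beta_k^\Med+\beta^\Med=1$. Lemma~\ref{lem:LP} gives $\beta_j^\Med=0$ and $\beta_1^\Med+\beta_2^\Med+\beta^\Med=1$, which immediately yields $\beta_i^\Med+\beta^\Med=1$; thus every optimal IR mediator is automatically fully revealing to the high-value player $i$. The theorem then reduces to deciding whether it is also fully revealing to $j$.

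To finish I would split on the threshold $v_i\leq 1/3$ versus $v_i>1/3$. When $v_i\leq 1/3$, the objective restricted to solutions with $\beta_j^\Med=0$ and unit mass equals $2/3-\beta_i^\Med/6$, so optimality (value $2/3$ by Lemma~\ref{lem:opt-LP}) forces $\beta_i^\Med=0$ and $\beta^\Med=1$; then $\beta_j^\Med+\beta^\Med=1$, so $\Med$ is fully revealing to both players, giving the first bullet. When $v_i>1/3$, Lemma~\ref{lem:LP}(c) says player $i$'s IR constraint binds, i.e.\ $\beta_i^\Med/2+\beta^\Med/3=v_i$; combined with $\beta_i^\Med+\beta^\Med=1$ this pins down $\beta^\Med=3-6v_i<1$, so $\Pr[\overline{s}_j(\omega)=g_\omega]=\beta^\Med<1$ and $\Med$ is fully revealing to $i$ but not to $j$, with the IR constraint binding for that player $i$, giving the second bullet.

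I expect the only genuine subtlety to be the reduction in the first paragraph: arguing cleanly that an optimal IR mediator must \emph{attain} the LP value, so that its triple $(\beta_1^\Med,\beta_2^\Med,\beta^\Med)$ is LP-optimal rather than merely feasible. This is where one must simultaneously invoke that $\lp(v_1,v_2)$ relaxes the IR-constrained welfare problem (dropping IC can only enlarge the feasible set) and that the explicit mediator $\Med^v_A$ witnesses attainability of $\opt(v_1,v_2)$. Once that reduction is secured, the remainder is a direct translation through Lemmas~\ref{lem:LP} and~\ref{lem:opt-LP}, with the case analysis on $v_i$ versus $1/3$ being routine.
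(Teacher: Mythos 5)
Your proposal is correct and follows essentially the same route as the paper: reduce to the linear program $\lp(v_1,v_2)$, note that the optimum is attained by the IR, IC mediator $\Med^v_A$ so any optimal IR mediator's induced triple $(\beta_1^\Med,\beta_2^\Med,\beta^\Med)$ must itself be LP-optimal, and then read off the structure from Lemmas~\ref{lem:LP} and~\ref{lem:opt-LP} with the case split at $v_i=1/3$. Your explicit dictionary ``fully revealing to $k$ iff $\beta_k^\Med+\beta^\Med=1$'' and the observation that $\beta^\Med=3-6v_i<1$ in the second case are slightly more detailed than the paper's write-up but add nothing materially different.
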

Note that Mediator~\ref{alg1} for the case of no Amazon also satisfies
the features described in Theorem~\ref{thm:opt-mediator}. In that case, however, these features do not characterize all mediators,
as there exist mediators that are optimal for the case of no Amazon that do not have them. One such
mediator is the one described in the example from the introduction. With an Amazon, in contrast, these features characterize
all optimal IR mediators.

We begin by showing that the mediator above is optimal.
Our approach is to show that $W(\Med^v_A)$ is equal to the value of the optimal solution to $\lp(v_1,v_2)$, and that the strategy $\overline{s}$ in $\Med^v_A$ is a BNE.
Observe that $\lp(v_1,v_2)$ does not contain the IC constraints, implying that they are not binding, and so only the
IR constraints are relevant for optimality. This
will no longer be the case without the assumption of jointly complete information, analyzed in Section~\ref{sec:general} below.

We already laid most of the groundwork for the proofs of Theorems~\ref{thm:with-amazon} and~\ref{thm:opt-mediator} when we proved Theorem~\ref{thm:A-IR}, but
do need one last lemma:

\begin{lemma}\label{lem:gamma}
Suppose $v$ is feasible and $v_i>1/3$. Then $\alpha_j=\Pr\left[s'_j(\omega)=g_\omega\right]\leq 3-6v_i$.
\end{lemma}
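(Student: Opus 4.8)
The plan is to read off the bound directly from the feasibility condition supplied by Theorem~\ref{thm:A-IR}, so almost all the work is a single expected-utility computation. Since $v$ is feasible, the first condition of that theorem gives $v_i \leq E\left[u_i(g_\omega, s_j'(\omega))\right]$. It therefore suffices to show that this expectation equals $\tfrac{1}{2} - \tfrac{\alpha_j}{6}$; the claimed inequality $\alpha_j \leq 3 - 6v_i$ then follows by rearranging.

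To evaluate the expectation I would recall that we are in the Amazon setting, so player $0$ is present and, by its (unique dominant) strategy, always offers the correct good $g_\omega$. The quantity $E\left[u_i(g_\omega, s_j'(\omega))\right]$ is player $i$'s expected payoff in the hypothetical where $i$ always offers $g_\omega$ and $j$ plays $s_j'$. Fixing a state $\omega$, both player $0$ and player $i$ offer the correct good, while player $j$ offers the correct good exactly when $s_j'(\omega) = g_\omega$. Hence the number $k$ of players offering $g_\omega$ is $3$ when $s_j'(\omega) = g_\omega$ and $2$ otherwise, so $u_i = 1/3$ in the former case and $u_i = 1/2$ in the latter.

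Taking the expectation over $\omega\sim\pi$ and using $\alpha_j = \Pr[s_j'(\omega)=g_\omega]$ yields
$$E\left[u_i(g_\omega, s_j'(\omega))\right] = \alpha_j\cdot\frac{1}{3} + (1-\alpha_j)\cdot\frac{1}{2} = \frac{1}{2} - \frac{\alpha_j}{6}.$$
Combining with feasibility gives $v_i \leq \tfrac{1}{2} - \tfrac{\alpha_j}{6}$, which rearranges to $\alpha_j \leq 3 - 6v_i$, as desired. I would note that the hypothesis $v_i > 1/3$ is not actually needed for the inequality itself; it serves only to make the bound informative, since it forces $\alpha_j \leq 3-6v_i < 1$, which is what guarantees that the denominator $1-\alpha_j$ appearing in Mediator~\ref{alg2} is positive (presumably the reason the lemma is stated under that hypothesis).

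There is essentially no obstacle here beyond the bookkeeping in the middle step: the one thing to be careful about is that Amazon's presence makes the relevant denominators $3$ and $2$, rather than the $2$ and $1$ one would use in the no-Amazon analysis. Everything else is immediate from the definitions of $s_j'$ and $\alpha_j$ together with Theorem~\ref{thm:A-IR}.
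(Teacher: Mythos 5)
Your proof is correct and follows essentially the same route as the paper's: invoke the feasibility condition $v_i \leq E\left[u_i(g_\omega, s_j'(\omega))\right]$, compute that this expectation equals $\alpha_j/3 + (1-\alpha_j)/2$, and rearrange. Your additional remarks (spelling out why the denominators are $3$ and $2$ because of Amazon's presence, and observing that the hypothesis $v_i>1/3$ only serves to make the bound nontrivial for use in Mediator~\ref{alg2}) are accurate elaborations of steps the paper leaves implicit.
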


\begin{proof}
Since $v$ is feasible, it holds that $E\left[u_i(g_\omega, s_j'(\omega))\right]\geq v_i$.
Next, note that $E[u_i(g_\omega, s_j'(\omega))]=\alpha_j/3 + (1-\alpha_j)/2$. Together, these facts imply that $\alpha_j\leq 3-6v_i$, as claimed.
%
%
%
 \end{proof}

We are now ready to prove Theorem~\ref{thm:with-amazon}.
\begin{proof} 
First, consider the case in which $v_i\leq 1/3$. In this case, strategy $\overline{s}$ in $\Gamma^{\Med^v_{A}}$ leads to expected utility $1/3$ to each player, which
satisfies the IR constraint. Furthermore,  strategy $\overline{s}$ is a BNE, since each player always
chooses $g_\omega$, which is dominant. Finally, $W\left(\Med^v_{A}\right)$ is maximal by Lemma~\ref{lem:opt-LP}. 

Next, suppose $v_i>1/3$. Observe that $\overline{s}$ in $\Gamma^{\Med^v_{A}}$ is dominant for player $i$, since she always chooses $g_\omega$.
It is also optimal for player $j$, since she either also chooses $g_\omega$, or chooses the best response conditional on her information $P_j(\omega)$. Thus, the mediator is IC.

We now show that $\Med^v_{A}$ is optimal. For simplicity, henceforth denote by $\Med = \Med^v_{A}$. 
First note that $\beta^\Med_j=0$. Next, by Lemma~\ref{lem:gamma}, it holds that $\alpha_j\leq 3-6v_i$. Since
$v_i\in (1/3, 1/2]$, this implies that $\alpha_j\in [0,1)$.
These, in turn imply that  
$$\frac{3-6v_i-\alpha_j}{1-\alpha_j} \in [0,1],$$
so line 9 of Mediator~\ref{alg2} is valid. From that line we have that  
$$\beta^\Med = \alpha_j + (1-\alpha_j)\cdot\frac{3-6v_i-\alpha_j}{1-\alpha_j} = 3-6v_i,$$
since both players choose $g_\omega$ when $s'_j(\omega)=g_\omega$ and when $s'_j(\omega)\neq g_\omega$ but
$\gamma<(3-6v_i-\alpha_j)/(1-\alpha_j)$. This means that 
$$W(\Med)=\frac{\beta_i^\Med}{2}+\frac{2\beta^\Med}{3} =\frac{1-3+6v_i}{2}+\frac{2(3-6v_i)}{3} = 1-v_i,$$
which, by Lemma~\ref{lem:opt-LP}, is the value of the optimal solution to $\lp(v_1,v_2)$.

Finally, since $i$'s utility is $v_i$ and so  $j$'s utility is $1-2v_i$, the mediator is IR.
\end{proof}

Finally, we can also prove Theorem~\ref{thm:opt-mediator}.
\begin{proof} 
Suppose first that $v$ is such that $v_j\leq v_i\leq 1/3$. In this case, full data-sharing is IC, and so any optimal mediator $\Med$
must have $W(\Med)\geq 2/3$. However, the maximum value of $\lp(v_1,v_2)$ is $2/3$, and the only way to obtain this is to have $\beta^\Med=1$.
This is equivalent to full data-sharing, and so the only mediator that is optimal is the full data-sharing one.

Now suppose that $v_i>1/3$. 
By the proof of Theorem~\ref{thm:with-amazon}, an optimal solution to $\lp(v_1,v_2)$ is attainable by some mediator, in particular by $\Med_A^v$. Thus, any mediator
$\Med$ that is optimal must also yield $W(\Med)$ that is equal to the optimal value of $\lp(v_1,v_2)$. By Lemma~\ref{lem:LP}, any optimal solution to $\lp(v_1,v_2)$ must have $\beta_j=0$ and $\beta_i/2 + \beta/3 = v_i$. Thus, the mediator $\Med$ must satisfy these as well: the first equality implying that $\Med$
is fully revealing to player $i$, and the second that player $i$'s IR constraint binds.
\end{proof}

\section{No Jointly Complete Information}\label{sec:general}
We now drop the assumption that players 1 and 2 have jointly complete information. Formally,
for a given type $\omega$ let $S(\omega)=P_1(\omega)\cap P_2(\omega)$, and call each such $S(\omega)$
a {\em segment}. The assumption of jointly complete information states that all segments are singletons,
and in this section we consider the more general setting in which segments may contain more than one type
of consumer.

This setting presents new challenges,
as there is now a new conflict between optimal strategies and welfare. To see this, consider some segment
$S=\{\omega^1,\omega^2\}$ for which $\Pr[\omega^1] =4\cdot \Pr[\omega^2]$. Then, conditional on segment $S$, both players
have a dominant strategy, namely, to offer $g=g_{\omega^1}$. To see this, observe that a player's utility from offering 
$g_{\omega^1}$ is at least $\Pr[\omega^1]/2$ regardless of the other's action, the utility from offering $g_{\omega^2}$ is at most 
$\Pr[\omega^2]$, and by assumption $\Pr[\omega^1]/2 = 2 \Pr[\omega^2]>\Pr[\omega^2]$.
This leads to total welfare 
$\Pr[\omega^1]$. However,
if players were to separate rather than pool---one offering $g_{\omega^1}$ and the other $g_{\omega^2}$---then the total
welfare would be $\Pr[\omega^1]+\Pr[\omega^2]$, which is higher. Note that this conflict between optimal strategies
and welfare does not occur when players have jointly complete information, since then the pooling action, which
is dominant, is also welfare-maximizing. Finally, although we illustrated this conflict for the setting without an
Amazon, it is of course also present with an Amazon.

This conflict between optimal strategies and welfare maximization makes the design of mediators significantly more
challenging. In principle, one could formulate the problem as a complex linear program, whose solution could
be used to find the optimal mediator. Here, however, we take a different approach. We will show that {\em simple}
mediators---in particular, variants of the ones from Section~\ref{sec:baseline} for the setting with
 jointly complete information---are
{\em approximately} optimal. And while this approach entails some loss in terms of optimality, we will show that
these simple mediators are not only approximately optimal relative to the optimal IR, IC mediator, but
rather that they are approximately optimal relative to a higher benchmark---namely, the welfare that can be achieved by {\em any} mediator, even one that
is not IR or IC. Then, to complement our results, we will show that the approximation factors our mediators achieve are tight or nearly-tight,
and that no other IR and IC mediator can in general achieve a better approximation factor relative to that same benchmark.

We begin with some notation. For any given segment $S(\omega)$, denote by 
$g_\omega^1$ the good $g$ that has the highest probability $\pr{g=g_\omega|S(\omega)}$ in $S(\omega)$, 
and by $g_\omega^2$ the good
with the second highest probability, under the prior $\pi$. If there are several such goods, fix two such goods arbitrarily. Next,
let  $\phi_\omega^1=\pr{g_\omega^1=g_\omega|S(\omega)}$ be the probability of $g_\omega^1$ conditional on $S(\omega)$,
and $\phi_\omega^2=\pr{g_\omega^2=g_\omega|S(\omega)}$ be the respective probability of $g_\omega^2$.
We will also use the notation that, for a given segment $S\in\{S(\omega):\omega\in\Omega\}$, the goods $g_S^1$ and $g_S^2$ are the
goods with the highest and second-highest probabilities in $S$, with $\phi_S^1$ and $\phi_S^2$ 
their respective conditional probabilities.
Finally, let $\phi^1$ be the total (unconditional) weight of goods $g_S^1$ over all segments $S$, and $\phi^2$ be
the total weight of goods $g_S^2$: Formally,
$$\phi^k = \sum_{S\in\{S(\omega):\omega\in\Omega\}} \phi_S^k\cdot\pr{S}~~~~~~\mbox{ for each } k\in\{1,2\}.$$
\Rnote{Maybe need only $g_S^k$ and not $g_\omega^k$.}

Now, as our main interest lies in coopetition against an Amazon, we will focus here on the setting with an Amazon.
\Rnote{Maybe: A similar analysis for the setting without an Amazon appears in Appendix~\ref{sec:general-noA}.}
We note that,
while we drop the assumption that players 1 and 2 have jointly complete information, we still assume that
the Amazon player 0 has complete information. That is, even though in any given segment $S$ players
1 and 2 do not know which type $\omega\in S$ is realized, player 0 does know, and offers each such type
the correct good $g_\omega$.

We proceed as follows. We begin in Section~\ref{sec:noJCI-1} with the design of an optimal mediator for a particular 
setting of parameters,
namely, for the case in which $\phi_S^1\leq \frac{3}{2} \cdot \phi_S^2$ for every segment $S$. We then turn to our main analysis
in Section~\ref{sec:noJCI-2},
for the case in which $\phi_S^1> \frac{3}{2}\cdot \phi_S^2$ for every segment $S$. There we design the approximately optimal
simple mediator mentioned above, and analyze the tightness of the approximation factor. 
Finally, in Section~\ref{sec:noJCI-3} we show how to combine the two mediators in order to obtain an approximately optimal 
mediator for the general setting, without restrictions on the segments.

\subsection{Optimal Mediator for $\phi_S^1\leq \frac{3}{2} \cdot \phi_S^2$}\label{sec:noJCI-1}
We begin with the simpler case in which  $\phi_S^1\leq \frac{3}{2} \cdot \phi_S^2$ for every segment $S$.
This case is simpler because here, conditional on any given segment, 
there is no conflict between optimal strategies and welfare maximization. In particular, both involve separating,
with one player offering $g_S^1$ and the other offering $g_S^2$:
\begin{claim}\label{claim:no-conflict}
Conditional on any segment $S$, if $\phi_S^1\leq \frac{3}{2} \cdot \phi_S^2$ then the separating
strategy profile is both welfare maximizing and an equilibrium.
\end{claim}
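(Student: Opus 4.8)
The plan is to analyze the decision problem conditional on a fixed segment $S$, where the Amazon offers the correct good $g_\omega$ in every state (by its dominant strategy from Lemma~1) and players 1 and 2 each choose a single good given the conditional distribution on $S$. First I would record the relevant per-profile utilities of a player in $\{1,2\}$. If a player offers a good $g$ that the other regular player does not, she splits the consumer with the Amazon whenever $g=g_\omega$, obtaining $\tfrac12\Pr[g=g_\omega\mid S]$; if both regular players offer the same good $g$, the three players split and she obtains $\tfrac13\Pr[g=g_\omega\mid S]$. In particular, the separating profile in which one player offers $g_S^1$ and the other $g_S^2$ yields (player-1-and-2) welfare $\tfrac12(\phi_S^1+\phi_S^2)$.

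For welfare maximization, I would note that two players cover at most two distinct goods, and that welfare is linear in mixed strategies, so it suffices to compare pure profiles. These come in two shapes: covering two distinct goods, whose welfare is maximized by the two most likely goods and equals $\tfrac12(\phi_S^1+\phi_S^2)$; or pooling on one good, whose welfare is maximized by $g_S^1$ and equals $\tfrac23\phi_S^1$. Separating is therefore welfare-optimal exactly when $\tfrac12(\phi_S^1+\phi_S^2)\ge \tfrac23\phi_S^1$, i.e.\ when $\phi_S^1\le 3\phi_S^2$, which is implied by the hypothesis $\phi_S^1\le\tfrac32\phi_S^2$.

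For the equilibrium part I would verify the two unilateral deviations. Against player 2 offering $g_S^2$ (and the Amazon offering $g_\omega$), player 1's payoff from any good $g\neq g_S^2$ is $\tfrac12\Pr[g=g_\omega\mid S]$, maximized at $g_S^1$ with value $\tfrac12\phi_S^1$, whereas deviating to $g_S^2$ yields only $\tfrac13\phi_S^2$; since $\tfrac12\phi_S^1\ge\tfrac12\phi_S^2>\tfrac13\phi_S^2$, the good $g_S^1$ is a best response unconditionally. Symmetrically, against player 1 offering $g_S^1$, player 2's best alternative to $g_S^1$ is $g_S^2$ with value $\tfrac12\phi_S^2$, while pooling on $g_S^1$ gives $\tfrac13\phi_S^1$; hence $g_S^2$ is a best response precisely when $\tfrac12\phi_S^2\ge\tfrac13\phi_S^1$, that is, when $\phi_S^1\le\tfrac32\phi_S^2$.

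I expect the binding step, and the source of the constant $\tfrac32$, to be this last inequality: it is player $2$'s temptation to abandon the low-probability good $g_S^2$ and pool with player $1$ on the high-probability good $g_S^1$ that the hypothesis must rule out. The welfare comparison requires only the weaker bound $\phi_S^1\le 3\phi_S^2$, so once the equilibrium inequality holds, welfare-optimality follows automatically, establishing both halves of the claim.
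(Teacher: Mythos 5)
Your proof is correct and follows essentially the same route as the paper's: compare the pooling profile (welfare $\tfrac{2}{3}\phi_S^1$) with the separating profile (welfare $\tfrac12(\phi_S^1+\phi_S^2)$), then check the two unilateral deviations, with player 2's temptation to pool giving the binding inequality $\tfrac12\phi_S^2\ge\tfrac13\phi_S^1$. The only difference is that you justify more explicitly why only these two profile shapes need to be compared for welfare (linearity over mixed strategies, at most two goods covered), a step the paper passes over by calling them the ``only two reasonable choices.''
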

\begin{proof}
We will compare the only two reasonable player choices for welfare maximization
conditional on segment $S$: (i) pooling on $g_S^1$, or
(ii) separating, with one player offering $g_S^1$ and the other $g_S^2$.
\begin{itemize}
\item[(i)] Pooling: Here each of players 1 and 2 obtains utility $\phi_S^1/3$---when type $\omega\in S$
with $g_\omega = g_S^1$ is realized (which happens with probability $\phi_S^1$), the players split the surplus of 1 with player 0, and so each gets $1/3$. Welfare of players 1 and 2 is thus $\frac{2}{3}\cdot\phi_S^1$.
\item[(ii)] Separating: Here the player who offers $g_S^1$ obtains utility $\phi_S^1/2$---when type $\omega\in S$
with $g_\omega = g_S^1$ is realized, that player splits the surplus with player 0. Similarly, the player  
who offers $g_S^2$ obtains utility $\phi_S^2/2$. Welfare of players 1 and 2 is thus $(\phi_S^1+\phi_S^2)/2$.
\end{itemize}
Observe that, when $\phi_S^1\leq \frac{3}{2} \cdot \phi_S^2$, we have that $(\phi_S^1+\phi_S^2)/2 > \frac{2}{3}\cdot\phi_S^1$, and so the welfare maximizing choices are the separating ones.

In addition to being welfare-maximizing, the separating actions also form an equilibrium (conditional on $S$).
Fix a segment $S$, and suppose player 1 offers $g_S^1$ and player 2 offers $g_S^2$. Then player 1
clearly does not wish to deviate, as she currently obtains utility $\phi_S^1/2$, a deviation to $g_S^2$
will lead to utility $\phi_S^2/3$, and a deviation to any other good will lead to utility at most $\phi_S^1/2$ (since
$g_S^1$ has maximal probability in $S$). Player 2 also does not wish to deviate, as she currently obtains
utility $\phi^2_S/2$, a deviation to $g^1_S$ will lead to utility $\phi_S^1/3$, and by assumption
$\phi^2_S/2\geq \phi_S^1/3$.
 \end{proof}

Given the lack of conflict between welfare maximization and equilibrium shown in Claim~\ref{claim:no-conflict},
we can now construct optimal mediators for this parameter setting. The mediator is straightforward: it
facilitates full data-sharing between the players, and recommends the actions that lead to a separating
equilibrium.
\begin{algorithm}[H]
\caption{Mediator for $\mathcal{P}=\{0,1,2\}$ without jointly complete information}\label{alg3}
\begin{algorithmic}[1]
\Procedure{$\Med^v_{1}$}{$S$} \Comment{$S=P_1(\omega)\cap P_2(\omega)$ for realized $\omega\in\Omega$}
\If{$\phi^1=\phi^2$} \Return $(g_S^1,g_S^2)$
\Else
\State $i \leftarrow \arg\max_{k\in\{1,2\}} v_k$ 
\State $j \leftarrow 3-i$
\State Choose $\gamma \in \left[0,1\right]$ uniformly at random.
\If{$\gamma <\frac{2v_i - \phi^2}{\phi^1-\phi^2}$}
	\If{$i=1$} \Return $(g_S^1,g_S^2)$ 
	\Else~\Return $(g_S^2,g_S^1)$
	\EndIf
\ElsIf{$i=1$} \Return $(g_S^2,g_S^1)$ 
\Else~\Return $(g_S^1,g_S^2)$
\EndIf
\EndIf

%
\EndProcedure
\end{algorithmic}
\end{algorithm}

\begin{theorem}\label{thm:no-conflict}
Suppose $v_1+v_2\leq (\phi^1+\phi^2)/2$ and $\max\{v_1,v_2\}\leq \phi^1/2$. 
Then the mediator $\Med^v_{1}$ above is IR, IC,  and optimal. 
\end{theorem}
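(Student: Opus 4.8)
The plan is to verify the three properties by direct computation of the induced utilities, reusing the per-segment analysis already established in Claim~\ref{claim:no-conflict}. The key structural fact is that, conditional on any realized segment $S$, the mediator $\Med^v_1$ always has one of players $1,2$ offer $g_S^1$ and the other offer $g_S^2$; it differs from plain separation only in \emph{which} player is assigned the role $g_S^1$, an assignment governed by the coin $\gamma$ and the threshold $p=\frac{2v_i-\phi^2}{\phi^1-\phi^2}$. I would first record that this threshold is well-behaved: the hypothesis $\max\{v_1,v_2\}=v_i\le\phi^1/2$ gives $p\le 1$, while $p\ge 0$ holds exactly when $v_i\ge\phi^2/2$. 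In the remaining boundary range $v_i<\phi^2/2$ the coin never fires, player $i$ is always assigned $g_S^2$, and the argument below goes through with her utility $\phi^2/2\ge v_i$ in place of $v_i$.

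Next I would compute utilities and deduce IR and optimality together. Since Amazon always offers $g_\omega$ and the two players separate on $g_S^1,g_S^2$, conditional on segment $S$ the player in the $g_S^1$ role earns $\phi_S^1/2$, the player in the $g_S^2$ role earns $\phi_S^2/2$, and the conditional welfare of players $1,2$ is $(\phi_S^1+\phi_S^2)/2$; summing over segments gives $W(\Med^v_1)=(\phi^1+\phi^2)/2$. Overall the $g_S^1$ role is worth $\phi^1/2$ and the $g_S^2$ role is worth $\phi^2/2$, so assigning player $i$ the $g_S^1$ role with probability $p$ yields her
\[ p\cdot\frac{\phi^1}{2}+(1-p)\cdot\frac{\phi^2}{2}=\frac{\phi^2}{2}+p\cdot\frac{\phi^1-\phi^2}{2}=v_i \]
by the choice of $p$; player $j$ then receives the complement $(\phi^1+\phi^2)/2-v_i$, which is at least $v_j$ precisely by the hypothesis $v_1+v_2\le(\phi^1+\phi^2)/2$. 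This establishes IR. For optimality I would argue that $(\phi^1+\phi^2)/2$ upper-bounds $W(\Med')$ for \emph{every} mediator $\Med'$: conditional on $S$, any pure joint action gives the two players at most $\tfrac23\phi_S^1$ (pooling) or at most $\tfrac12(\phi_S^1+\phi_S^2)$ (separating), and the case assumption $\phi_S^1\le\tfrac32\phi_S^2$ makes the separating value the larger one---this is exactly the welfare half of Claim~\ref{claim:no-conflict}. Hence no mediator beats $\Med^v_1$, so it is optimal.

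The main obstacle is IC, because a player observes only her partition element and her recommended good, not the realized segment, so the per-segment equilibrium property of Claim~\ref{claim:no-conflict} must be lifted to this coarser information. I would fix player $i$'s information $(P_i(\omega),g)$ and assume player $j$ follows $\overline{s}$. The segments consistent with recommending $g$ to player $i$ are exactly the $S'\subseteq P_i(\omega)$ with $g\in\{g_{S'}^1,g_{S'}^2\}$, and in each such $S'$ player $j$ is recommended the complementary good. The crux is to show that playing $g$ weakly dominates every alternative good $h$ \emph{conditional on each such $S'$}: when $g=g_{S'}^1$, player $j$ plays $g_{S'}^2$ and $g$ earns $\phi_{S'}^1/2$, which no $h$ can exceed since $g_{S'}^1$ is most likely in $S'$; when $g=g_{S'}^2$, player $j$ plays $g_{S'}^1$ and $g$ earns $\phi_{S'}^2/2$, which beats the deviation $h=g_{S'}^1$ (worth $\phi_{S'}^1/3$) precisely because $\phi_{S'}^1\le\tfrac32\phi_{S'}^2$, and beats every other $h$ since $g_{S'}^2$ is second-most likely. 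Because $g$ is a best response conditional on \emph{every} segment in the posterior support, it is a best response in expectation over player $i$'s posterior; the symmetric argument applies to player $j$, so $\overline{s}$ is a BNE and $\Med^v_1$ is IC. The delicate point throughout is this aggregation, which succeeds only because the recommended good is optimal in \emph{both} of its possible roles---and the role-$g_S^2$ case is exactly where the hypothesis $\phi_S^1\le\tfrac32\phi_S^2$ is needed.
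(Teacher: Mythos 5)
Your proof is correct and follows essentially the same route as the paper's: compute the induced utilities to get IR (the threshold is calibrated so that player $i$ receives exactly $v_i$ and player $j$ the complement), invoke the welfare half of Claim~\ref{claim:no-conflict} for optimality, and the equilibrium half for IC. You are in fact somewhat more careful than the paper on two points it leaves implicit---the boundary case $v_i<\phi^2/2$ where the threshold is negative (so the coin never fires and player $i$ simply gets $\phi^2/2>v_i$), and the lifting of the per-segment best-response property to a player's actual information $(P_i(\omega),g)$, which succeeds precisely because the recommended good is a best response conditional on \emph{every} segment in the posterior support.
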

Observe that the conditions on $v$ in Theorem~\ref{thm:no-conflict} are also necessary for the existence
of any IR mediator, since the maximal welfare attainable by any mediator in this setting is $(\phi^1+\phi^2)/2$,
and the maximal utility any single player can attain is $\phi^1/2$.

\begin{proof} 
Suppose first that $\phi^1=\phi^2$, and so the conditional in line 2 of Mediator~\ref{alg3} is true.
In this case, each player obtains utility $\phi^1/2$ under $\Med^v_{1}$, and the mediator is thus IR. In addition, 
since by Claim~\ref{claim:no-conflict} separation is an equilibrium on every segment $S$, the mediator is
also IC. Finally, the total welfare to the players is $(\phi^1+\phi^2)/2$, which is maximal.

Next, suppose $\phi^1>\phi^2$. 
Observe that, in this case, the mediator mixes between utility $\phi^1/2$ to player $i$ and $\phi^2/2$ to player $j$,
and utility $\phi^2/2$ to player $i$ and $\phi^1/2$ to player $j$, depending on the value of $\gamma$. Regardless
of the realizations of $\gamma$, the total welfare to the players is always $(\phi^1+\phi^2)/2$. This is the
optimal welfare attainable.

Next, the expected utility of player $i$ in $\Med^v_{1}$ is
$$\frac{2v_i - \phi^2}{\phi^1-\phi^2}\cdot \frac{\phi^1}{2}+\left(1-\frac{2v_i - 
\phi^2}{\phi^1-\phi^2}\right)\frac{\phi^2}{2} = v_i.$$
Furthermore, since $v_1+v_2\leq (\phi^1+\phi^2)/2$ and total welfare is $(\phi^1+\phi^2)/2$, the utility
to player $j$ is at least $v_j$. Thus, the mediator is IR.

Finally, since by Claim~\ref{claim:no-conflict} separation is an equilibrium on every segment $S$, the mediator is IC.
\end{proof}

\subsection{Approximately Optimal Mediator for $\phi_S^1> \frac{3}{2} \cdot \phi_S^2$}\label{sec:noJCI-2}
Suppose now that  $\phi_S^1> \frac{3}{2} \cdot \phi_S^2$ for every segment $S$.
This case is more complicated since, conditional on any  segment, 
there may be a conflict between optimal strategies and welfare maximization:
\begin{claim}\label{claim:conflict}
Conditional on any segment $S$, if $\phi_S^1> \frac{3}{2} \cdot \phi_S^2$ then pooling on $g_S^1$ is 
the dominant strategy. If $\phi_S^1\leq 3 \cdot \phi_S^2$ then separating is
welfare-maximizing, whereas if $\phi_S^1> 3 \cdot \phi_S^2$ then pooling is welfare-maximizing.
\end{claim}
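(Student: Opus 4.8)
The plan is to reduce the entire claim to conditional-on-$S$ payoff computations, using the fact that the Amazon always offers the correct good $g_\omega$. The starting point is to record, for a player who offers a good $g$ while conditioning on segment $S$, her conditional expected utility. She earns a positive amount only in the event $\{g=g_\omega\}$, of conditional probability $\pr{g=g_\omega\mid S}$; in that event the Amazon is also offering $g_\omega$, so the good is split among at least two players. Writing $h$ for the opponent's offer, the player's conditional utility is $\pr{g=g_\omega\mid S}/2$ when $h\neq g$ and $\pr{g=g_\omega\mid S}/3$ when $h=g$, since in the latter case all three of player, opponent, and Amazon offer the correct good. I would first make these numbers explicit for the two relevant goods $g_S^1$ and $g_S^2$, recalling that $\pr{g_S^1=g_\omega\mid S}=\phi_S^1$, that $\pr{g_S^2=g_\omega\mid S}=\phi_S^2$, and that $\phi_S^1\geq \pr{g=g_\omega\mid S}$ for every good $g$.

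For the dominant-strategy claim I would argue pointwise in the opponent's action $h$, rather than merely comparing a worst case to a best case. Fixing $h$, offering $g_S^1$ yields $\phi_S^1/2$ when $h\neq g_S^1$ and $\phi_S^1/3$ when $h=g_S^1$, while any alternative good $g$ yields at most $\pr{g=g_\omega\mid S}/2\leq\phi_S^2/2$. When $h\neq g_S^1$, offering $g_S^1$ gives $\phi_S^1/2\geq\phi_S^2/2$ and so dominates, using $\phi_S^1\geq\phi_S^2$. The binding case is $h=g_S^1$, where offering $g_S^1$ gives $\phi_S^1/3$ and the best alternative gives $\phi_S^2/2$; here $\phi_S^1/3>\phi_S^2/2$ is exactly the hypothesis $\phi_S^1>\tfrac{3}{2}\phi_S^2$. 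Hence $g_S^1$ is a strictly dominant action conditional on $S$, proving the first sentence.

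For the welfare claims I would argue that only two families of profiles can be welfare-optimal for players 1 and 2 conditional on $S$: pooling both players on a single good, or separating them across two distinct goods. Covering a good with one player captures $\pr{g=g_\omega\mid S}/2$ of conditional welfare, while covering it with both captures $2\pr{g=g_\omega\mid S}/3$, so the best pooling profile pools on $g_S^1$ for welfare $\tfrac{2}{3}\phi_S^1$, and the best separating profile uses $g_S^1$ and $g_S^2$ for welfare $\tfrac{1}{2}(\phi_S^1+\phi_S^2)$. The comparison $\tfrac{1}{2}(\phi_S^1+\phi_S^2)\geq\tfrac{2}{3}\phi_S^1$ rearranges to $\phi_S^1\leq 3\phi_S^2$, which yields both remaining sentences: separating is welfare-maximizing exactly when $\phi_S^1\leq 3\phi_S^2$ (with equality at the threshold), and pooling is welfare-maximizing when $\phi_S^1>3\phi_S^2$.

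I expect the only genuinely delicate step to be the dominant-strategy argument: one must verify domination for each opponent action separately, and in particular identify $h=g_S^1$ as the binding case that produces the exact threshold $\tfrac{3}{2}$. The welfare comparison is a routine two-case computation once one observes that covering more than the two highest-probability goods, or duplicating any good other than $g_S^1$, can never help, since placing a second player on a good only reduces total welfare through the additional split with the Amazon.
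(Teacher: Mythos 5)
Your proposal is correct and follows essentially the same route as the paper: the welfare part is the identical comparison of $\tfrac{2}{3}\phi_S^1$ against $\tfrac{1}{2}(\phi_S^1+\phi_S^2)$, and your dominance argument, though phrased pointwise in the opponent's action, reduces to the paper's single binding inequality $\phi_S^1/3>\phi_S^2/2$ (worst case for $g_S^1$ versus best case for any alternative, using $\pr{g=g_\omega\mid S}\leq\phi_S^2$ for $g\neq g_S^1$). The extra care you take in justifying why only pooling on $g_S^1$ and separating on $\{g_S^1,g_S^2\}$ need be compared is a small but welcome addition over the paper's "only two reasonable choices."
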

\begin{proof}
Again, we will compare the only two reasonable player choices
conditional on segment $S$: (i) pooling on $g_S^1$, or
(ii) separating, with one player offering $g_S^1$ and the other $g_S^2$.
As in the proof of Claim~\ref{claim:no-conflict}, pooling leads to individual utilities $\phi_S^1/3$ and so
welfare $\frac{2}{3}\cdot\phi_S^1$, whereas separating leads to individual utilities $\phi_S^1/2$ and
$\phi_S^2/2$ and so welfare $(\phi_S^1+\phi_S^2)/2$.
Then $\frac{2}{3}\cdot\phi_S^1>(\phi_S^1+\phi_S^2)/2$ if and only if $\phi_S^1> 3 \cdot \phi_S^2$,
as claimed.

In terms of optimality, however, observe that choosing $g_S^1$ leads to utility at least $\phi^1_S/3$, whereas
choosing any other good leads to utility at most $\phi^2_S/2$.  
The former is greater than the latter whenever $\phi_S^1> \frac{3}{2} \cdot \phi_S^2$, and so constitutes
a dominant action (conditional on $S$).
 \end{proof}

Our goal is to design a mediator that is approximately optimal relative to a particular benchmark. We denote
the benchmark $\OPT$, which is the maximal welfare attainable by any mediator, even disregarding the IR
and IC constraints. We will show that the mediator below, which is a variant of the mediator $\Med^v_A$ for the case
of jointly complete information, is IR, IC, and obtains welfare at least $\frac{3}{4}\cdot\OPT$. We will then show
that this approximation factor is tight.

Before describing the mediator, recall the strategy $s_j'$ from Section~\ref{sec:baseline} that,
for every $P_j(\omega)$, chooses one of the goods that is most likely to be correct. Here we will use a different
formulation
of $s_j'$. In particular, let $s_j'(\omega)\in \arg\max_{g\in G}E\left[u_j(g,g^1_\omega)|P_j(\omega)\right]$.
That is, $s_j'$ is an optimal action for player $j$ conditional on knowing only $P_j(\omega)$, in the hypothetical
situation where $i$ always plays the most-likely correct good $g^1_\omega$ in every segment. 
Also, let $\alpha_j$ and $\alpha_i$ be the expected utilities of players $j$ and $i$ in this hypothetical situation: 
$\alpha_j=E\left[u_j(s_j'(\omega),g_\omega^1)\right]$ and 
$\alpha_i=E\left[u_i(g_\omega^1,s_j'(\omega))\right]$.
Note that, under jointly complete information, the formulation
of $s_j'$ is identical to the one given in Section~\ref{sec:baseline}. Without jointly complete information, however, they may differ.

\begin{algorithm}[H]
\caption{Mediator for $\mathcal{P}=\{0,1,2\}$ without jointly complete information}\label{alg4}
\begin{algorithmic}[1]
\Procedure{$\Med^v_2$}{$V,W$} \Comment{$V=P_1(\omega)$ and $W=P_2(\omega)$ for realized $\omega\in\Omega$}
\State $S \leftarrow V\cap W$
\State $i \leftarrow \arg\max_{k\in\{1,2\}} v_k$ 
\State $j \leftarrow 3-i$
	\If{$v_i\leq \phi^1/3$}
	\Return $\left(g^1_S, g^1_S\right)$	 \Comment{Full data-sharing}
\Else
	\State $g_j \leftarrow s_j'(\omega)$
	\State Choose $\gamma \in \left[0,1\right]$ uniformly at random.
	\If{$\gamma <\frac{3\alpha_i-3v_i}{3\alpha_i-\phi^1}$}
		\State \Return $(g^1_S,g^1_S)$
	\ElsIf{$i=1$} \Return $(g^1_S, g_j)$
		\Else~\Return $(g_j, g^1_S)$
	\EndIf
\EndIf
\EndProcedure
\end{algorithmic}
\end{algorithm}


\begin{theorem}\label{thm:conflict}
Let $v_i\geq v_j$, and suppose $v_i\leq E\left[u_i(g^1_\omega, s_j'(\omega))\right]$ and 
$v_j\leq \phi^1/2-v_i$. 
Then the mediator $\Med^v_{2}$ above is IR, IC, and has welfare $W(\Med^v_{2})\geq \frac{3}{4}\cdot \OPT$. 
\end{theorem}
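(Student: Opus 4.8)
My plan is to check the three properties---IC, IR, and the approximation guarantee---in each of the two branches of $\Med^v_2$: the full-sharing branch $v_i\le\phi^1/3$ and the randomized branch $v_i>\phi^1/3$. First I would record that line~9 is well-posed in the second branch: feasibility gives $\alpha_i=E[u_i(g^1_\omega,s_j'(\omega))]\ge v_i$, so the numerator $3(\alpha_i-v_i)\ge0$; and $3\alpha_i\ge3v_i>\phi^1$ makes the denominator positive while $3v_i\ge\phi^1$ keeps the ratio at most $1$, so the pooling probability $p=\tfrac{3\alpha_i-3v_i}{3\alpha_i-\phi^1}$ lies in $[0,1]$. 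Since player $i$ is always recommended $g^1_S$, she earns $\phi^1/3$ when $j$ pools and $\alpha_i$ when $j$ separates, so her payoff is
\[
p\cdot\frac{\phi^1}{3}+(1-p)\alpha_i=\frac{(\alpha_i-v_i)\phi^1+(3v_i-\phi^1)\alpha_i}{3\alpha_i-\phi^1}=v_i ,
\]
a computation I would carry out explicitly.

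For incentive compatibility I would invoke Claim~\ref{claim:conflict}: as $\phi^1_S>\tfrac32\phi^2_S$ on every segment, $g^1_S$ is dominant conditional on $S$. Player $i$ is always told $g^1_S$, and every segment consistent with her information carries the same recommended good, so following is a best response. A pooling recommendation to $j$ is dominant for the same reason; for a separating recommendation $s_j'(\omega)$ I would argue that, because $\gamma$ is independent of $\omega$, conditioning on the separating event leaves $j$'s belief over $\omega$ given $P_j(\omega)$ unchanged, and since $i$ plays $g^1$ in both branches, $s_j'(\omega)$---$j$'s best response to ``$i$ plays $g^1$'' given $P_j(\omega)$---stays optimal; the one case needing care is a recommendation coinciding with some $g^1_S$, where $j$'s posterior mixes separating and pooling states, but the recommended good maximizes utility on each part and hence on the mixture. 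For individual rationality, $U_i=v_i$ gives $U_j=W(\Med^v_2)-v_i$, so it suffices that $W(\Med^v_2)\ge\phi^1/2$: then $U_j\ge\phi^1/2-v_i\ge v_j$ by feasibility. This holds because the welfare on every segment is at least $\phi^1_S/2$ (either both play $g^1_S$ for total $2\phi^1_S/3$, or $i$ alone plays $g^1_S$ for at least $\phi^1_S/2$), so both the pooling welfare $2\phi^1/3$ and the separating welfare $\alpha_i+\alpha_j$ exceed $\phi^1/2$, as does their convex combination. The full-sharing branch is immediate, since both players then earn $\phi^1/3\ge v_i\ge v_j$.

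For the approximation I would first observe that a mediator only ever sees the segment $S$, so the unconstrained optimum is $\OPT=\sum_S\Pr[S]\max\{\tfrac{2\phi^1_S}{3},\tfrac{\phi^1_S+\phi^2_S}{2}\}$, and $\phi^2_S<\tfrac23\phi^1_S$ yields $\OPT\le\tfrac56\phi^1$. Since $W(\Med^v_2)$ is a convex combination of the pooling welfare $2\phi^1/3$ and the separating welfare $\alpha_i+\alpha_j$, it is enough to bound each below by $\tfrac34\OPT$. The pooling bound is immediate from $\OPT\le\tfrac56\phi^1$. For the separating bound I would work segment by segment: on a segment where the benchmark pools ($\phi^1_S>3\phi^2_S$) the welfare is at least $\phi^1_S/2=\tfrac34\cdot\tfrac{2\phi^1_S}{3}$ no matter what $j$ does; and on a segment where the benchmark separates but $j$'s response either pools or separates onto a good of probability at least $\tfrac38\phi^2_S$, a short computation using $\phi^1_S>\tfrac32\phi^2_S$ again gives the pointwise bound.

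The main obstacle is the residual case: segments where the benchmark separates while $j$'s selfish best response plays a low- or zero-probability good, because that good pays off on other segments lumped into the same partition element $P_j$. There the pointwise ratio can fall below $\tfrac34$, so the bound must be restored globally. My plan is a charging argument within each $P_j$: the optimality of $s_j'$ forces $j$ to be collecting, on the segments where she does pool or separate well, utility that outweighs what she forgoes on the abandoned segment; translating this through $\phi^1_S>\tfrac32\phi^2_S$ (equivalently $\phi^1_S\ge\tfrac97\phi^2_S$), the welfare surplus on the former segments dominates the welfare deficit $\tfrac{3\phi^2_S-\phi^1_S}{8}$ on the latter. Summing these charges over all partition elements delivers $\alpha_i+\alpha_j\ge\tfrac34\OPT$. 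I expect this charging step---tying $j$'s utility-optimality to a welfare comparison across segments---to be the delicate part of the proof; the constant $\tfrac34$ itself is forced by the worst pointwise ratio $\tfrac{\phi^1_S/2}{2\phi^1_S/3}=\tfrac34$ on pooling-optimal segments, which the tightness example will approach.
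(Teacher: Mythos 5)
Your treatment of well-posedness, the computation showing player $i$'s payoff equals $v_i$, the IC argument via Claim~\ref{claim:conflict}, and the IR argument via the pointwise welfare bound $W\geq\phi^1/2$ all track the paper's proof closely and are correct. Where you diverge is the approximation bound: the paper argues segment by segment, comparing the guaranteed welfare $\phi_S^1/2$ against the per-segment benchmark ($\tfrac{2}{3}\phi_S^1$ when $\phi_S^1>3\phi_S^2$, and $(\phi_S^1+\phi_S^2)/2$ otherwise), whereas you bound $\OPT\leq\tfrac{5}{6}\phi^1$ globally and then try to bound the pooling welfare and the separating welfare $\alpha_i+\alpha_j$ separately. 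Your pooling bound and your per-segment bounds are fine; the problem is exactly the ``residual case'' you flag.

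The charging argument you propose for that case is the gap, and it cannot be completed, because the inequality it is meant to establish---$\alpha_i+\alpha_j\geq\tfrac{3}{4}\OPT$---is false in general. Take $P_2(\omega)=\Omega$ (player $j$ uninformed), $P_1(\omega)=S(\omega)$, and $m$ equiprobable segments, each with two distinct goods satisfying $\phi_S^1=(\tfrac{3}{2}+\eps)\phi_S^2$, all goods distinct across segments. Then $s_j'$ is a single good and collects utility on only one segment, so $\alpha_j=O(1/m)$ and $\alpha_i+\alpha_j\to\phi^1/2$, while $\OPT=(\phi^1+\phi^2)/2\to\tfrac{5}{6}\phi^1$; the ratio tends to $\tfrac{3}{5}<\tfrac{3}{4}$. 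The reason no charging scheme can rescue this is that the optimality of $s_j'$ only compares her chosen good against each alternative \emph{one at a time}: she forgoes $\tfrac{1}{2}\phi_S^2\Pr[S]$ on every abandoned segment, and these deficits accumulate over $m-1$ segments, while the utility she actually collects lives on essentially one segment and does not scale. (With $v_i=\alpha_i$ and $v_j=0$ the hypotheses of the theorem are met and the pooling probability is zero, so the mediator's welfare is exactly $\alpha_i+\alpha_j$.) Note that the paper's own per-segment step for segments with $\phi_S^1\leq 3\phi_S^2$ substitutes $\phi_S^2\leq\phi_S^1/3$ into the denominator, which is the reverse of what that case assumption gives, so your instinct that something global is needed here was pointing at a real soft spot---but the fix you sketch does not close it, and on this class of instances only a $\tfrac{3}{5}$ ratio is attainable by $\Med^v_2$.
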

We note that, unlike the conditions on $v$ in Theorems~\ref{thm:without-amazon},~\ref{thm:with-amazon}, and~\ref{thm:no-conflict}, the conditions here are not necessary but only sufficient.

\begin{proof} 
Consider first the case in which $v_i\leq \phi^1/3$. Here, the utility to each player is $\phi^1/3\geq v_i\geq v_j$, 
and so the mediator is IR. Furthermore, players always pool on $g_\omega^1$ which is an equilibrium by Claim~\ref{claim:conflict}. Thus, in this case the mediator is also IC. Finally, total welfare here is $\frac{2}{3}\cdot \phi^1$. 
Below we will show that this is within a $3/4$ factor of $\OPT$.

Before turning to the approximation factor, consider the case in which $v_i>\phi^1/3$. Observe that the total welfare
in this case is at least $\phi^1/2$. If on a given segment $S(\omega)$ the players pool on $g_\omega^1$, then
welfare here is $\frac{2}{3}\cdot \phi^1_\omega > \frac{1}{2}\cdot \phi^1_\omega$. If the players separate, then
the welfare here is at least $\phi^1_\omega/2$, since player $i$ always chooses $g_\omega^1$. 
Thus, total welfare is at least $\phi^1/2$.
 
Next, the expected utility of player $i$ in this case is
$$\frac{3\alpha_i-3v_i}{3\alpha_i-\phi^1}\cdot \frac{\phi^1}{3}+
\left(1-\frac{3\alpha_i-3v_i}{3\alpha_i-\phi^1}\right)\alpha_i = v_i.$$
Furthermore, since the total welfare is at least $\phi^1/2$, player $j$ obtains welfare at least
$\phi^1/2-v_i\geq v_j$, and so the mediator is IR here as well.
In addition, since by Claim~\ref{claim:conflict} the action $g_\omega^1$ is dominant in every segment $S(\omega)$, and
in the mediator player $i$ always chooses $g_\omega^1$ and player $j$ either pools or best responds given
her information, the mediator is IC.

Finally, we turn to the approximation factor. By Claim~\ref{claim:conflict}, the maximal welfare possible on a segment 
$S$ in which $\phi_S^1\leq 3 \cdot \phi_S^2$ is $(\phi_S^1+\phi_S^2)/2$, and our mediator guarantees welfare
at least $\frac{\phi_S^1}{2}$. Hence, here we have approximation ratio at least
$$\frac{\phi_S^1/2}{(\phi_S^1+\phi_S^2)/2} \geq \frac{\phi_S^1/2}{(\phi_S^1+\phi_S^1/3)/2}=\frac{3}{4}.$$
In addition, by Claim~\ref{claim:conflict}, the maximal welfare possible on a segment 
$S$ in which $\phi_S^1> 3 \cdot \phi_S^2$ is $2\phi_S^1/3$, and our mediator guarantees welfare
at least $\frac{\phi_S^1}{2}$. Hence, here we have approximation ratio at least
$$\frac{\phi_S^1/2}{2\phi_S^1/3} =\frac{3}{4}.$$
Since a ratio of $3/4$ is achieved for each segment, it is also achieved overall, and $W(\Med^v_2)\geq \frac{3}{4}\cdot\OPT$.
\end{proof}

We now show that the approximation factor of $3/4$ is tight, by giving an example wherein there does not exist an IR mediator, not even one that is not IC,
that achieves a factor higher than $3/4$.
\begin{example}\label{ex:IR}
Fix some $\eps>0$, and suppose there are $\lceil 1/\eps\rceil+1$ segments. Within each segment $S$ there
are two correct goods, $g_S^1$ and $g_S^2$, that satisfy $\phi_S^1=\phi_S^2/\eps$. Furthermore, all the $g_S^1$ goods
are distinct, whereas for every $S$ the good $g_S^2=g^2$ is the same. Player 1 has all the 
information, and so $P_1(\omega)=S(\omega)$ for every $\omega$, whereas player 2 has none, and so $P_2(\omega)=\Omega$
for every $\omega$. Finally, $v_1=\phi^1/2$ and $v_2=0$.

In this example, maximal welfare is achieved when all players pool on $g_S^1$ for every $S$, and so 
$\OPT=\frac{2}{3}\cdot \phi^1$. However, any IR mediator must give player 1 utility $\phi^1/2$, and the only
way to achieve this is to give player 2 no additional information, so that player 1 always chooses $g_S^1$ and
player 2 always chooses $g^2$ (this is, in fact, what the mediator $\Med^v_2$ does). 
Total welfare here is then $(\phi^1 + \phi^2)/2=(1+\eps)\phi^1/2$, and so the approximation ratio is
$$\frac{(1+\eps)\phi^1/2}{2\phi^1/3} = (1+\eps)\cdot \frac{3}{4} \xrightarrow[\eps \to 0]{} \frac{3}{4}.$$
\end{example}

In the previous example, the main bottleneck to achieving an approximation factor higher than $3/4$ was the
IR constraint. In this next example we show that even without the IR constraints, there does not exist an IC mediator 
that achieves an approximation factor higher than $4/5$.
\begin{example}\label{ex:IC}
Fix some $\eps>0$, and suppose that in every segment $S$ there are  two correct goods, $g_S^1$ and $g_S^2$, that satisfy
$\phi_S^1=\left(\frac{3}{2}+\eps\right) \phi_S^2$. Furthermore, the goods in every segment are distinct from goods in
other segments. By Claim~\ref{claim:conflict}, maximal welfare here is achieved when players separate on each 
segment, and so $\OPT = (\phi^1+\phi^2)/2$. 

However, again by Claim~\ref{claim:conflict}, the dominant action for players is to pool on $g_S^1$. Note that in
any IC mediator, no player will ever choose a good $g=g_S^2$, regardless of her information, since she can always improve her payoff by deviating to the strictly better $g_S^1$ (here we use the fact that all goods are distinct).
Thus, the most any IC mediator can attain is when all players pool on the top good, leading to welfare
$\frac{2}{3}\cdot \phi^1$ and so approximation factor
$$\frac{\frac{2}{3}\cdot \phi^1}{\frac{\phi^1+\phi^2}{2}}=\frac{\frac{2}{3}\cdot \phi^1}{\phi^1+\frac{\phi^1/(3/2+\eps)}{2}} = 
\frac{2+\frac{4\eps}{3}}{\frac{5}{2}+\eps} \xrightarrow[\eps \to 0]{} \frac{4}{5}.$$
\end{example}

\subsection{Approximately Optimal Mediator for General Setting}\label{sec:noJCI-3}
Each of the mediators in the previous two sections is IR, IC, and approximately-optimal under a particular
setting of parameters, but not in the other.
Mediator $\Med^v_1$ of Mediator~\ref{alg3} is not IC when $\phi_S^1>\frac{3}{2}\cdot \phi_S^2$ for some segment $S$,
since players wish to pool but the mediator facilitates separating.
 $\Med^v_2$ of Mediator~\ref{alg4} is not IC when $\phi_S^1<\frac{3}{2}\cdot \phi_S^2$ for some segment $S$, since players wish to separate but the mediator facilitates pooling.
In this section we show that the two mediators can be combined in order to yield an IR, IC, 
and approximately optimal mediator for any parameter setting.
The main idea is straightforward: to run $\Med^v_1$ of Mediator~\ref{alg3} on segments $S$
for which $\phi_S^1\leq\frac{3}{2}\cdot \phi_S^2$, and to run $\Med^v_2$ of Mediator~\ref{alg4}
 on segments $S$ for which $\phi_S^1>\frac{3}{2}\cdot \phi_S^2$.
 
Before formalizing this mediator, we need some notation. Let $S_1=\{S:\phi_S^1\leq\frac{3}{2}\cdot \phi_S^2\}$
and $S_2=\{S:\phi_S^1>\frac{3}{2}\cdot \phi_S^2\}$. Also, for each $\ell\in\{1,2\}$ let $\phi^1_{\ell}$ be the total weight of the most-likely correct goods conditional on $S_{\ell}$, and $\phi^2_{\ell}$ be the total weight of the second most-likely correct goods conditional $S_{\ell}$. Formally, for each $k\in\{1,2\}$,
$$\phi_{\ell}^k = \sum_{S\in S_{\ell}} \phi_S^k\cdot\pr{S|S_{\ell}}.$$
Finally, modify the definitions of $s_j'$, $\alpha_j$,
and $\alpha_i$ from Section~\ref{sec:noJCI-2} as follows. First, for any $\omega$ with $S(\omega)\in S_2$ let
$s_j'(\omega)\in \arg\max_{g\in G}E\left[u_j(g,g^1_\omega)|P_j(\omega),S_2\right]$, so it is an optimal action for player $j$ conditional on the information that $\omega \in P_j(\omega)$ and $S(\omega)\in S_2$, in the hypothetical
situation where $i$ always plays the most-likely correct good $g^1_\omega$ in every segment $S(\omega)$. 
Also, let $\alpha_j$ and $\alpha_i$ be the expected utilities of players $j$ and $i$ in this hypothetical situation: 
$\alpha_j=E\left[u_j(s_j'(\omega),g_\omega^1)|S_2\right]$ and 
$\alpha_i=E\left[u_i(g_\omega^1,s_j'(\omega))|S_2\right]$.

\begin{algorithm}[H]
\caption{Mediator for $\mathcal{P}=\{0,1,2\}$ without jointly complete information}\label{alg5}
\begin{algorithmic}[1]
\Procedure{$\Med^v_3$}{$V,W$} \Comment{$V=P_1(\omega)$ and $W=P_2(\omega)$ for realized $\omega\in\Omega$}
\State $S \leftarrow V\cap W$
\If{$\phi_S^1\leq\frac{3}{2}\cdot \phi_S^2$} \textbf{run} $\Med^v_1(S)$ with $(\phi^1_1,\phi^2_1)$ replacing  
$(\phi^1,\phi^2)$
\Else~\textbf{run} $\Med^v_2(V,W)$ with $(\phi^1_2,\phi^2_2)$ replacing   $(\phi^1,\phi^2)$
\EndIf
\EndProcedure
\end{algorithmic}
\end{algorithm}

\begin{theorem}\label{thm:both}
Let $v_i\geq v_j$, and suppose that $v_1+v_2\leq (\phi^1_1+\phi^2_1)/2$, that $v_i\leq \max\left\{E\left[u_i(g^1_\omega, s_j'(\omega))|S_2\right],\frac{\phi^1_1}{2}\right\}$, and that $v_j\leq \phi^1_2/2-v_i$.
Then the mediator $\Med^v_{3}$ above is IR, IC,  and has welfare $W(\Med^v_{3})\geq \frac{3}{4}\cdot \OPT$. 
\end{theorem}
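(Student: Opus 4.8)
The plan is to exploit the fact that $\Med^v_3$ treats each segment $S$ according only to which class ($S_1$ or $S_2$) contains it, so every relevant quantity---total welfare, each player's expected utility, and the IR/IC conditions---decomposes as the $\Pr[S_1],\Pr[S_2]$-weighted average of its conditional counterpart on the two classes. On $S_1$ the mediator runs $\Med^v_1$ with $(\phi^1_1,\phi^2_1)$ and on $S_2$ it runs $\Med^v_2$ with $(\phi^1_2,\phi^2_2)$, so the natural route is to transfer the guarantees of Theorem~\ref{thm:no-conflict} (proved for the all-$S_1$ case) and Theorem~\ref{thm:conflict} (proved for the all-$S_2$ case) to the two classes separately, applied to the class-conditional distributions, and then recombine. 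I would organize the proof around the three claims---approximation, IC, and IR---checking each conditionally and then averaging.

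For the approximation factor, I would first note that the unconstrained optimum is additive over segments, $\OPT=\Pr[S_1]\cdot\OPT_1+\Pr[S_2]\cdot\OPT_2$, where $\OPT_\ell$ is the maximal conditional welfare on $S_\ell$. On $S_1$, Claim~\ref{claim:no-conflict} shows that the separating profile that $\Med^v_1$ always induces is welfare-optimal, so the conditional ratio is $1$; on $S_2$, the per-segment argument inside the proof of Theorem~\ref{thm:conflict} gives welfare at least $\phi_S^1/2$ against a conditional optimum of either $(\phi_S^1+\phi_S^2)/2$ or $2\phi_S^1/3$, hence a conditional ratio of at least $3/4$. Since both ratios are at least $3/4$, so is the weighted average, giving $W(\Med^v_3)\geq \tfrac{3}{4}\OPT$. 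For incentive compatibility I would argue class-by-class: on $S_1$ the recommendations are exactly the separating equilibrium of Claim~\ref{claim:no-conflict}, while on $S_2$ the recommendation $g_S^1$ is dominant for player $i$ and player $j$ receives her best response $s_j'$, which is an equilibrium by Claim~\ref{claim:conflict}. Because the class of a segment is a deterministic function of $S$ and each recommended action is a best response conditional on a player's information \emph{within} the realized class, obedience should remain a BNE of the combined game; the one point that needs care is ruling out cross-class deviations, which I would handle by verifying obedience pointwise, using that $g_S^1$ is dominant on $S_2$ (optimal regardless of beliefs) and that on $S_1$ separating is a best response, so no message-contingent deviation can improve a player's expected utility once her posterior is pinned down by her partition cell together with her message.

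The delicate part is individual rationality, and here the asymmetric hypotheses do the work. On $S_1$, $\Med^v_1$ is configured to hand player $i$ her value and leave player $j$ the residual conditional welfare $(\phi^1_1+\phi^2_1)/2-v_i$, which clears $v_j$ precisely by the first hypothesis $v_1+v_2\le (\phi^1_1+\phi^2_1)/2$; on $S_2$, $\Med^v_2$ leaves player $j$ at least $\phi^1_2/2-v_i\ge v_j$ by the third hypothesis. The role of the middle hypothesis $v_i\le \max\{\alpha_i,\phi^1_1/2\}$ (with $\alpha_i=E[u_i(g^1_\omega,s_j'(\omega))\mid S_2]$) is to guarantee that player $i$'s target value is actually deliverable, and I would split on which term attains the max: when $v_i\le \phi^1_1/2$ her value sits inside the feasibility range of $\Med^v_1$ and can be extracted on $S_1$, and when $v_i\le \alpha_i$ it sits inside the feasibility range of $\Med^v_2$ and can be extracted on $S_2$, routing player $i$'s value through the feasible class while the other class is used to deliver welfare to player $j$. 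The main obstacle I anticipate is exactly this reconciliation: showing that the two sub-mediators can be simultaneously calibrated so that player $i$ nets at least $v_i$ overall while both of player $j$'s residuals clear $v_j$, given that player $i$'s value cannot be split arbitrarily across the classes. Pinning down the calibration dictated by the max-hypothesis, and confirming that it is consistent with both players' IR constraints and does not disturb the conditional IC argument, is where the real content of the proof lies.
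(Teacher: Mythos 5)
Your decomposition is exactly the paper's: condition on the class $S_1$ or $S_2$, invoke Theorem~\ref{thm:no-conflict} and Theorem~\ref{thm:conflict} (and, for the approximation factor, the per-segment $3/4$ bound from the proof of the latter) on each class separately, and recombine by averaging. Your approximation argument (ratio $1$ on $S_1$, ratio $3/4$ on $S_2$, hence $3/4$ overall) and your IC argument (separating is an equilibrium on $S_1$ by Claim~\ref{claim:no-conflict}, pooling on $g_S^1$ is dominant for player $i$ on $S_2$ by Claim~\ref{claim:conflict}, and obedience is checked pointwise given the realized segment's class) match the paper's proof.

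Where you diverge is the IR step, and you are more cautious than the paper. The paper's entire IR argument is one sentence: conditional on $S_1$ the hypotheses of Theorem~\ref{thm:no-conflict} hold, conditional on $S_2$ the hypotheses of Theorem~\ref{thm:conflict} hold, so each player's conditional utility is at least $v_k$ in each class and the unconditional mixture is at least $v_k$. There is no cross-class calibration of the kind you anticipate; the paper simply reads the theorem's hypotheses as delivering both conditional IR statements simultaneously. Your observation that the $\max$ in the second hypothesis only guarantees that $v_i$ clears the cap in \emph{one} of the two classes is a legitimate concern---as written, $\Med^v_1$ caps player $i$'s conditional utility at $\phi^1_1/2$ and $\Med^v_2$ caps it at $\alpha_i$, so if $v_i$ exceeds one of these, the class-by-class argument gives player $i$ strictly less than $v_i$ on that class and the averaging step no longer closes on its own. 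The ``reconciliation'' you flag as the real content of the proof is therefore not something you will find resolved in the paper; if you adopt the paper's reading (both conditional hypotheses hold), your proof is complete and identical to the paper's, and otherwise the gap you identify is a gap in the paper's argument, not only in yours.
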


\begin{proof}
Conditional on $S_1$, the conditions on $v$ of Theorem~\ref{thm:no-conflict} are satisfied, and so $\Med^v_{1}$
is IR and IC. Conditional on $S_2$, the conditions on $v$ of Theorem~\ref{thm:conflict} are satisfied, and so $\Med^v_{2}$
is IR and IC. That is, conditional on either $S_1$ or $S_2$, the expected utilities of either player $k$ is at least 
$v_k$. Thus, $k$'s expected utility in $\Med^v_{3}$, which is a mixture of her utilities conditional on $S_1$ and on
$S_2$, must also be greater than $v_k$. This implies that $\Med^v_{3}$ is IR. 

Furthermore, since $\Med^v_{1}$ and $\Med^v_{2}$ are IC, for either player $k$ the strategy 
$\overline{s}_k$ is a best-response to
$\overline{s}_{3-k}$ conditional on either $S_1$ or $S_2$.  Thus, $\Med^v_{3}$ is also IC.

Finally, we consider the welfare guaranteed by $\Med^v_{2}$. Conditional on $S_1$, the welfare attained is
$(\phi_1^1+\phi_1^2)/2$, by Theorem~\ref{thm:no-conflict}. This is the optimal attainable by any
mediator when $\phi_S^1\leq \frac{3}{2}\cdot\phi_S^2$ for every segment $S$, by Claim~\ref{claim:no-conflict}.
Hence, conditional on $S_1$, the welfare achieved by $\Med^v_{3}$ is equal to $\OPT$ (conditional on $S_1$).

Conditional on $S_2$, the welfare attained on any segment $S$ is at least $\frac{3}{4}$ times the welfare
attainable on this segment by any mediator (by the proof of Theorem~\ref{thm:conflict}).
Hence, the welfare achieved by $\Med^v_{3}$ is at least $\frac{3}{4}\cdot \OPT$
(conditional on $S_2$).

Since conditional on either $S_1$ or $S_2$ yields welfare at least $\frac{3}{4}$ times the optimal, the total
welfare of $\Med^v_{3}$ is $W(\Med^v_{3})\geq \frac{3}{4}\cdot \OPT$.
 \end{proof}

\section{Do Players Share More Data With or Without an Amazon?}\label{sec:share-more}
In this section we consider the intriguing question of whether players optimally share more data in the presence of
an Amazon or in its absence. While it seems intuitive that players would share more data when facing
stronger outside competition, we show here that this is not necessarily the case. We also show that the reason the intuition fails
is related to the possible conflict between equilibrium and welfare maximization identified in Claim~\ref{claim:conflict}.

Because we are interested in comparing data sharing across two different environments---without and with an Amazon---we focus
on our main interpretation of the base values $v$, namely, that they correspond to the equilibrium expected utilities absent a mediator.
Of course, because the two environments differ, so will these equilibrium expected utilities. We will thus be interested in comparing the benefit of data sharing relative to firms' no-sharing equilibrium utilities $v^{noA}$ absent an Amazon with the benefit of data sharing relative to firms' no-sharing equilibrium utilities $v^A$ in the presence of an Amazon.

We begin our analysis in Section~\ref{sec:compare-jointly-complete}, where we show that, when there is jointly complete information,
the intuition that firms share more data in the presence of an Amazon than in its absence does hold. In particular, we show that if, when there is no Amazon, data sharing strictly increases welfare relative to the welfare-maximizing equilibrium sans data sharing, then optimal data sharing strictly increases welfare in the environment with an Amazon relative to any equilibrium sans data sharing.

%
%

In Section~\ref{sec:compare-no-jointly-complete} we then drop the assumption of jointly complete information, and show that this result no longer holds. In particular, we describe two settings:
\begin{enumerate}
\item A setting where full data-sharing is optimal in the presence of an Amazon, but no data-sharing is optimal in the absence of an Amazon.
\item A setting where full data-sharing is  optimal in the {\em absence} of an Amazon, but no data-sharing is optimal in the {\em presence} of an Amazon. In this setting, data sharing is strictly beneficial to firms in the absence of an Amazon but not in its presence.
\end{enumerate}

In Section~\ref{sec:compare-no-jointly-complete} we also identify and discuss the driving force behind this 
difference, namely, the possible conflict between equilibrium and welfare maximization.

\subsection{Conditions for Sharing with Jointly Complete Information}\label{sec:compare-jointly-complete}
In this section we show that, under jointly complete information, when players can strictly benefit from data sharing in the absence
of an Amazon, they can also strictly benefit from data sharing in the presence of an Amazon.

\begin{theorem}\label{thm:compare-jointly-complete}
Fix an unmediated game $\Gamma= \left(\Omega, \mathcal{P},G^{2},(P_i(\cdot))_{i\in\mathcal{P}},(u_i)_{i\in\mathcal{P}},\pi\right)$
in which there is jointly complete information and no Amazon, and let $(v_1,v_2)$ be the expected utilities in a BNE of 
$\Gamma$ for which $v_1+v_2$ is maximal. 
Also, let $\Gamma'= \left(\Omega, \mathcal{P},G^{3},(P_i(\cdot))_{i\in\mathcal{P}},(u_i)_{i\in\mathcal{P}},\pi\right)$ be the same 
unmediated game, except with an Amazon, and let $(v_1',v_2')$ be the expected utilities in some BNE of 
$\Gamma'$. Then if there exists an IR, IC mediator $\Med$ in the game without an Amazon such that $W(\Med)>v_1+v_2$, there also exists an IR, IC mediator $\Med'$ in the game with an Amazon such that $W(\Med')>v_1'+v_2'$.
\end{theorem}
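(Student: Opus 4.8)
The plan is to reduce both the hypothesis and the conclusion to inequalities between the relevant equilibrium welfares, and then to argue by contradiction that a failure of the conclusion in the Amazon game would force a full-welfare equilibrium in the game without an Amazon, contradicting the hypothesis.

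First I would dispose of the two reductions. Since the total welfare of players $1$ and $2$ without an Amazon is at most $1$, every mediator satisfies $W(\Med)\le 1$; hence the hypothesis $W(\Med)>v_1+v_2$ immediately yields $v_1+v_2<1$. For the conclusion, note that $(v_1',v_2')$ are equilibrium utilities of $\Gamma'$ and are therefore feasible in the sense of Theorem~\ref{thm:A-IR}. By Theorem~\ref{thm:with-amazon}, the mediator $\Med^{v'}_A$ is then IR, IC, and optimal, with $W(\Med^{v'}_A)=\OPT(v_1',v_2')$. Thus it suffices to establish the strict inequality $\OPT(v_1',v_2')>v_1'+v_2'$ and then take $\Med'=\Med^{v'}_A$.

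Suppose toward a contradiction that $\OPT(v_1',v_2')=v_1'+v_2'$. Let $(s_1',s_2')$ be the BNE of $\Gamma'$ attaining $(v_1',v_2')$, and set $\beta_k=\Pr[s_k'\text{ correct},\ s_{3-k}'\text{ wrong}]$ and $\beta=\Pr[\text{both correct}]$, so that player $k$'s utility is $\beta_k/2+\beta/3=v_k'$ and the equilibrium welfare is $(\beta_1+\beta_2)/2+2\beta/3=v_1'+v_2'$. The triple $(\beta_1,\beta_2,\beta)$ is a feasible point of $\lp(v_1',v_2')$ whose objective equals the assumed optimum, so it is an optimal solution. Taking $v_i'\ge v_j'$, Lemma~\ref{lem:LP} gives $\beta_1+\beta_2+\beta=1$ and $\beta_j=0$. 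Together these say that the coverage is $1$ and that player $j$ is never the sole correct offerer; hence player $i$ offers the correct good in every state, since if $i$ were ever wrong, coverage $1$ would require $j$ to be solely correct there, contradicting $\beta_j=0$.

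It remains to lift this back to $\Gamma$, which is the crux of the argument. Because $s_i'$ depends only on $P_i(\cdot)$ yet is always correct, the good $g_\omega$ must be constant across each partition element of player $i$; in particular the same strategy $s_i'$ is available in $\Gamma$ and covers every state there. Now in $\Gamma$ let player $i$ play $s_i'$ and player $j$ play any best response to it. On each element $P_i(\omega)$ the constant correct good earns player $i$ at least $1/2$, whereas any other good earns $0$, so $s_i'$ is (strictly) a best response to any strategy of $j$; hence this profile is a BNE of $\Gamma$ with coverage $1$, and therefore welfare $1$. This forces the maximal-welfare BNE value $v_1+v_2$ to equal $1$, contradicting $v_1+v_2<1$. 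The main obstacle is precisely this transfer step: one must recognize that an LP-optimal equilibrium in the Amazon game makes one player effectively perfectly informed about the realized good, and then verify that this perfect predictor induces a genuine full-welfare equilibrium back in the game without an Amazon.
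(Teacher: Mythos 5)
Your proof is correct and follows essentially the same route as the paper's: assume the Amazon-game conclusion fails, deduce that in the equilibrium attaining $(v_1',v_2')$ one player must always offer the correct good, lift that strategy back to the game without an Amazon to obtain a welfare-$1$ equilibrium, and contradict $v_1+v_2<1$. The only cosmetic difference is that you extract the ``one player is always correct'' fact directly from Lemma~\ref{lem:LP} applied to the equilibrium's $(\beta_1,\beta_2,\beta)$, whereas the paper routes it through Theorem~\ref{thm:opt-mediator} applied to the trivial recommend-the-equilibrium mediator, and you pair $s_i'$ with an arbitrary best response of $j$ rather than reusing $j$'s original strategy; both choices are sound.
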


The converse of Theorem~\ref{thm:compare-jointly-complete} is not true---there are examples where players strictly benefit from
data sharing in the presence of an Amazon but not in its absence. A simple example is the game described in Figure~\ref{fig:2by2} in the Introduction, with parameters $\alpha>\beta=\gamma>\delta=0$. In the absence of an Amazon, the unique equilibrium there yields maximal welfare of 1, since at least one consumer always offers the correct good. Thus, data sharing cannot strictly improve welfare.
In the presence of an Amazon, however, data sharing can be strictly beneficial, as described in the Introduction.

\begin{proof} 
Suppose towards a contradiction that there exists an IR, IC mediator $\Med$ in the game without an Amazon such that $W(\Med)>v_1+v_2$, but that for some $(v_1',v_2')$ there does not exist an IR, IC mediator $\Med'$ in the game with an Amazon such that $W(\Med')>v_1'+v_2'$. Let $E$ be the equilibrium in the game with an Amazon that yields payoffs $(v_1',v_2')$, and let $\Med_0$ be the mediator that does nothing other than recommend players their actions from the equilibrium $E$. This mediator is IR, IC, and yields welfare
$W(\Med_0)=v_1'+v_2'$. Since, by assumption, no IR, IC mediator yields higher welfare, $\Med_0$ is optimal relative to the feasible
pair $(v_1', v_2')$. Thus, by Theorem~\ref{thm:opt-mediator}, $\Med_0$ is fully revealing to at least one of the two players.
In particular, this implies that the BNE $E$ in the game with an Amazon is one in which at least one of the players always offers
the correct good $g_\omega$ to every consumer $\omega$.

We now argue that this implies that the same $E$ is an equilibrium also in the game without an Amazon. 
The player $i$ that always offers the correct good is clearly playing a dominant strategy, both in the game with an Amazon and in the game without, as any deviation can only lower her utility.
The other player $j$ is playing a best response in the game with an Amazon. Because both player $i$ and the Amazon always offer the correct good, $j$'s strategy must maximize the probability of offering the correct good conditional on knowing (only) $P_j(\omega)$. However, this strategy is also an equilibrium in the game without an Amazon, since $i$ always offers the correct good.

The equilibrium $E$ is such that 
at least one of the players always offers the correct good, and so in the game without an Amazon it yields a total welfare of 1. However, this contradicts the assumption that there exists a mediator $\Med$ that strictly improves welfare,
since no mediator can yield welfare $W(\Med)>1$.
\end{proof}

\subsection{Data-Sharing with No Jointly Complete Information}\label{sec:compare-no-jointly-complete}
The analysis of Section~\ref{sec:compare-jointly-complete} shows that when players strictly benefit from data sharing
in the absence of an Amazon, they necessarily also benefit from data sharing in the presence of an Amazon. This is consistent with the intuition that
players can gain more by sharing data when they also face outside competition. However, this result relies on the assumption
of jointly complete information. In this section we drop the assumption, and show that in that case
this intuition is not complete. In particular, while there are settings in which players benefit more from sharing data in the presence of
an Amazon, there are also settings in which players benefit from data sharing  in the absence of an Amazon but {\em not} in the presence of an Amazon.

More specifically, in Section~\ref{sec:more-sharing-with} below we describe a setting where:
\begin{itemize}
\item In the absence of an Amazon, the optimal welfare is achieved when players share no data and not when they fully share data;
\item In the presence of an Amazon, the optimal welfare is achieved when players fully share data and not when they share no data.
\end{itemize}
In contrast, in Section~\ref{sec:more-sharing-with} we describe a setting where:
\begin{itemize}
\item In the absence of an Amazon, the optimal welfare is achieved when players fully share  data and not when they share no data;
\item In the presence of an Amazon, the optimal welfare is achieved when players share no data and not when they fully share data.
\end{itemize}

The main driving force behind the distinction between the two settings is whether or not there is a conflict between equilibrium and welfare maximization, as formalized in Claim~\ref{claim:conflict}. 
The situation described in Section~\ref{sec:more-sharing-with} is one where $\phi_S^1\geq 3\cdot \phi_S^2$
for every segment $S$. In this case, under full data sharing, it is a dominant strategy for players to pool on
$g^1_S$ (both with and without an Amazon). In the presence of an Amazon, such pooling is also welfare maximizing, by Claim~\ref{claim:conflict}. In the absence of an Amazon, however, separating is always welfare maximizing. Thus, full data sharing
leads to the optimal welfare in the former case, but not in the latter.

In contrast, the situation described in Section~\ref{sec:more-sharing-without} is one where $\phi_S^1\in (\frac{3}{2}\phi_S^2,2\phi_S^2)$. In this case, under full data sharing, in the absence of an Amazon it is an equilibrium for players to separate,
whereas in the presence of an Amazon it is dominant for players to pool (again, by Claim~\ref{claim:conflict}). Furthermore, in both settings, separating is the welfare maximizing outcome. Thus, full data sharing
leads to the optimal welfare in the absence of an Amazon, but not in its presence.


%


%

\subsubsection{More Sharing With an Amazon}\label{sec:more-sharing-with}
When there is an Amazon, if for each segment $S$ it holds that $\phi_S^1\geq 3\cdot \phi_S^2$,
then maximal welfare is achieved when players pool on $\phi_S^1$ (by Claim~\ref{claim:conflict}).
This maximal welfare is can be
achieved by full data-sharing (assuming the IR constraint is satisfied): On each segment $S$, the mediator recommends $g_S^1$ to both players,
and this is optimal for them (again by Claim~\ref{claim:conflict}). In contrast, when there is no Amazon, 
full data-sharing may {\em not} be optimal (regardless of the IR constraint), as here maximal welfare is attained by separation.

In this section we construct an example in which full data-sharing is strictly suboptimal when there is
no Amazon. While the example from the introduction illustrated this, it is not quite what
we desire, since in that example the mediator with an Amazon is {\em also} an optimal mediator without.
What we would like is an example where full data-sharing is optimal only with an Amazon.
In the analysis that follows, we first ignore the IR constraint. At the end of the section we then show that, with the proper modification, the argument holds also when taking the IR constraint into account.

%

In the following construction, for simplicity we refer to different types of consumers $\omega$ by 
 the good $g_\omega$ they desire.
There are two segments, $S^1=\{g_1\}$ and $S^2=\{g_1,g_2\}$, and
players' partitions are $\Pi_1=\{S^1, S^2\}$ and $\Pi_2=\Omega$. That is,
player 1 learns whether the realized $\omega$ lies in the segment $S^1$ (and so the correct good is $g_1$) 
or $S^2$ (and so the correct good is $g_1$ or $g_2$),
whereas player 2 does not learn anything. The prior is such that both segments are equally likely,
and that $\pr{g_2|S^2}>3\pr{g_1|S^2}>0$.
That is, in segment $S^2$, the good $g_2$ is  more than three times more likely to be correct than good $g_1$.

In the presence of an Amazon, the welfare maximizing choices lead to pooling on $g_1$ in segment
$S^1$ and pooling on $g_2$ in segment $S^2$, by Claim~\ref{claim:conflict}. This can be achieved by full
data-sharing, leading to an equilibrium in which players take these pooling actions (again, by
Claim~\ref{claim:conflict}). In contrast, with no data sharing, player 2 chooses the action $g_1$ (in both segments),
leading to lower welfare due to separation on $S^2$.

Now suppose there is no Amazon. Here the welfare maximizing choices have players separating on segment $S^2$.
Without any additional information, player 2 chooses action $g_1$, leading to such optimal separation. With
data sharing, however, player 2 takes action $g_2$ on $S^2$, pooling with player 1, leading to lower welfare.
Thus, in this example, full data-sharing is optimal in the presence of an Amazon but not in its absence. Furthermore, without an Amazon, it is optimal to not share any data.

Finally, recall that we thus far neglected the IR constraint. We now modify the example so that the conclusion holds when also taking
the IR constraints into account. In particular, duplicate the construction with a different set of segments and consumers, but where players 1 and 2 have the opposite information. Formally, in addition to the two segments, $S^1=\{g_1\}$ and 
$S^2=\{g_1,g_2\}$, suppose there are also segments $\oS^1=\{\og_1\}$ and $\oS^2=\{\og_1,\og_2\}$, where the goods $g^1$ and $g^2$ are distinct from the goods $\og^1$ and $\og^2$. Furthermore, the partitions are $\Pi_1=\{S^1, S^2, \oS^1\vee \oS^2\}$ and $\Pi_2=\{S^1\vee S^2, \oS^1, \oS^2\}$. That is, if the realized segment is $S^1$ or $S^2$, then again player 1 learns the segment and player 2 learns only that the realized segment is either $S^1$ or $S^2$. Symmetrically, if the realized
segment is $\oS^1$ or $\oS^2$, then player 2 learns the segment and player 1 learns only that the realized segment is either $\oS^1$ or $\oS^2$. Finally, suppose that the realized segment is $S^1$ or $S^2$ with probability $1/2$, and $\oS^1$ or $\oS^2$ with probability $1/2$.

Because the example is symmetric, the utilities of the two players, absent data sharing, are equal. Similarly, the utilities of the two players, under full data-sharing, are also equal. Thus, whenever full data-sharing leads to strictly higher welfare than no data-sharing
(as in the presence of an Amazon above), then it also satisfies the IR constraint. And if no data-sharing is optimal in the original example
(as in the absence of an Amazon above), then it is also optimal in the modified setting.
Thus, the conclusion that full data-sharing may be optimal in the presence of an Amazon but not in its absence
holds also when we take the IR constraint into account.

\subsubsection{More Sharing Without an Amazon}\label{sec:more-sharing-without}
When there is no Amazon, the optimal welfare is achieved when, for each segment $S$, players separate---one choosing
$g_S^1$ and the other $g_S^2$---as this leads to maximal utility $g_S^1+g_S^2$ in each segment.
Furthermore, if $\phi_S^1<2\phi_S^2$ for every segment $S$, then this maximal welfare can be
achieved by full data-sharing: On each segment $S$, the mediator recommends $g_S^1$ to
one of the players, and $g_S^2$ to the other. In contrast, in the presence of an Amazon, 
if $\phi_S^1<2\phi_S^2$ for every segment $S$ then full data-sharing may {\em not} be optimal.
In this section we construct an example in which full data-sharing is strictly suboptimal with an Amazon. 
As in Section~\ref{sec:more-sharing-with} above, in the following construction, for simplicity we refer to different types of consumers $\omega$ by the good $g_\omega$ they desire. We also ignore the IR constraint. However, again as  in Section~\ref{sec:more-sharing-with} above, the conclusion holds also when we take the IR constraint into account if we consider a modified,
symmetric example in which the segments and partitions are duplicated.

Suppose there are three segments, $S^1=\{g_1\}$, $S^2=\{g_3\}$, and $S^3=\{g_1,g_2,g_3\}$, and
players' partitions are $\Pi_1=\{S^1, S^2\vee S^3\}$ and $\Pi_2=\{S^2,S^1\vee S^3\}$. That is,
player 1 learns whether the realized $\omega$ lies in the segment $S^1$, and so $g_\omega=g_1$, or whether
$\omega$ lies in one of $S^2$ or $S^3$, and so the correct good for the realized type is one of $g_1$, $g_2$, or $g_3$
(with relative probabilities to be specified shortly). Player 2 is similar, but learns whether the realized segment is $S^2$
or  one of $S^1$ or $S^3$.
The prior is such that the segments are equally likely.
Denote by $\phi^k_3=\pr{g_k|S_3}$ the probability that good $g_k$ is correct, conditional on segment $S^3$,
and suppose that $\phi^1_3 \in \left(\frac{3\cdot \phi^2_3}{2}, 2\cdot \phi^2_3\right)$ and 
that $\phi^2_3 > \phi^3_3>\phi^1_3/3$.

Suppose there is no Amazon. The optimal welfare here, $(2+\phi^1_3+\phi^2_3)/3$, 
is attained when at least one player chooses
$g_1$ conditional on $S^1$, at least one player chooses $g_3$ conditional on $S^2$, and when
one player chooses $g_1$ and another chooses $g_2$ conditional on $S^3$. Full data-sharing
leads players to learn the correct segment, and so to maximal welfare: on segments $S^1$ and $S^2$ players pool on the correct good, and on segment $S^3$
it is an equilibrium for players to separate.

In contrast, without data sharing, while player 2 will end up choosing $g_1$ on $S^3$, player 1 will end up 
choosing $g_3$ on $S^3$. This is because player 1 cannot differentiate between $S^2$ and $S^3$, and conditional
on $S^2\vee S^3$ the strategy of choosing good $g_3$ is dominant. This, however, leads to total welfare 
$(2+\phi^1_3+\phi^3_3)/3 < (2+\phi^1_3+\phi^2_3)/3$, and so full data-sharing is better than no sharing.

Now suppose there is an Amazon. When there is no data-sharing, then conditional
on segments $S^1$ or $S^2$, welfare will be $2/3$, since for both players it is a dominant strategy to offer the correct good. 
On segment $S^3$ the strategies $g_3$ and $g_1$
of players 1 and 2, respectively, will again be dominant (as in the case of no Amazon), leading to conditional welfare
$(\phi^1_3+\phi^3_3)/2$ here.

What happens will full data-sharing? Here, on segment $S^3$, players will {\em pool} on action $g_1$,
by Claim~\ref{claim:conflict} and the assumption that $\phi_3^1>\frac{3}{2}\phi_3^2$. 
Thus, welfare conditional on $S^3$ will be 
$$\frac{2}{3}\cdot \phi^1_3<\frac{\phi^1_3+\phi^3_3}{2},$$
where the inequality follows from the assumption that $\phi^3_3>\phi^1_3/3$. Since, conditional on $S^1$ or $S^2$,
welfare is the same with full sharing and with no sharing, total (unconditional) welfare is higher with no sharing
than with full sharing.

Furthermore, observe that under no data-sharing, welfare is maximal conditional on $S_1$ or $S_2$. Under $S_3$,
higher welfare could (only) be attained if players were to separate on $g_1$ and $g_2$. However, such separation can
never occur in equilibrium: Any player that is supposed to offer $g_2$ can strictly benefit by deviating to offering $g_1$,
regardless of the other player's action (by Claim~\ref{claim:conflict} and since $\phi_3^1>\frac{3}{2}\phi_3^2$). Therefore, under no data-sharing players obtain maximal welfare subject to the IC constraint being satisfied.

Thus, in this example, full data-sharing is optimal in the absence of an Amazon,  but not in its presence. In contrast, no data-sharing is optimal in the presence of an Amazon, but not in its absence. In particular, this implies that data sharing is strictly beneficial in 
the absence of an Amazon, but not in its presence.

\section{Conclusion}\label{sec:conclusion}
In this paper we proposed a simple model to study coopetitive data sharing between firms. We designed several data-sharing schemes, in the form of mediators, that are optimal in their respective domains, with and without an Amazon.

In our model, we assumed that the mediator only facilitates data sharing between firms. One natural extension
is to allow the mediator to also use monetary transfers between them. In Appendix~\ref{sec:transfers} we analyze this
extension. We show that transfers can loosen the IR constraint, implying that, under jointly complete information,
full data-sharing is often optimal. However, when there is no jointly complete information, full data-sharing may violate the IC constraint, and in this case our mediators (even without transfers) are close to optimal.

One limitation of our model is that it considers data sharing between only two firms. What happens when there
are $n>2$ small firms, and a mediator that can facilitate sharing between all of them? 
We leave a thorough analysis of optimal mediators in this setting for future work, but
 provide some preliminary results in Appendix~\ref{sec:many-players}. In particular, we analyze the case of
jointly complete information. We show that, in coopetition without an Amazon, 
a generalization of the mediator from the introductory example  to
$n$ players is optimal. In coopetition against an Amazon, however, the situation is more complicated. Nonetheless,
we provide necessary and sufficient conditions for full data-sharing to be optimal in this case.

Coopetitive data-sharing is crucial for firms' survival in current online markets, and the model and results of this paper
have only scratched the surface of what can and should be done. 
In particular, the paper leaves numerous directions
open for future research. In addition to a thorough analysis of optimal data-sharing with more than 2
players, a particularly important direction is to analyze data sharing in a variety of 
markets different from the one studied here.

%



\bibliographystyle{ims}
\bibliography{coopetition-bib}

\newpage
\begin{center}
\begin{Large}
\textbf{Appendix}
\end{Large}
\end{center}

\section{Extensions}\label{sec:extensions}

\subsection{Transfers}\label{sec:transfers}
In this section we consider an extension to the model:  the possibility of monetary transfers between players. 
With transfers, the mediator may 
require one player to pay an amount $c$, and another to receive that amount $c$, when opting in. Assume total utilities are quasi-linear in these transfers---that
is, a player's total utility is her utility from the game plus the change in her monetary position.
How does the possibility of such transfers affect the design of mediators?

A first observation is that the optimality of mediator $\Med^v_{noA}$, under jointly complete information
and without an Amazon, is not affected, since that mediator leads to a globally optimal
sum of player's utilities. Any transfers will only change the distribution of utilities (and hence
whether or not the mediator is IR), not their sum.

When players have jointly complete information and they compete against an Amazon, however, the optimal
mediator can be different. In particular, when there are transfers, then a mediator that
is fully revealing to both players is optimal whenever  $v_1+v_2\leq 2/3$. 
If both $v_i\leq 1/3$ and $v_j\leq 1/3$, then the mediator that is fully revealing to
both is optimal even without a transfer (this is what $\Med_A^v$ does). If $v_i\geq 1/3\geq v_j$, let the mediator set
a transfer $c=(v_i-1/3)$, have player $j$ transfer $c$ to player $i$, and then always recommend $g_\omega$
to both players. After the transfer this leads to utility $1/3+c$ to player $i$ and $1/3-c$ to player $j$.
Observe that $1/3+c=1/3+(v_i-1/3) = v_i$ and $1/3-c = 1/3 -(v_i-1/3) = 2/3-v_i\geq v_j$, and so the mediator is IR.
The mediator is IC since each player plays the dominant action $g_\omega$ for every $\omega$. 
And, finally, the mediator is optimal, since the sum of utilities is $2/3$, which is globally maximal.

When there is no jointly complete information the situation is slightly more complicated, and here we consider
three settings of parameters. First, if 
$\phi_S^1\leq \frac{3}{2}\cdot \phi_S^2$,
then the mediator $\Med_1^v$ of Mediator~\ref{alg3} is IR, IC, and optimal, even without transfers.
Second, if $\phi_S^1>3\cdot \phi_S^2$,
then by Claim~\ref{claim:conflict} the pooling strategy is both dominant and welfare maximizing. In this case, a mediator
like the mediator with jointly complete information and transfers above will be optimal whenever $v_1+v_2\leq \frac{2}{3}\cdot \phi^1$. If both $v_i\leq \phi^1/3$ and $v_j\leq \phi^1/3$, then the mediator that is fully revealing to
both is optimal even without a transfer (as in $\Med_2^v$). If $v_i\geq \phi^1/3\geq v_j$, let the mediator set
a transfer $c=(v_i-\phi^1/3)$, have player $j$ transfer $c$ to player $i$, and then always recommend $g_\omega$
to both players. This mediator is IR, IC, and optimal, by the same reasoning as above.

Third, if $\phi_S^1\in ( \frac{3}{2}\cdot \phi_S^2, 3\cdot \phi_S^2)$, then full data-sharing does not always satisfy the IC constraint. This is demonstrated by the example in Section~\ref{sec:more-sharing-with}, in which
the optimal IC mediator is one in which there is no data sharing (and it is strictly better than full data-sharing). Since the
example is symmetric, players' payoffs are identical, and so transfers do not alleviate the situation. Nonetheless, recall that 
mediator  $\Med^v_{3}$ (analyzed in Theorem~\ref{thm:both}) yields welfare at least $\frac{3}{4}\cdot \OPT$, where
$\OPT$ is the maximal welfare achieved by any mediator, including one that is not IR or IC. Furthermore, 
Example~\ref{ex:IC} shows that no mediator (not even one that does no satisfy IR) can in general achieve more than
$\frac{4}{5}\cdot \OPT$. This implies that even the best mediator {\em with transfers} can yield at most 
$\frac{4}{5}\cdot \OPT$. Our mediator $\Med^v_{3}$ achieves close to this---namely, $\frac{3}{4}\cdot \OPT$---{\em without} transfers.


\subsection{Many Players}\label{sec:many-players}
Throughout this paper we analyzed the case of 2 players and possibly an Amazon. In this section we discuss an extension of the model
and results to $n>2$ players. We leave a thorough analysis of optimal mediators in this setting for future work, but
here we provide some preliminary results. In particular, we consider the case in which players have jointly complete information: 
$\bigcap_{i\in\{1,\ldots,n\}} P_i(\omega) = \{\omega\}$ for every $\omega\in\Omega$, 
where $P_i(\omega)$ is the partition element of player
$i$ when consumer type $\omega$ is realized.

The main take-away from this section is that in coopetition without an Amazon, 
a generalization of the mediator from the introductory example  to
$n$ players is optimal. In coopetition against an Amazon, however, the situation is more complicated. We provide necessary and sufficient conditions
for full data-sharing to be optimal in this case.

\subsubsection{Many Players Without Amazon}
Let $E=(s_1,\ldots,s_n)$ be a BNE of the unmediated game,
where $s_i(\omega)$ is the equilibrium strategy of player $i$ when she obtains information $P_i(\omega)$.
We will design an optimal mediator for the setting in which, for every $i$, the base value $v_i$ is at most 
the utility of player $i$ in $E$. Consider the following mediator, and observe that it uses the equilibrium
$E$ in its construction (in contrast with $\Med_{noA}^v$, which only uses the base values $v$).

\begin{algorithm}[H]
\caption{Mediator for $\mathcal{P}=\{1,\ldots,n\}$ given equilibrium $E=(s_1,\ldots,s_n)$}\label{algn}
\begin{algorithmic}[1]
\Procedure{$\Med^E_{noA}$}{$V_1,\ldots,V_n$} \Comment{$V_i=P_i(\omega)$ for realized $\omega\in\Omega$ and every $i$}
\State $\omega \leftarrow \bigcap_{i\in\{1,\ldots,n\}}V_i$
\State $g_i \leftarrow \mbox{a draw from distribution }s_i(\omega)$ for every $i$
    \If{$\left(g_i\neq g_\omega~\forall i\in\mathcal{P}\right)$}
        \Return $\left(g_\omega, \ldots, g_\omega\right)$
    \Else~\Return $\left(g_1, \ldots, g_n\right)$
    \EndIf
\EndProcedure
\end{algorithmic}
\end{algorithm}

\begin{theorem}\label{thm:noA-n}
Fix a BNE $E=(s_1,\ldots,s_n)$ of the unmediated game with no Amazon, and for every $i$ let $v_i$ be at 
most the expected utility of player $i$ in $E$. Then $\Med^E_{noA}$ is IR, IC, and optimal.
\end{theorem}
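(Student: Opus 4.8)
The plan is to verify the three properties in the order welfare-optimality, then IR, then IC, deducing \emph{optimality} from the first property together with IR and IC. First I would observe that under the recommended profile $\overline{s}$ the realized action profile on every state $\omega$ contains at least one player offering $g_\omega$: either line~4 fired (no equilibrium draw was correct), in which case \emph{all} players offer $g_\omega$, or some draw $g_i$ already equalled $g_\omega$ and that player offers it. Hence the surplus on every state is exactly $1$, so $W(\Med^E_{noA})=1$. Since $W(\Med)\le 1$ for every mediator (the total payoff on any state is at most $1$), the $n$-player analogue of Observation~\ref{obs:noA-opt}, whose one-line proof never refers to the number of players, shows that $\Med^E_{noA}$ is optimal the moment it is IR and IC.

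For individual rationality I would use a coupling. Run the unmediated equilibrium $E$ and the mediated profile $\overline{s}$ on the \emph{same} realized type $\omega$ and the \emph{same} draw $(g_1,\dots,g_n)\sim\prod_\ell s_\ell(\omega)$ generated in lines~2--3. On every realization in which some draw is correct, the two games produce identical action profiles and hence identical payoffs to player $i$; on the remaining realizations player $i$ earns $0$ in $E$ (her draw missed) but $1/n$ under $\overline{s}$. Thus her payoff under $\overline{s}$ pointwise dominates her payoff in $E$, so her expected payoff is at least her equilibrium payoff, which is at least $v_i$ by hypothesis.

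Incentive compatibility is the main obstacle, because player $i$ only sees $P_i(\omega)$ together with her recommendation, so she cannot tell whether her recommendation came from line~4 or from her own draw. I would fix an information set $P_i=P_i(\omega)$ and a recommendation $g$ occurring with positive probability, and compare following $g$ against an arbitrary deviation to $a\neq g$. The key structural fact is that $s_i$ is a single mixed action on all of $P_i$, so $p:=\Pr[g_i=g\mid\omega']=s_i(P_i)(g)$ is \emph{constant} across $\omega'\in P_i$; combined with the conditional independence of the players' draws given $\omega'$, this lets the factor $p$ factor out cleanly. Writing $N_{\mathrm{follow}}$ and $N_a$ for the (commonly normalized) conditional expected utilities of following versus deviating, and classifying each $\omega'\in P_i$ by whether $g_{\omega'}=g$, $g_{\omega'}=a$, or neither, I would check that states of the third kind contribute $0$ to both, that states with $g_{\omega'}=g$ contribute only to $N_{\mathrm{follow}}$, and states with $g_{\omega'}=a$ only to $N_a$. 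Letting $m_b(\omega')$ denote the number of \emph{other} players whose draw equals good $b$, the computation then reduces to the identity
\begin{align*}
N_{\mathrm{follow}}-N_a
&= p\Big(\sum_{\omega':\,g_{\omega'}=g}\pi(\omega'\mid P_i)\,E[\tfrac{1}{1+m_g}\mid\omega']-\sum_{\omega':\,g_{\omega'}=a}\pi(\omega'\mid P_i)\,E[\tfrac{1}{1+m_a}\mid\omega']\Big)\\
&\quad+ p\sum_{\omega':\,g_{\omega'}=a}\pi(\omega'\mid P_i)\,\Pr[m_a=0\mid\omega']+\frac{1-p}{n}\sum_{\omega':\,g_{\omega'}=g}\pi(\omega'\mid P_i)\,\Pr[m_g=0\mid\omega'].
\end{align*}

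The last two terms are manifestly nonnegative, and the parenthesized term is exactly the equilibrium best-response inequality for playing $g$ versus $a$ against $s_{-i}$ on the information set $P_i$ (the extra $\Pr[m_g=0]$ contribution is precisely the ``bonus'' that line~4 confers on player $i$, and it only helps her). Since $g$ lies in the support of $s_i(P_i)$ whenever $p>0$, this inequality holds; when $p=0$ the recommendation $g$ arises only from line~4, forcing $g=g_{\omega'}$ and $N_a=0\le N_{\mathrm{follow}}$. In every case $N_{\mathrm{follow}}\ge N_a$, so following is a best response and $\overline{s}$ is a BNE of $\Gamma^{\Med^E_{noA}}$. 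Having established IR and IC with $W=1$, optimality follows from the first paragraph. I expect the factoring-out of $p$ (justified by the measurability of $s_i$ with respect to the information set) to be the delicate point that the whole argument hinges on.
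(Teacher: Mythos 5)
Your proof is correct and follows essentially the same route as the paper's: the $W(\Med^E_{noA})=1$ welfare argument, the pointwise coupling for IR, and an IC argument that decomposes the recommendation event according to whether it came from the player's own equilibrium draw (where the BNE best-response inequality applies, using that $g$ is in the support) or from the all-players-missed correction (which contributes only nonnegative ``bonus'' terms). Your explicit identity for $N_{\mathrm{follow}}-N_a$ is a cleaner packaging of the paper's chain of conditional-expectation manipulations with the quantity $\alpha$, but the mathematical content is the same.
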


In mediator $\Med^E_{noA}$, every player obtains the same utility as under $E$, plus additional utility whenever no player correctly chooses $g_\omega$. Thus, the mediator is IR. Furthermore, since under this 
mediator, for every $\omega$ at least one player chooses $g_\omega$, and so the total welfare is 1, which is maximal.
The main challenge in the proof is to show that the mediator is IC: in the mediator each player either plays her equilibrium strategy $s_i(\omega)$ or the dominant action $g_\omega$, and the proof involves showing that, conditional on
not obtaining recommendation $g_\omega$, the strategy $s_i(\omega)$ is still optimal.

\begin{proof}
Recall that $\overline{s}$ is the strategy profile in which players always follow $\Med^E_{noA}$'s recommendation.
We first show that all players receive (weakly) higher expected utilities under $\overline{s}$ than under $E$.
Consider some $\omega\in\Omega$, and some realizations $(g_1,\ldots,g_n)$ of $(s_1(\omega),\ldots,s_n(\omega))$. Consider two events:
If $g_i\neq g_\omega$ for all $i$ then without the mediator, all players obtain utility zero.
The mediator here recommends $g_\omega$ to all, leading to positive utility to all players. Conditional on this event, then, the mediator is beneficial to
all players. The other event is the complement of this first event, in which case the mediator recommends players' BNE actions. Since the mediator's recommendation
does not change players' actions relative to $E$, it also does not affect their utilities. Overall, then, all players prefer $\overline{s}$ in $\Med^E_{noA}$ to $E$, and so the mediator is IR.

Next, we show that $\overline{s}$ is a BNE of $\Gamma^{\Med^E_{noA}}$. 
Fix $\omega$ and a player $i$, and suppose all other players $j$ play $\overline{s}_{j}$. Denote this profile
of other players as $\overline{s}_{-i}(\omega)$
To simplify notation, denote by $\Med=\Med^E_{noA}$ and by $\Med(\omega)_i$ the distribution over recommendations to player $i$
in state $\omega$. 
Also fix some $g\in \supp(s_i(\omega))$, and let $\alpha=\Pr\left[(s_i(\omega) \neq g)|P_i(\omega)\cap\left(\Med(\omega)_i=g\right)\right]$.
Then when the mediator recommends action $g$, player $i$'s expected utility from following that recommendation is
\begin{align}
E&\left[u_i(g,\overline{s}_{-i}(\omega))|P_i(\omega)\cap\left(\Med(\omega)_i=g\right)\right]\nonumber \\
&= \alpha E\left[u_i(g,\overline{s}_{-i}(\omega))|P_i(\omega)\cap\left(\Med(\omega)_i=g\right) \cap (s_i(\omega) \neq g)\right]\nonumber\\
&~~~+  (1-\alpha) E\left[u_i(g,\overline{s}_{-i}(\omega))|P_i(\omega)\cap\left(\Med(\omega)_i=g\right) \cap (s_i(\omega) = g)\right]\label{eqn1}\\
&=\alpha/n+  (1-\alpha) E\left[u_i(g,\overline{s}_{-i}(\omega))|P_i(\omega)\cap\left(\Med(\omega)_i=g\right) \cap (s_i(\omega) = g)\right]\label{eqn2}\\
&=\alpha/n+  (1-\alpha) E\left[u_i(g,s_{-i}(\omega))|P_i(\omega)\cap\left(\Med(\omega)_i=g\right) \cap (s_i(\omega) = g)\right]\label{eqn3}\\
&=\alpha/n + \frac{1-\alpha}{\Pr\left[\Med(\omega)_i= g | P_i(\omega)\cap (s_i(\omega) = g)\right]} \cdot \biggl(E\left[u_i(g,s_{-i}(\omega))|P_i(\omega)\cap (s_i(\omega) = g)\right]\nonumber\\
&~~~~ - E\left[u_i(g,s_{-i}(\omega))|P_i(\omega)\cap\left(\Med(\omega)_i\neq g\right) \cap (s_i(\omega) = g)\right]\cdot\Pr\left[\Med(\omega)_i\neq g | P_i(\omega)\cap (s_i(\omega) = g)\right]\biggr)\label{eqn4}\\
&=\alpha/n + \frac{(1-\alpha)\cdot (E\left[u_i(g,s_{-i}(\omega))|P_i(\omega)\cap (s_i(\omega) = g)\right]}{\Pr\left[\Med(\omega)_i= g | P_i(\omega)\cap (s_i(\omega) = g)\right]},\label{eqn5}
\end{align}
where (\ref{eqn1}) follows from the law of total expectation, (\ref{eqn2}) follows since $s_i(\omega)\neq\Med(\omega)_i$ implies that
the mediator recommended $g_\omega$ to all players, (\ref{eqn3}) follows since $s_i(\omega)=\Med(\omega)_i$ implies that the mediator recommended the equilibrium
action to all players, (\ref{eqn4}) follows from another application of the law of total expectation, and (\ref{eqn5}) follows since  
$\Med(\omega)_i\neq g=s_i(\omega)$ implies that $g\neq g_\omega$.

Similarly, $i$'s expected utility from not following the mediator's recommendation $g$ and playing $g'\neq g$ instead is
\begin{align*}
E&\left[u_i(g',\overline{s}_{-i}(\omega))|P_i(\omega)\cap\left(\Med(\omega)_i=g\right)\right] \nonumber\\
&= \alpha E\left[u_i(g',\overline{s}_{-i}(\omega))|P_i(\omega)\cap\left(\Med(\omega)_i=g\right) \cap (s_i(\omega) \neq g)\right]\nonumber\\
&~~~+  (1-\alpha) E\left[u_i(g',\overline{s}_{-i}(\omega))|P_i(\omega)\cap\left(\Med(\omega)_i=g\right) \cap (s_i(\omega) = g)\right]\nonumber\\
&=(1-\alpha) E\left[u_i(g',\overline{s}_{-i}(\omega))|P_i(\omega)\cap\left(\Med(\omega)_i=g\right) \cap (s_i(\omega) = g)\right]\nonumber\\
&=(1-\alpha) E\left[u_i(g',s_{-i}(\omega))|P_i(\omega)\cap\left(\Med(\omega)_i=g\right) \cap (s_i(\omega) = g)\right]\nonumber\\
&=\frac{1-\alpha}{\Pr\left[\Med(\omega)_i= g | P_i(\omega)\cap (s_i(\omega) = g)\right]} \cdot \biggl(E\left[u_i(g',s_{-i}(\omega))|P_i(\omega)\cap (s_i(\omega) = g)\right]\nonumber\\
&~~~~ - E\left[u_i(g',s_{-i}(\omega))|P_i(\omega)\cap\left(\Med(\omega)_i\neq g\right) \cap (s_i(\omega) = g)\right]\cdot\Pr\left[\Med(\omega)_i\neq g | P_i(\omega)\cap (s_i(\omega) = g)\right]\biggr)\nonumber\\
&\leq\frac{(1-\alpha)\cdot (E\left[u_i(g',s_{-i}(\omega))|P_i(\omega)\cap (s_i(\omega) = g)\right]}{\Pr\left[\Med(\omega)_i= g | P_i(\omega)\cap (s_i(\omega) = g)\right]}.\label{eqn6}
\end{align*}

Since $E$ is an equilibrium, it holds that 
$$E\left[u_i(g,s_{-i}(\omega))|P_i(\omega)\cap s_i(\omega)=g)\right]\geq E\left[u_i(g',s_{-i}(\omega))|P_i(\omega)\cap s_i(\omega)=g)\right],$$
and so 
$$E\left[u_i(g,s_{-i}(\omega))|P_i(\omega)\cap\left(\Med(\omega)_i=g\right)\right] \geq E\left[u_i(g',s_{-i}(\omega))|P_i(\omega)\cap\left(\Med(\omega)_i=g\right)\right].$$

Additionally, note that if $\Med(\omega)_i=g$ for some $g\not\in\supp(s_i(\omega))$ then player $i$ is certain that $g=g_\omega$, and so following the
mediator's recommendation is optimal. Thus, in both cases, following the mediator's recommendation is optimal, and so $\overline{s}$ is a BNE.\\

Finally, we show that $\Med^E_{noA}$ is optimal with respect to $E$. In any state $\omega$, the sum of players' utilities conditional on that state is at most 1.
This maximal utility is achieved whenever at least one player plays $g_\omega$. Under $\Med^E_{noA}$, at least one player plays $g_\omega$ in {\em every}
$\omega$. Hence, the sum of players' (unconditional) expected utilities is $W(\Med^E_{noA})=1$, the maximum possible in the game.
 \end{proof}

\subsubsection{Many Players Against Amazon}
Suppose now that there are $n$ players and an Amazon, that there is jointly complete information,
and that $v$ is an arbitrary profile of base values.
\begin{claim}
The mediator $\Med$ that is fully revealing to all players $i\in\{1,\ldots,n\}$ and always recommends $g_\omega$
 is IR, IC, and optimal if and only if  $v_i\leq 1/(n+1)$ for all $i$.
\end{claim}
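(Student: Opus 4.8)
The plan is to verify the three properties—IC, IR, and optimality—separately, and to observe that incentive compatibility and optimality both hold for this mediator \emph{regardless} of the base values, so that the entire equivalence collapses to characterizing when the mediator is individually rational. First I would record the two elementary quantities. Since $\Med$ is fully revealing to every player $i\in\{1,\ldots,n\}$, and since player $0$ always offers the correct good $g_\omega$ (its unique dominant action), in every state $\omega$ all $n+1$ players offer $g_\omega$. The consumer then chooses uniformly among these $n+1$ offerers, so each small player's conditional utility is $1/(n+1)$ in every state; hence every small player's expected utility is exactly $1/(n+1)$ and $W(\Med)=n/(n+1)$.

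Next I would dispatch incentive compatibility. Under full revelation each player learns $\omega$ exactly, so offering $g_\omega$ yields utility at least $1/(n+1)>0$ while any other good yields $0$; thus $g_\omega$ is a strictly dominant action and $\overline{s}$ is a BNE of $\Gamma^{\Med}$. This argument is independent of $v$, so IC never constrains the base values. Individual rationality then follows immediately from the utility computation above: each small player obtains $1/(n+1)$ under $\overline{s}$, so $\Med$ is IR if and only if $1/(n+1)\geq v_i$ for every $i$, i.e., if and only if $v_i\leq 1/(n+1)$ for all $i$.

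The one step with real content is optimality, which I would establish through a state-wise welfare bound. Fix an arbitrary mediator $\Med'$ and an arbitrary state $\omega$. Player $0$ always offers $g_\omega$, so if $k\leq n$ of the small players also offer $g_\omega$, then $k+1$ players in total offer the correct good and the small players jointly capture $k/(k+1)\leq n/(n+1)$ of the unit surplus, while every other good contributes nothing. Taking expectations over $\omega$ yields $W(\Med')\leq n/(n+1)$ for \emph{every} mediator $\Med'$, whether or not it is IR or IC. Since full sharing attains $W(\Med)=n/(n+1)$, we get $W(\Med)\geq W(\Med')$ for all IR, IC mediators $\Med'$, so $\Med$ is optimal—once again for every $v$.

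Combining the three parts, $\Med$ is always IC and always optimal, while it is IR exactly when $v_i\leq 1/(n+1)$ for all $i$; hence $\Med$ is simultaneously IR, IC, and optimal if and only if $v_i\leq 1/(n+1)$ for all $i$. The only (mild) subtlety I would flag is that optimality in the paper's definition is a comparison against all IR, IC mediators and does not itself require $\Med$ to be IR—this is precisely why the welfare upper bound, rather than any IR-based argument, is what drives optimality, and why optimality holds unconditionally. There is no genuine obstacle here beyond stating the state-wise bound cleanly.
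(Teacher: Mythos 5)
Your proposal is correct and follows essentially the same route as the paper's proof: dominance of $g_\omega$ under full revelation gives IC, the per-player utility $1/(n+1)$ gives the IR equivalence, and the state-wise bound $k/(k+1)\leq n/(n+1)$ gives unconditional optimality, with the converse coming from the violated IR constraint. Your explicit remark that IC and optimality hold independently of $v$, so the biconditional reduces entirely to IR, is a slightly cleaner articulation of the same logic the paper uses implicitly.
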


\begin{proof}
The strategy $\overline{s}$ in a fully revealing $\Med$ is a BNE, since each player plays $g_\omega$, which is dominant. It is IR,
since each player $i$'s expected utility in $\Med$ is $1/(n+1)\geq v_i$. Finally, it is optimal: Whenever $k$ players correctly choose $g_\omega$,
the sum of players' utilities is $k/(k+1)$. Under $\Med$, the sum of players' utilities is $n/(n+1)$, which is the most they can jointly obtain.

The reverse direction holds by the observation that if a player has base value $v_i>1/(n+1)$ then her IR constraint is violated under a fully revealing mediator.
 \end{proof}


\end{document}

\subsection{Partially-Informed Players}\label{sec:partial-info}
Our model and analysis assumed that players 1 and 2 have jointly complete information
about the consumer's type. In this section we discuss the challenges created by relaxing this assumption, and argue that a modification of
mediator $\Med^v_{noA}$ can be beneficial to players 1 and 2, albeit not necessarily optimal.
Throughout this section, suppose that the joint information of players 1 and 2 may be a non-singleton subset of $\Omega$. 
That is, there exists some $\omega\in\Omega$
for which $\abs{P_1(\omega)\cap P_2(\omega)} >1$. 

We begin with the case in which there is no Amazon, but later argue that this mediator also works when there is an Amazon.
To describe the mediator we need some notation: For a given $\omega$, let $g_\omega^1$ be the good that is 
most likely correct conditional on $P_1(\omega)\cap P_2(\omega)$, and let $g_\omega^2$ be the second most-likely good (with $g_\omega^2=g_\omega^1$ if
$P_1(\omega)\cap P_2(\omega)$ is a singleton).
Consider the mediator $\hat{\Med}^v_{noA}$, which is the same as $\Med^v_{noA}$ except for the following modification: 
Replace line 5 of Mediator~\ref{alg1} with:
$$\mbox{\textbf{if }} g_i\not\in\left(P_1(\omega)\cap P_2(\omega)\right)~\forall i\in\{1,2\}\mbox{\textbf{ then return }}(g_\omega^1,g_\omega^2)$$
(Alternatively, the mediator could return $(g_\omega^2,g_\omega^1)$ or a distribution over $(g_\omega^1,g_\omega^2)$ and $(g_\omega^2,g_\omega^1)$.
The only difference between these choices is how the surplus is split between the two players.)

If there is a state $\omega$ in which line 5 is activated, the modified mediator leads to higher utility than equilibrium $E$ with no mediator.
Furthermore, a proof similar to that of Theorem~\ref{thm:without-amazon} shows that this mediator is IC with respect to $E$. 
It is not, however, 
always optimal: In particular, it is possible that for some $\omega$ both players choose a good that {\em does} lie in $\left(P_1(\omega)\cap P_2(\omega)\right)$,
but that the probability that this is the correct good is very small, and neither is equal to $g_\omega^1$. 
Perhaps making a different recommendation, 
to a more-likely-correct good, will lead to  higher utility?

While this is possible,
the following example shows that this may lead to a mediator that is no longer a BNE. For illustration, fix a mediator that, for each
subset $P_i(\omega)\cap P_j(\omega)$, recommends to player $i$ the most-likely-correct good $g_\omega^1$.

\begin{example}
Suppose $G=\{g_1,g_2,g_3,g_4\}$, and that for some $\omega$ and $\omega'$ we have 
$P_i(\omega)=(P_i(\omega)\cap P_j(\omega'))\cup (P_i(\omega)\cap P_j(\omega))$. Denote by $A=P_i(\omega)\cap P_j(\omega')$
and $B=P_i(\omega)\cap P_j(\omega)$, and suppose that the prior on $\Omega$ is such that $\Pr[A]=\Pr[B]$. 
Finally, suppose that neither 
$A$ nor $B$ are singletons: The correct good conditional on 
$A$ is $g_1$ with probability $3/5$ and $g_2$ with probability $2/5$, and conditional on $B$ it is $g_2$ with probability $2/5$ and $g_3$ with probability $3/5$.
Finally, suppose the equilibrium $E$ is such that player $j$ chooses $g_4$ in both $P_j(\omega)$ and $P_j(\omega')$.
Conditional on obtaining information $P_i(\omega)$, the equilibrium choice for player $i$ is thus $g_2$.

Suppose players opt into using the mediator. If state $\omega$ were realized, then the mediator would learn that the state
lies in $B$, and so would recommend good $g_3$ to player $i$. However, conditional on {\em not} receiving recommendation
$g_3$, player $i$ would infer that the state must lie in $A$, and so her best action would be $g_1$. In other words, when the mediator makes a recommendation
that is equal to player $i$'s {\em equilibrium} action $g_2$ (in particular, not $g_3$), 
player $i$ still prefers to deviate to $g_1$, and so following this mediator is not a BNE. This possible deviation may have
further implications to player $j$, who may no longer wish to play her own equilibrium action $g_4$ conditional on $P_j(\omega')$, and so on.
\end{example}

One way to avoid this problem is for the mediator to recommend the better good $g_3$ with probability less than 1, and recommend the equilibrium
actions with the remaining probability. In the example, for instance, if the
mediator recommends $g_3$ with probability $1/2$, then when the mediator recommends the equilibrium action $g_2$, good $g_2$ is indeed a best response
conditional on this recommendation, and hence following the mediator is a BNE. However, the optimal way to do this across different $\omega$'s is 
not obvious, and we leave a more complete analysis of this case for future work.\\

Next, consider the case in which there is an Amazon. Just like $\Med^v_{noA}$
is IC for the case analyzed in Section~\ref{sec:with-amazon}, the mediator $\hat{\Med}^v_{noA}$ is
IC when there is an Amazon and the other players do not have jointly complete information. Similarly to above,
in addition to being IC the mediator $\hat{\Med}^v_{noA}$ improves upon players' utilities relative to their equilibrium payoffs,
but is not optimal. 

One additional wrinkle in the case with an Amazon is that we can further optimize the sum of utilities. To do so,
let $\alpha^k_\omega = \Pr[g^k_\omega|P_1(\omega)\cap P_2(\omega)]$ for each $k\in\{1,2\}$. Then replace line 5 of Mediator~\ref{alg1}
with the following:
\begin{align*}
\mbox{\textbf{if }}&g_i\not\in\left(P_1(\omega)\cap P_2(\omega)\right)~\forall i\in\{1,2\}\mbox{\textbf{ then}}\\
&\mbox{\textbf{if }}\frac{2\cdot\alpha^1_\omega}{3}\geq \frac{\alpha^1_\omega+\alpha^2_\omega}{2}\mbox{\textbf{ then return }}(g_\omega^1,g_\omega^1)\\
&\mbox{\textbf{else return }}(g_\omega^1,g_\omega^2)
\end{align*}

With this optimization, when neither of $g_1$ or $g_2$ are in the support of $P_1(\omega)\cap P_2(\omega)$, the mediator recommends
either $(g_\omega^1,g_\omega^1)$, which leads to utility $\alpha_\omega^1/3$ for each player, or $(g_\omega^1,g_\omega^2)$, which leads
to utility $\alpha_\omega^1/2$ to the first player and utility $\alpha_\omega^2/2$ to the second player. The recommendation is chosen so as
to maximize the sum of players' utilities.